\spnewtheorem*{convention}{Convention}{\itshape}{}
\def\cA{\mathcal{A}}
\def\cC{\mathcal{C}}
\def\cD{\mathcal{D}}
\def\cG{\mathcal{G}}
\def\cI{\mathcal{I}}
\def\cJ{\mathcal{J}}
\def\cR{\mathcal{R}}
\def\cS{\mathcal{S}}
\def\cU{\mathcal{U}}
\def\cV{\mathcal{V}}
\newcommand{\sort}[2][]{\ifthenelse{\equal{#1}{}}{\mbox{\sl{#2}}}{\mbox{\sl#1{#2}}}}
\newcommand{\SortBool}[1][]{\sort[#1]{bool}}
\def\var#1{\mathit{#1}}
\newcommand{\symb}[1]{\mathsf{#1}}
\newcommand{\Pos}{\mathcal{P}os}
\newcommand{\Var}{\mathcal{V}ar}
\newcommand{\Root}{\mathit{root}}
\newcommand{\Dom}{\mathcal{D}om}
\newcommand{\Ran}{\mathcal{R}an}
\newcommand{\VRan}{\mathcal{VR}an}
\newcommand{\mgu}[1][]{\ifthenelse{\equal{#1}{}}{\mathit{mgu}}{\mathit{mgu}_{#1}}}
\newcommand{\cStheory}[1][\mathit{theory}]{\ifthenelse{\equal{#1}{}}{\cS}{\cS_{#1}}}
\newcommand{\cScore}{\cStheory[\mathit{core}]}
\newcommand{\cSterm}{\cS_{\mathit{term}}}
\newcommand{\Sigtheory}[1][\mathit{theory}]{\ifthenelse{\equal{#1}{}}{\Sigma}{\Sigma_{#1}}}
\newcommand{\Sigcore}{\Sigtheory[\mathit{core}]}
\newcommand{\Sigint}{\Sigtheory[{\sort[\scriptsize]{int}}]}
\newcommand{\Sigterm}{\Sigma_{\mathit{terms}}}
\newcommand{\Val}[1][]{%
\ifthenelse{\equal{#1}{}}{%
{\mathcal{V}al}}{{\mathcal{V}al}_{#1}}}
\newcommand{\Eval}[2][]{\ifthenelse{\equal{#1}{}}{{[\![ #2 ]\!]}}{{[\![ #2 ]\!]}_{#1}}}
\newcommand{\Constraint}[1]{[ #1 ]}
\newcommand{\CTerm}[2]{#1 \, \Constraint{#2}}
\newcommand{\LVar}{\mathcal{LV}ar}
\newcommand{\cRcalc}{\cR_{\mathit{calc}}}
\newcommand{\NF}{\mathit{NF}}
\newcommand{\Height}[1]{\mathit{height}(#1)}
\newcommand{\GInst}[2][]{\ifthenelse{\equal{#1}{}}{\cG{G}(#2)}{\cG_{#1}(#2)}}
\newcommand{\Cocterm}[2][]{\overline{#2}}
\newcommand{\Cosubst}[2][]{\overline{#2}}
\newcommand{\Copattern}[3][]{\overline{(#2,#3)}}
\newcommand{\CopatternSet}[2][]{\ifthenelse{\equal{#1}{}}{\overline{#2}}{\overline{(#2)}_{#1}}}
\newcommand{\SuccsimH}[1][h]{\succsim_{\leq #1}}
\newcommand{\SuccH}[1][h]{\succ_{\leq #1}}
\newcommand{\PrecsimH}[1][h]{\precsim_{\leq #1}}
\newcommand{\SuccsimHN}[1][h]{\succsim_{\leq #1,\mathbb{N}}}
\newcommand{\SuccHN}[1][h]{\succ_{\leq #1,\mathbb{N}}}
\newcommand{\PrecsimHN}[1][h]{\precsim_{\leq #1,\mathbb{N}}}
\newcommand{\UnifStep}[1][]{\mathrel{\ifthenelse{\equal{#1}{}}{\Rrightarrow}{\Rrightarrow_{\mathsf{#1}}}}}
\def\doteq{\mathrel{\dot{=}}}
\def\dotcup{\mathop{\dot{\cup}}}
\def\dotuplus{\mathop{\dot{\uplus}}}
\newcommand{\Cdot}{\,\rule{0pt}{6pt}\cdot\,}
\def\cSzlist{\cS_1}
\def\cCzlist{\cC_1}
\def\Sigmaflist{\Sigma_1}
\def\cDflist{\cD_1}
\def\cRflist{\cR_1}
\begin{document}
\title{%
Difference of Constrained Patterns in Logically Constrained Term Rewrite Systems%
\thanks{This work was partially supported by 
JSPS KAKENHI Grant Number JP24K02900, 
and
Grant-in-Aid for JSPS Fellows Grant Number JP24KJ1240.
}
}
\subtitle{(Full Version)}
\titlerunning{Difference of Constrained Patterns in LCTRSs (Full Version)}
%
\author{%
Naoki Nishida
\Envelope%
\and
Misaki Kojima
\Envelope%
\and
Yuto Nakamura
}
\authorrunning{N. Nishida, K. Misaki, and Y. Nakamura}
%
\institute{%
Graduate School of Informatics,
Nagoya University, \\
Furo-cho, Chikusa-ku, 
4648601 
Nagoya, Japan\\
\email{nishida@i.nagoya-u.ac.jp}
\quad
\email{kojima@i.nagoya-u.ac.jp}
}
\maketitle              
\begin{abstract}
Considering patterns as sets of their instances, a difference operator over patterns computes a finite set of two given patterns, which represents the difference between the dividend pattern and the divisor pattern.
A complement of a pattern is a pattern set, the ground constructor instances of which comprise the complement of the ground constructor instances of the former pattern.
Given finitely many unconstrained linear patterns, using a difference operator over linear patterns, a complement algorithm returns a finite set of linear patterns as a complement of the given patterns.
In this paper, we extend the difference operator and complement algorithm to constrained linear patterns used in 
logically constrained term rewrite systems (LCTRSs, for short) that have no user-defined constructor term with a sort for built-in values.
Then, as for left-linear term rewrite systems, using the complement algorithm, we show that quasi-reducibility is decidable for such LCTRSs with decidable built-in theories.
For the single use of the difference operator over constrained patterns, only divisor patterns are required to be linear.

\keywords{
Logically constrained rewriting
\and
Complement
\and
Unification
\and
Quasi-reducibility.
}
\end{abstract}

\section{Introduction}
\label{sec:intro}

\emph{Complements} of \emph{patterns}---terms rooted by defined symbols with constructor arguments---have been studied in the field of term rewriting due to their various usefulness~\cite{Taj93,LM86,LLT90,Fer98,HA19pro}.
A \emph{complement} of a pattern $f(s_1,\ldots, s_n)$
is a set $Q$ of patterns, the ground constructor instances of which comprise the complement set of the ground constructor instances of $f(s_1,\ldots,s_n)$:
$ 
\GInst[\cC]{Q}
=
\GInst[\cC]{f(x_1,\ldots,x_n)}
\setminus
\GInst[\cC]{f(s_1,\ldots,s_n)}
$, 
where $x_1,\ldots,x_n$ are pairwise distinct variables, $\cC$ is a set of constructors, $T(\cC)$ is the set of ground constructor terms, 
$\GInst[\cC]{s}$ is the set of ground constructor instances of a term $s$,
and 
$\GInst[\cC]{Q} = \bigcup_{s \in Q} \GInst[\cC]{s}$.

Given a finite set $P$ of linear patterns, the \emph{complement algorithm} in~\cite{LLT90,HA19pro} returns a finite set $\CopatternSet{P}$ of linear patterns as a complement of patterns in $P$:
$
\GInst[\cC]{\CopatternSet{P}}
=
\GInst[\cC]{\{f(x_1,\ldots,x_n) \mid f \in \cD\}}
\setminus
\GInst[\cC]{P}
$,
where $\cD$ is a set of considered defined symbols and $x_1,\ldots,x_n$ are pairwise distinct variables.
Patterns in $P$ are assumed to be pairwise non-overlapping.
The algorithm is based on a \emph{difference} operator $\ominus$ over linear patterns.
The difference operator $\ominus$ takes two linear patterns $s,t$ as input and returns a finite set of linear patterns, which represents the difference of the \emph{dividend} $s$ and the \emph{divisor} $t$ w.r.t.\ ground constructor instances: 
$ 
\GInst[\cC]{s \ominus t} 
= \GInst[\cC]{s} \setminus \GInst[\cC]{t}
$. 

The operator $\ominus$ is extended for finite sets of linear patterns:
Given finite sets $P,Q$ of linear patterns, $P \ominus Q$ is a finite set of linear patterns such that 
$
\GInst[\cC]{P \ominus Q}
= 
\GInst[\cC]{P}
\setminus 
\GInst[\cC]{Q}
$.
As an application, the complement algorithm implies decidability of \emph{quasi-reducibility}%
\footnote{Quasi-reducibility is called \emph{quasi-reductivity} in some papers (e.g.,~\cite{FKN17tocl,Kop17,ATK17})
and \emph{pattern completeness} in the context of functional programming~\cite{TY24}.}%
---non-existence of irreducible ground patterns---of left-linear TRSs:
A left-linear TRS $\cR$ is quasi-reducible if and only if $\{ f(x_1,\ldots,x_n) \mid 
f \in \cD
\} \ominus \{ \ell \mid \ell \to r \in \cR \} = \emptyset$, 
where $x_1,\ldots,x_n$ are pairwise different variables.
Quasi-reducibility is often assumed for target rewrite systems in equivalence checking based on \emph{rewriting induction}~\cite{Red90,FKN17tocl}.

Many compilers and interpreters of practical programming languages check the exhaustiveness of function definitions and case-statements.
As for left-linear TRSs, algorithms for the exhaustiveness checking have been developed in several formulations (see, e.g,~\cite{Mar07}).
For patterns without guard conditions, the exhaustiveness (i.e., quasi-reducibility) is decidable~\cite{KNZ87,TY24}.
On the other hand, to the best of our knowledge, there is no full exhaustiveness checker for languages that allows us to attach guard conditions, which may include user-defined predicates, to patterns.
To decide it, we need decidable SMT solving for the combination of all built-in theories that the languages support, which are usually undecidable.

Recently, program verification by means of \emph{logically constrained term rewrite systems} (LCTRSs, for short)~\cite{KN13frocos} are well investigated~\cite{FKN17tocl,WM18,CL18,NW18vstte,KN18eptcs,KNS19ss,KNM20wpte,KN23padl,KN23jlamp,KNM25jlamp,NKM25jlamp}.
LCTRSs are extensions of \emph{term rewrite systems} (TRSs, for short) by allowing rewrite rules to have guard constraints which are evaluated under equipped built-in theories.
LCTRSs combine classic \emph{term rewriting} (see, e.g.,~\cite{BN98,Ohl02}) with
built-in data types and constraints from user-specified first-order theories,
specifically those supported by modern 
SMT 
solvers (cf.~\cite{BFT16,BSST21}). %
This allows for a high expressivity that is useful for representing many programming
language constructs directly, together with robust tool
support, e.g., the tool \textsf{Ctrl}~\cite{KN15lpar}, for automated reasoning.
For instance, equivalence checking by means of LCTRSs is useful to ensure the correctness of terminating functions (cf.~\cite{FKN17tocl}). 
%
Due to these features, LCTRSs are known to be useful computational models of not only functional but also imperative programs.


LCTRS tools~\cite{KN15lpar,KNM25jlamp,SMM24} rely on SMT solvers and often consider decidable built-in theories for given LCTRSs. 
On the other hand, to the best of our knowledge, neither a difference operator nor a complement algorithm has been investigated for constrained patterns, while an SMT-based sufficient condition for left-patternless constrained rewrite systems~\cite{SNSSK09,KNM25jlamp}%
\footnote{The sufficient condition for \emph{reduction-completeness} is applicable to quasi-reducibility.}
and procedural sufficient conditions for quasi-reducibility of LCTRSs~\cite{Kop17}
and conditional constrained TRSs~\cite{BJ12} have been shown.

\begin{example}
\label{ex:flist}
Let us consider the sort set $\cSzlist = \{ \sort{int}, \sort{bool}, \sort{list} \}$, the  $\cSzlist$-sorted signature
$
\Sigmaflist = 
\{ \symb{n}: \sort{int} \mid n \in \mathbb{Z} \}
\cup
\{
\symb{true},\symb{false}: \sort{bool}, \,
\symb{nil}: \sort{list}, \,
\symb{cons}: \sort{int} \times \sort{list} \to \sort{list}, \,
\symb{f}: \sort{list} \times \sort{int} \to \sort{int}
\}
$, and the following artificial LCTRS:
\[
\cRflist {=}
\left\{
\begin{array}{@{}c@{~}r@{\,}c@{\,}l@{\,}l@{\!}}
(1) 
& \symb{f}(\symb{nil}, y_1) & \to & \symb{0} & \Constraint{\,y_1 \leq \symb{0}\,}, \\
(2)
& \symb{f}(\symb{cons}(x_2,\var{xs}_2), y_2) & \to & \symb{f}(\var{xs}_2, y_2 - \symb{1}) & \Constraint{\, x_2 \leq \symb{0} \land y_2 > \symb{0} \,}, \\
(3)
& \symb{f}(\symb{cons}(x_3,\symb{cons}(z_3,\var{zs}_3)), y_3) & \to & x_3 + \symb{f}(\var{zs}_3, y_3 - \symb{2}) & \Constraint{\, x_3 > \symb{0} \land y_3 > \symb{1} \,} \\
\end{array}
\right\}
\]
Complements of unconstrained patterns $\symb{f}(\symb{nil},y_1)$ and $\symb{f}(\symb{cons}(x_1,\symb{nil}),y_1)$ are, e.g.,
$\{ \symb{f}(\symb{cons}(x_1,\var{xs}_1),y_1) \}$
and
$\{ \symb{f}(\symb{nil},y_1), \symb{f}(\symb{cons}(x_1,\symb{cons}(z_1,\var{zs}_1)),y_1) \}$, respectively.
The LCTRS $\cRflist$ is not quasi-reducible because none of the following constrained patterns is defined: 
\[
\begin{array}{@{}c@{~}r@{~}l@{}}
(\mathrm{a}) & \symb{f}(\symb{nil},y_{\mathrm{a}}) & \Constraint{\, \neg (y_{\mathrm{a}} \leq \symb{0}) \,} \\
(\mathrm{b}) & \symb{f}(\symb{cons}(x_{\mathrm{b}},\symb{nil}),y_{\mathrm{b}}) & \Constraint{\, \neg(x_{\mathrm{b}} \leq \symb{0} \land y_{\mathrm{b}} > \symb{0}) \,} \\
(\mathrm{c}) & \symb{f}(\symb{cons}(x_{\mathrm{c}},\symb{cons}(z_{\mathrm{c}},\var{zs}_{\mathrm{c}})),y_{\mathrm{c}}) 
& \Constraint{\, 
\neg(x_{\mathrm{c}} \leq \symb{0} \land y_{\mathrm{c}} > \symb{0})
\land
\neg(x_{\mathrm{c}} > \symb{0} \land y_{\mathrm{c}} > \symb{1})
\,} \\
\end{array}
\]
It would be easy to find the first undefined constrained pattern~(a) above.
On the other hand, 
the second and third ones~(b),\,(c) above are not so trivial because we have to consider a unified form of the left-hand sides of the second and third rules~(2),\,(3) to compute~(b),\,(c).
For this reason, it is not so easy to know that $\cRflist$ is not quasi-reducible.
Note that the following LCTRS obtained from $\cRflist$ by adding rules for the above undefined constrained patterns is quasi-reducible:
\[
\cRflist' = 
\cRflist 
\cup
\left\{
\begin{array}{@{}c@{~}r@{\>}c@{\>}l@{~}l@{}}
(\mathrm{a})
& \symb{f}(\symb{nil},y_{\mathrm{a}}) & \to & \symb{0} & \Constraint{\, \neg(y_{\mathrm{a}} \leq \symb{0}) \,}, \\
(\mathrm{b})
& \symb{f}(\symb{cons}(x_{\mathrm{b}},\symb{nil}),y_{\mathrm{b}}) & \to & \symb{0} & \Constraint{\, \neg(x_{\mathrm{b}} \leq \symb{0} \land y_{\mathrm{b}} > \symb{0}) \,}, \\
(\mathrm{c})
& \symb{f}(\symb{cons}(x_{\mathrm{c}},\symb{cons}(z_{\mathrm{c}},\var{zs}_{\mathrm{c}})),y_{\mathrm{c}}) & \to & \symb{0}
& \Constraint{\, 
\neg(x_{\mathrm{c}} \leq \symb{0} \land y_{\mathrm{c}} > \symb{0})
\\\
&&&& ~{} 
\land
\neg(x_{\mathrm{c}} > \symb{0} \land y_{\mathrm{c}} > \symb{1})
\,} \\
\end{array}
\right\}
\]
\end{example}

In this paper, we propose a difference operator and a complement algorithm for constrained patterns used in left-linear LCTRSs that have finitely many user-defined function symbols and have no user-defined constructor term with a theory sort for built-in values.%
\footnote{The latter is equivalent to the condition that all theory sorts are \emph{inextensible}~\cite{FGK25}.}
To this end, we extend the difference operator $\ominus$
and the complement algorithm $\CopatternSet{\Cdot}$ mentioned above
to constrained patterns. 
The extended difference operator and complement algorithm are still computable for constrained patterns over signatures with decidable built-in theories. 
Then, as an application of the complement algorithm, we show that quasi-reducibility is decidable for the aforementioned LCTRSs with decidable built-in theories.
Complete proofs for correctness of the results can be seen in the appendix. 

\begin{example}[our goal]
\label{ex:goal}
Let us consider 
the LCTRSs $\cRflist$ and $\cRflist'$ in Example~\ref{ex:flist} again.
Given the constrained patterns $\CTerm{\symb{f}(x,\var{ys})}{\symb{true}}$ and~(1), the extended difference operator $\ominus$ returns
$\{\, 
\CTerm{\symb{f}(\symb{nil},y_{\mathrm{a}})}{\, \neg(y_{\mathrm{a}} \leq \symb{0}) \,}
, \, \CTerm{\symb{f}(\symb{cons}(x,\var{xs}),y)}{\,\symb{true}\,} \,\}$ as a complement of~(1). 
Given $\{(1),(2),(3)\}$, the extended complement algorithm $\CopatternSet{\Cdot}$ returns $\{(\mathrm{a}),(\mathrm{b}),(\mathrm{c})\}$ as a complement of $\{(1),(2),(3)\}$. 
This implies that $\cRflist$ is not quasi-reducible.
On the other hand, given $\{(1),(2),(3),(\mathrm{a}),(\mathrm{b}),(\mathrm{c})\}$, the complement algorithm $\CopatternSet{\Cdot}$ returns the empty set, and thus $\cRflist'$ is quasi-reducible.
\end{example}

In addition to the extension to constrained patterns, we relax the linearity assumption of dividends of $\ominus$ over constrained patterns, i.e., we no longer assume the linearity of dividends, while the linearity is assumed for the extended complement algorithm:
For $\CTerm{s}{\phi} \ominus \CTerm{t}{\psi}$, only the dvisor $t$ is assumed to be linear;
for $P \ominus Q$, both $P,Q$ are assumed to be sets of constrained linear patterns.
%
The extended operator and complement algorithm behave in the same way as those of unconstrained patterns.
Thus, a side effect of the extension to constrained patterns, 
for the unconstrained case with $s \ominus t$, only $t$ is assumed to be linear.


LCTRSs in this paper are assumed to have no user-defined constructor term with a theory sort.
This is not a restriction for program verification by means of LCTRSs because LCTRSs obtained from practical programs by the existing transformations~\cite{FKN17tocl,KN18eptcs,KNS19ss,KNM25jlamp} are left-linear 
systems satisfying the restriction.

The main contributions of this paper are 
(i) 
the extension of the difference operator and the complement algorithm 
to constrained patterns,
(ii)
the relaxation of the linearity assumption to dividends of $\ominus$ over (constrained) patterns,
and
(iii)
decidability of quasi-reducibility of LCTRSs in the aforementioned class.


\section{Preliminaries}
\label{sec:preliminaries}

In this section, we briefly recall LCTRSs~\cite{KN13frocos,FKN17tocl}.
Familiarity with basic notions and notations on term rewriting~\cite{BN98,Ohl02} is assumed. 

This paper deals with a first-order $\cS$-sorted signature $\Sigma$, where $\cS$ is a set of sorts.
Let $\Sigma' \subseteq \Sigma$.
A term in $T(\Sigma',\cV)$ is called a \emph{$\Sigma'$-term}, where $\cV$ is a (countably infinite) set of $\cS$-sorted variables.
A term $t$ is called an \emph{instance} of a term $t'$ (or $t'$ is called \emph{more general than $t$}), written as $t \gtrsim t'$, if there exists a substitution $\sigma$ such that $t = t'\sigma$.
We write $t > t'$ if $t \gtrsim t'$ but $t' \not\gtrsim t$.
The \emph{height} of a term $t$, written as $\Height{t}$, is recursively defined as follows:
$\Height{t} = 0$ if $t \in \cV$;
$\Height{t} = 1 + \max \{ \Height{t_1},\ldots, \Height{t_n} \}$ if $t = f(t_1,\ldots,t_n)$ for some $n$-ary function symbol $f$ and terms $t_1,\ldots,t_n$ such that $n \geq 0$.
Note that the height of $t$ in this paper does not correspond to the height of the corresponding tree of $t$.
A position $p$ of a term $t$ is called \emph{above} (\emph{strictly above}) a position $q$ of $t$, written as $p \leq q$ ($p < q$), if there exists a position $p'$ of $t|_p$ such that $pp' = q$ ($pp' = q$ and $p'\ne\epsilon$).
Positions $p,q$ of a term $t$ are called \emph{parallel}, written as $p \mathrel{||} q$, if $p \not\leq q$ and $q \not \leq p$.
A substitution $\sigma$ is called \emph{more general than} a substitution $\theta$, written as $\sigma \lesssim \theta$, if there exists a substitution $\delta$ such that $\theta = (\delta \circ \sigma)$.
For a set $X$ of variables, we call $\sigma$ \emph{ground w.r.t.\ $X$} ($X$-ground, for short) if $\Dom(\sigma) \supseteq X$ and $\VRan(\sigma|_X) = \emptyset$, i.e., $x\sigma$ is ground for any variable $x \in X$. 
A substitution $\sigma$ which is a sort-preserving mapping from $\cV$ to $T(\Sigma,\cV)$ is said to be a \emph{$\Sigma'$-substitution} if $\Ran(\sigma) \subseteq T(\Sigma',\cV)$.
A term $s$ is called an \emph{instance of $t$ w.r.t.\ $\Sigma'$} ($\Sigma'$-instance, for short) if there exists a $\Sigma'$-substitution $\sigma$ such that $s = t\sigma$.
For a substitution $\sigma$, we write $\sigma = \mgu(s,t)$ if $\sigma$ is an mgu of $s,t$.


To define an LCTRS~\cite{KN13frocos,FKN17tocl}  over an $\cS$-sorted signature $\Sigma$, we consider the following sorts, signatures, mappings, and constants:
\emph{Theory sorts} in $\cStheory$ and \emph{term sorts} in $\cSterm$ such that $\cS=\cStheory \uplus \cSterm$;
a \emph{theory signature} $\Sigtheory$ and a \emph{term signature} $\Sigterm$ such that $\Sigma = \Sigtheory \cup \Sigterm$ and 
$\iota_1,\ldots,\iota_n,\iota\in \cStheory$ for any symbol $f: \iota_1 \times \cdots \times \iota_n \to \iota \in \Sigtheory$;
a mapping $\cI$ that assigns to each theory sort $\iota$ a (non-empty) set $\cA_\iota$, called the universe of $\iota$ (i.e., $\cI(\iota) = \cA_\iota$);
a mapping $\cJ$, called an interpretation for $\Sigtheory$, that assigns to each function symbol $f : \iota_1 \times \cdots \times \iota_n \to \iota \in \Sigtheory$ a function $f^\cJ$ in 
$\cI(\iota_1) \times \cdots \times \cI(\iota_n) \to \cI(\iota)$ (i.e., $\cJ(f)=f^\cJ$);
a set $\Val[\iota] \subseteq \Sigtheory$ of \emph{value-constants} $a: \iota$ for each theory sort $\iota$ 
such that $\cJ$ gives a bijection from $\Val[\iota]$ to $\cI(\iota)$.
We denote $\bigcup_{\iota \in \cStheory} \Val[\iota]$ by $\Val$.
Note that $\Val \subseteq \Sigtheory$.
For readability, we may not distinguish $\Val[\iota]$ and $\cI(\iota)$, i.e., for each $v \in \Val[\iota]$, $v$ and $\cJ(v)$ may be identified.
We require that $\Sigterm \cap \Sigtheory \subseteq \Val$.
Symbols in ${\Sigtheory} \setminus {\Val}$ are \emph{calculation symbols}, for which we may use infix notation.
A term in $T(\Sigtheory,\cV)$ is called a \emph{theory term}. 
We recursively define the \emph{interpretation} $\Eval[\cJ]{\,\cdot\,}$ of ground theory terms as $\Eval[\cJ]{ f(s_1,\ldots,s_n) } = \cJ(f)(\Eval[\cJ]{ s_1 },\ldots,\Eval[\cJ]{ s_n })$.

We typically choose a theory signature $\cStheory$ such that 
$\cStheory \supseteq \cScore=\{\SortBool\}$,
$\Val[{\SortBool[\scriptsize]}] =\{\symb{true}, \symb{false}: \sort{bool}\}$,
$\Sigtheory \supseteq \Sigcore=
\Val[{\SortBool[\scriptsize]}] \cup \{{\wedge}, {\vee}, {\Rightarrow}, {\Leftrightarrow} : \SortBool \times \SortBool \to \SortBool, ~\neg: \SortBool \to \SortBool \} \cup \{ {=_\iota}, {\ne_\iota} : \iota \times \iota \to \SortBool \mid 
\iota \in \cStheory
\}$, 
$\cI(\SortBool) 
= \{\top,\bot\}$,
		and
$\cJ$ interprets these symbols as expected:
	$\cJ(\symb{true}) = \top$ and $\cJ(\symb{false}) = \bot$.
We omit the sort subscripts $\iota$ from $=_\iota$ and $\neq_\iota$ if they are clear from the context.
A theory term with sort $bool$ is called a \emph{constraint}.
A substitution $\gamma$ is said to \emph{respect} a constraint $\phi$ if $x\gamma \in \Val$ for all $x \in \Var(\phi)$ and $\Eval[\cJ]{\phi\gamma} = \top$, where $\Var(\phi)$ denotes the set of variables appearing in $\phi$.
A constraint $\phi$ is said to be \emph{satisfiable} if $\Eval{\phi\gamma} = \top$ for some substitution $\gamma$ respecting $\phi$.

A \emph{constrained rewrite rule} is a triple $ \ell \to r ~[\varphi]$ such that $\ell$ and $r$ are terms of the same sort, $\varphi$ is a constraint, and 
$\ell$ is neither a theory term nor a variable.
If $\varphi = \symb{true}$, then we may write $\ell \to r$.
We define $\LVar(\ell \to r ~ [\varphi])$ as $\Var(\varphi) \cup (\Var(r) \setminus \Var(\ell))$,
the set of \emph{logical variables in $\ell \to r ~ [\varphi]$} which are variables instantiated with value-constants in rewriting terms.
We say that a substitution $\gamma$ \emph{respects} $\ell \to r ~ [\varphi]$ if 
$\gamma(x) \in \Val$ for all $x \in \LVar(\ell \to r ~ [\varphi])$
and $\Eval[\cJ]{\varphi\gamma} = \top$.
Regarding the signature of $\cR$, 
we denote the set $\{ f(x_1,\ldots,x_n) \to y ~ [y = f(x_1,\ldots,x_n)] \mid f 
\in {{\Sigtheory} \setminus {\Val}}, ~
\mbox{$x_1,\ldots,x_n,y$ are pairwise distinct variables}
\}$ by $\cRcalc$.
The elements of $\cRcalc$ are called \emph{calculation rules}
 and we often deal with them as constrained rewrite rules even though their left-hand sides are theory terms.
The \emph{rewrite relation} $\to_{\cR}$ is a binary relation over terms, defined as follows:
For a term $s$, $s[\ell\gamma]_p \mathrel{\to_\cR} s[r\gamma]_p$ if and only if 
$\ell \to r ~ [\varphi] \in \cR \cup \cRcalc$ 
and $\gamma$ respects $\ell \to r ~ [\varphi]$.
A reduction step with $\cRcalc$ is called a \emph{calculation}.
A \emph{logically constrained term rewrite system} (LCTRS, for short) is defined as an abstract reduction system $(T(\Sigma,\cV),{\to_\cR})$, simply denoted by $\cR$. 
An LCTRS is usually given by supplying $\Sigma$, $\cR$, and an informal description of $\cI$ and $\cJ$ if these are not clear from the context.
 The set of \emph{normal forms} of $\cR$ is denoted by $\NF_\cR$.
An LCTRS $\cR$ is said to be \emph{left-linear} if for every rule in $\cR$, the left-hand side is linear.
Note that $\cRcalc$ is left-linear.

The \emph{integer signature} $\Sigint$ is $\Sigcore \cup \{ +, -,\times,\symb{exp},\symb{div},\symb{mod} : \sort{int} \times \sort{int} \to \sort{int} \} \cup \{ {\geq}, {>} : \sort{int} \times \sort{int} \to \SortBool \} \cup {\Val[{\sort[\scriptsize]{int}}]}$ where 
$\cStheory \supseteq \{\sort{int},\SortBool\}$,
$\Val[{\sort[\scriptsize]{int}}] = \{ \symb{n} \mid n \in \mathbb{Z} \}$, $\cI(\sort{int}) = \mathbb{Z}$, and $\cJ(\symb{n}) = n$ for any $n\in\mathbb{Z}$---we use $\symb{n}$ (in \textsf{sans-serif} font) as the value-constant for $n \in \mathbb{Z}$ (in \textit{math} font).
We define $\cJ$ in a natural way.
An LCTRS over a signature $\Sigma$ with $\Sigtheory = \Sigint$ is called an \emph{integer LCTRS}.

\begin{example}
\label{ex:flist-lctrs}
The term $\symb{f}(\symb{cons}(\symb{1},\symb{cons}(\symb{2},\symb{cons}(\symb{0},\symb{cons}(\symb{3},\symb{cons}(\symb{4},\symb{nil}))))),\symb{5})$ is reduced by the LCTRS $\cRflist$ in Example~\ref{ex:flist} to $\symb{4}$ as follows:
$\symb{f}(\symb{cons}(\symb{1},\symb{cons}(\symb{2},\symb{cons}(\symb{0},\linebreak \symb{cons}(\symb{3},\symb{cons}(\symb{4},\symb{nil}))))),\symb{5})
\to_{\cRflist}
\symb{1} + \symb{f}(\symb{cons}(\symb{0},\symb{cons}(\symb{3},\symb{cons}(\symb{4},\symb{nil}))), \symb{5}-\symb{2})
\to_{\cRflist}
\symb{1} + \symb{f}(\symb{cons}(\symb{0},\symb{cons}(\symb{3},\symb{cons}(\symb{4},\symb{nil}))),\symb{3}) 
\to_{\cRflist}
\symb{1} + \symb{f}(\symb{cons}(\symb{3},\symb{cons}(\symb{4},\symb{nil})),\symb{3}-\symb{1}) 
\to_{\cRflist}^*
\symb{4}$.
\end{example}

A function symbol $f$ is called a \emph{defined symbol} of $\cR$ if there exists a rule $f(\ell_1,\ldots,\ell_n) \to r ~ [\varphi] \in \cR \cup \cRcalc$;
non-defined elements of $\Sigma$ are called \emph{constructors} of $\cR$.
Note that all value-constants are constructors of $\cR$.
We denote the sets of defined symbols and constructors of $\cR$ by $\cD_\cR$ and $\cC_\cR$, respectively:
$\cD_\cR = \{ f \mid f(\ldots) \to r ~ [\varphi] \in \cR \cup \cRcalc \}$
and
$\cC_\cR = \Sigma \setminus \cD_\cR$.
A $\cC_\cR$-term 
is called a \emph{constructor term} of $\cR$.
A term of the form $f(t_1,\ldots,t_n)$ with $f\in\cD_\cR$ and $t_1,\ldots,t_n \in T(\cC_\cR,\cV)$ is called a \emph{pattern} (or \emph{basic}) (of $\cR$). 
We call $\cR$ a \emph{constructor system} if the left-hand side of each rule $\ell \to r ~ [\varphi] \in \cR$ is 
a pattern. 
An LCTRS $\cR$ is called \emph{quasi-reducible} if every ground pattern is a redex of $\cR$.

\begin{example}
The LCTRSs $\cRflist$ and $\cRflist'$ in Example~\ref{ex:flist} are left-linear constructor integer LCTRSs,
$\cRflist$ is not quasi-reducible, and $\cRflist'$ is quasi-reducible.
\end{example}

In the rest of the paper, for a signature $\Sigma$, we are interested in defined symbols and constructors without an LCTRS $\cR$ which splits $\Sigma$ to $\cD_\cR$ and $\cC_\cR$.
For this reason, without specifying $\cR$, we often split a signature $\Sigma$ to the sets $\cD,\cC$ of defined symbols and constructors, respectively: $\Sigma = \cD \uplus \cC$.


Finally, we define constrained patterns and their instances.
\begin{definition}[constrained term~\cite{KN13frocos,Kop17,FKN17tocl}]
A \emph{constrained term} over $\Sigma$ is a pair $\CTerm{t}{\psi}$ of a term $t:\iota$ and a constraint $\psi$.
The sort of $\CTerm{t}{\psi}$ is $\iota$.
A constrained term $\CTerm{t}{\psi}$ is called \emph{linear} if $t$ is linear;
$\CTerm{t}{\psi}$ is called 
\emph{linear w.r.t.\ logical variables} (LV-linear, for short) if $t$ is linear w.r.t.\ $\Var(\psi)$;
$\CTerm{t}{\psi}$ is called 
\emph{value-free} if $t \in T(\Sigma\setminus \Val,\cV)$;
$\CTerm{t}{\psi}$ is called 
a \emph{constrained pattern} if $t$ is a pattern.
\end{definition}
\begin{definition}[
ground instances {$\GInst[\cC]{\CTerm{t}{\psi}}$}]
Let 
$\CTerm{t}{\psi}$ be a constrained term.
We denote the set of all ground $\cC$-instances of $\CTerm{t}{\psi}$ by $\GInst[\cC]{\CTerm{t}{\psi}}$:
$
\GInst[\cC]{\CTerm{t}{\psi}} = \{ t\sigma \mid \mbox{$\sigma$ is a $\Var(t)$-ground $\cC$-substitution respecting $\psi$} \}
$.
We extend 
$\GInst[\cC]{\cdot}$ for sets of constrained terms:
$
\GInst[\cC]{P} = \bigcup_{\CTerm{t}{\psi} \in P} \GInst[\cC]{\CTerm{t}{\psi}}
$.
The set of ground $\cC$-instances of an unconstrained term $s$ is denoted by $\GInst[\cC]{s}$:
$\GInst[\cC]{s} = \GInst[\cC]{\CTerm{s}{\symb{true}}}$.
\end{definition}
\begin{convention}[Convention on Term Equivalence]
From the perspective that a (constrained) pattern represents the set of its ground instances, we consider a renamed variant $\CTerm{t'}{\psi'}$ of a constrained term $\CTerm{t}{\psi}$ to be identical with $\CTerm{t}{\psi}$, while we require $\psi'$ to be semantically 
equivalent to $\psi$.
To distinguish such equivalence from the usual one, we write $\CTerm{t}{\psi} \doteq \CTerm{t'}{\psi'}$:
$\CTerm{t}{\psi} \doteq \CTerm{t'}{\psi'}$ if and only if there exists a renaming $\delta$ such that $t = t'\delta$, $\Var(t)\cap\Var(\psi) = \Var(t'\delta) \cap \Var(\psi'\delta)$, and $((\exists \vec{x}.\ \psi) \Leftrightarrow (\exists \vec{y}.\ \psi'\delta))$ is valid, 
where
$\{\vec{x}\} = \Var(\psi) \setminus \Var(t)$
and
$\{\vec{y}\} = \Var(\psi'\delta) \setminus \Var(t'\delta)$.
The equivalence is also used for unconstrained terms:
$t \doteq t'$ if and only if $\CTerm{t}{\symb{true}} \doteq \CTerm{t'}{\symb{true}}$.
Given a set $P$ of constrained terms, 
$\dot{P}$ denote a minimal set that contains exactly one representative for each equivalence class of constrained terms in $P$.
In computing the union of sets of constrained terms, we use the equivalence $\doteq$.
To be more precise, we define the union of (constrained) term sets w.r.t.\ $\doteq$ as follows:
$P \dotcup Q := \dot{U}$, where $U = P \cup Q$.
The disjoint union $\uplus$ w.r.t.\ $\doteq$ is defined as follows:
Given sets $P,Q,U$ of constrained terms, 
$P \dotuplus Q = P \dotcup Q$ if and only if both of the following statements hold:
(i)
for any constrained term $\CTerm{s}{\phi} \in P$, there exists no constrained term $\CTerm{t}{\psi}$ in $Q$ such that $\CTerm{s}{\phi} \doteq \CTerm{t}{\psi}$,
        and
(ii)
for any constrained term $\CTerm{t}{\psi} \in Q$, there exists no constrained term $\CTerm{s}{\phi}$ in $P$ such that $\CTerm{s}{\phi} \doteq \CTerm{t}{\psi}$.
We analogously define $\dotcup,\dotuplus$ for unconstrained terms.
We abuse $\cup$ and $\uplus$ for $\dotcup$ and $\dotuplus$, respectively.
\end{convention}

\section{Complements of Constructor Terms and Substitutions}
\label{sec:construction-of-complements}

In this section, we define complements of constructor terms and constructor substitutions and then recall the operations in~\cite{LM86,LLT90,HA19pro} to compute finite complements of linear constructor terms and constructor substitutions.
In the rest of the paper, we consider a signature $\Sigma = \cD \uplus \cC$ 
without notice.
For examples in this section, we use a subsignature $\Sigmaflist' = \cDflist \cup \cCzlist'$ of $\Sigmaflist$, where $\cDflist=\{\symb{f}\}$ and $\cCzlist' = \{\symb{nil},\symb{cons},\symb{0},\symb{1}\}$.

We first define complements of $\cC$-terms, $\cC$-substitutions, and patterns.
\begin{definition}[complement]
A set $P$ of $\cC$-terms is called a \emph{complement} of a $\cC$-term $u:\iota$ if
$\GInst[\cC]{P} = \{ t \in T(\cC) \mid t:\iota \} \setminus \GInst[\cC]{u}$.
A set $\Theta$ of $\cC$-substitutions is called a \emph{complement} of a $\cC$-substitution $\sigma$ w.r.t.\ a term $t \in T(\Sigma,\cV)$ if 
$\GInst[\cC]{\{ t\rho \mid \rho \in \Theta, \, t\sigma \ne t\rho \}} = \GInst[\cC]{t} \setminus \GInst[\cC]{t\sigma}$.
A set $P$ of patterns is called a \emph{complement} of a pattern $f(s_1,\ldots,s_n)$ if
$\GInst[\cC]{P} = 
\GInst[\cC]{f(x_1,\ldots,x_n)}
\setminus \GInst[\cC]{f(s_1,\ldots,s_n)}$,
where $x_1,\ldots,x_n$ are pairwise distinct variables.
A set $P$ of patterns is called a \emph{complement} of a set $Q$ of patterns if 
$\GInst[\cC]{P} = 
\GInst[\cC]{\{ f(x_1,\ldots,x_n) \mid f \in \cD \}}
\setminus 
\GInst[\cC]{Q}$,
where $x_1,\ldots,x_n$ are pairwise distinct variables.
\end{definition}
Note that complements 
are not unique in general.
Note also that complements of variables and the identity substitution are the empty set.

\begin{example}
The set $\{\symb{nil}, \, \symb{cons}(x,\symb{cons}(x',\var{xs}))\}$ is a complement of $\symb{cons}(y,\symb{nil})$.
The set $\{ \{ \var{xs} \mapsto \symb{nil}, \, y \mapsto \symb{1} \}, \, \{ \var{xs} \mapsto \symb{cons}(x',\var{xs}'), \, y \mapsto \symb{0} \}, \, \{ \var{xs} \mapsto \symb{cons}(x',\var{xs}'), $ $ y \mapsto \symb{1} \} \}$ is a complement of $\{ \var{xs} \mapsto \symb{nil}, \, y \mapsto \symb{0} \}$ w.r.t.\ $\symb{f}(\var{xs},y)$.
The set 
$\{ \symb{f}(\symb{nil},\symb{1}), $ $\symb{f}(\symb{cons}(x',\var{xs}'),y) \}$
is a complement of $\symb{f}(\symb{nil},\symb{0})$.
\end{example}

Next, we recall the constructions of complements of linear $\cC$-terms and $\cC$-substitutions.
We follow the formulation in~\cite{HA19pro}.

For a term $u$, if $\GInst[\cC]{u}$ is infinite, then the linearity of $u$ w.r.t.\ variables with sorts for inductively defined terms such as $\sort{list}$ in Example~\ref{ex:flist} is necessary for finiteness of a complement of $u$ (cf.~\cite[Proposition~4.5]{LM86}).
For this reason, in computing complements of $\cC$-terms, we only consider linear terms.

\begin{definition}[{$\Cocterm[\cC]{u}$} of linear $\cC$-term $u$~\cite{HA19pro}]
\label{def:Cocterm}
For a linear $\cC$-term $u$, we define a set $\Cocterm[\cC]{u}$ of terms recursively as follows:
$\Cocterm[\cC]{x} = \emptyset$ for a variable $x \in \cV$,
        and
$\Cocterm[\cC]{c(u_1,\ldots,u_n)} =
    \{ c'(x_1,\ldots,x_m) \mid c' \ne c, ~ c' : \iota'_1 \times \cdots \times \iota'_m \to \iota \in \cC \}
    \cup \bigcup_{i=1}^n \{ c(u_1,\ldots,u_{i-1},u'_i,y_{i+1},\ldots,y_n) \mid u'_i \in \Cocterm[\cC]{u_i} \}$, 
    where $c : \iota_1 \times \cdots \times \iota_n \to \iota \in \cC$,
    $x_1,\ldots,x_m$ are pairwise distinct variables, and $y_{i+1},\ldots,y_n$ are pairwise distinct fresh variables for any $i \in \{1,\ldots,n\}$.
\end{definition}
For a linear term $u:\iota$, $\Cocterm[\cC]{u}$ is a finite complement of $u$
(i.e., 
$\GInst[\cC]{\Cocterm[\cC]{u}} = \{ t \in T(\cC) \mid t:\iota\} \setminus \GInst[\cC]{u}$),
which is a set of linear $\cC$-terms.

\begin{example}
For $\cCzlist' = \{ \symb{nil}, \symb{cons}, \symb{0}, \symb{1}\}$,
we have that
$\Cocterm[\cCzlist']{\symb{0}} = \{\, \symb{1} \,\}$,
$\Cocterm[\cCzlist']{\symb{1}} = \{\, \symb{0} \,\}$,
$\Cocterm[\cCzlist']{\symb{nil}} = \{\, \symb{cons}(x,\var{xs}) \,\}$,
        and
$\Cocterm[\cCzlist']{\symb{cons}(\symb{0},\symb{cons}(z_3,\var{zs}_3))} = 
    \{\,
    \symb{nil}, \,
    \symb{cons}(\symb{1},ys), \,
    \symb{cons}(\symb{0},\symb{nil})
    \,\}$.
\end{example}


The construction below for $\cC$-substitutions is based on the construction $\Cocterm{\Cdot}$ for linear $\cC$-terms, and thus it works on \emph{linear} substitutions.
A substitution $\sigma$ is called \emph{linear w.r.t.\ $X \subseteq \cV$} ($X$-linear, for short) if 
$x\sigma$ is linear for any variable $x \in X$.
A $\cV$-linear substitution is simply called \emph{linear}.
\begin{definition}[{$\Cosubst[\cC]{\sigma}$} of linear $\cC$-substitution $\sigma$~\cite{HA19pro}]
\label{def:Cosubst}
    Let 
$\sigma$ be a linear $\cC$-substitution.
We define a set $\Cosubst[\cC]{\sigma}$ of substitutions as follows:
$
\Cosubst[\cC]{\sigma} = \{ \rho \mid \Dom(\rho) = \Dom(\sigma), ~ \rho \ne \sigma, 
~ \forall x \in \Dom(\sigma).\ x\rho \in  \Cocterm[\cC]{x\sigma} \cup \{x\sigma\}
\}
$.
\end{definition}
For a linear $\cC$-substitution $\sigma$ and a linear $\cC$-term $t$,
$\{ \rho \mid \rho \in \Cosubst[\cC]{\sigma}, \, t\rho \ne t\sigma \}$ is a finite complement of $\sigma$ w.r.t.\ $t$
(i.e.,
$
\GInst[\cC]{\{ t\rho \mid \rho \in \Cosubst[\cC]{\sigma},\, t\rho \ne t\sigma \}}
=
\GInst[\cC]{t} \setminus \GInst[\cC]{t\sigma}$),
which is a finite set of linear $\cC$-substitutions.

\begin{example}
For $\cCzlist'=\{\symb{nil},\symb{cons},\symb{0},\symb{1}\}$,
we have that 
$\Cosubst[\cCzlist']{\{ \var{xs} \,{\mapsto}\, \symb{nil}, \, y \,{\mapsto}\, \symb{0} \}}
=
\{ \{ \var{xs} \,{\mapsto}\linebreak \symb{nil}, \, y \,{\mapsto}\, \symb{1} \}, \, \{ \var{xs} \,{\mapsto}\, \symb{cons}(x',\var{xs}'), \, y \,{\mapsto}\, \symb{0} \}, \, \{ \var{xs} \,{\mapsto}\, \symb{cons}(x',\var{xs}'), \, y \,{\mapsto}\, \symb{1} \} \}$. 
\end{example}




\section{Difference of Constrained Patterns}
\label{sec:difference}

In this section, we extend the difference operator $\ominus$ over unconstrained linear patterns~\cite{LM86,LLT90,HA19pro} to constrained ones.
Unlike~\cite{LM86,LLT90,HA19pro}, we do not require dividend patterns to be linear, while we still require divisor ones to be linear.

\subsection{Difference Operator for Unconstrained Linear Patterns}

We briefly recall the difference operator $\ominus$ over unconstrained linear patterns.

For unifiable patterns $s,t$ with their mgu $\sigma$ and without any common variables, the idea of $\ominus$ is based on the following property:
\[\textstyle
    \GInst[\cC]{s} = \GInst[\cC]{s\sigma} \uplus 
    \GInst[\cC]{\{ s\rho \mid \rho \in \Cosubst[\cC]{\sigma}, \, s\rho \ne s\sigma \}}
\]
Since $s\sigma = t\sigma$ and 
$\GInst[\cC]{s} \cap \GInst[\cC]{t}=
\GInst[\cC]{t\sigma}$,
the above property implies that 
$ 
\GInst[\cC]{s} \setminus \GInst[\cC]{t} 
=
\GInst[\cC]{\{ s\rho \mid \rho \in \Cosubst[\cC]{\sigma}, \, s\rho \ne s\sigma \}}
$ 
and thus
$\{ s\rho \mid \rho \in \Cosubst[\cC]{\sigma}, \, s\rho \ne s\sigma \}$ is a complement of $s$.
To compute $\Cosubst[\cC]{\sigma}$, the mgu $\sigma$ has to be a $\cC$-substitution.
By definition, it is clear that any mgu of patterns are $\cC$-substitutions.
%
We define the difference operator $\ominus$ for linear patterns as follows.
 
\begin{definition}[$\ominus$ over linear patterns~\cite{HA19pro}]
\label{def:difference-of-unconstrained-terms}
We define the difference operator $\ominus$ over linear patterns as follows:
$s \ominus t =
\{ s\rho \mid \rho \in \Cosubst[\cC]{\sigma}, \, s\rho \ne s\sigma \}$
if $s,t'$ are unifiable,
where 
$t \doteq t'$,
$\Var(s)\cap\Var(t')=\emptyset$, 
and
$\sigma=\mgu(s,t')$;
otherwise, $s \ominus t = \{s\}$.
Note that $s \ominus s = \emptyset$.
\end{definition}

To compute $s \ominus t$, 
$s$ does not have to be linear, while $\sigma|_{\Var(s)}$ must be linear.
A sufficient condition for $\sigma|_{\Var(s)}$ being linear is linearity of $t'$ (i.e., $t$).
\begin{restatable}
{proposition}{PropSufficientConditionOfLinearityOfMgu}
\label{prop:sufficient-condition-of-linearity-of-mgu}
Let $s,t$ be unifiable terms with $\Var(s) \cap \Var(t) = \emptyset$, and $\sigma$ be an mgu of $s,t$. 
If $t$ is linear, then 
$\sigma|_{\Var(s)}$ is linear.
\end{restatable}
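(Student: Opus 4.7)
The plan is to induct on the size $|t|$ of the divisor, choosing a convenient representative of the mgu (which is unique up to renaming, so a representative suffices). The base cases are immediate: if $t$ is a variable $y$, then $y\notin\Var(s)$ by disjointness, so $\sigma=\{y\mapsto s\}$ is an mgu whose restriction to $\Var(s)$ is the identity on $\Var(s)$, hence trivially linear; if $t$ is a constant, then unifiability forces $s$ to be either the same constant or a single variable, and in either case the restriction to $\Var(s)$ is either empty or binds a single variable to a constant.

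For the inductive step $t=f(t_1,\ldots,t_n)$: if $s$ is a variable, then $\sigma=\{s\mapsto t\}$ and $s\sigma=t$ is linear by hypothesis. Otherwise, unifiability forces $s=f(s_1,\ldots,s_n)$, and I would perform sequential unification $\sigma_0=\epsilon$, $\sigma_i=\sigma_{i-1}\,\mu_i$ with $\mu_i=\mgu(s_i\sigma_{i-1},\,t_i\sigma_{i-1})$, setting $\sigma:=\sigma_n$. The key use of linearity of $t$ is that $\Var(t_i)\cap\Var(t_j)=\emptyset$ for $i\neq j$, so no variable of $t_i$ is ever touched during the first $i-1$ iterations; hence $t_i\sigma_{i-1}=t_i$, which remains linear as a subterm of $t$. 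The outer induction hypothesis then applies to the smaller pair $(s_i\sigma_{i-1},\,t_i)$, yielding that $\mu_i$ restricted to $\Var(s_i\sigma_{i-1})$ is linear.

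The main obstacle is the composition step: arguing that $\sigma_i$ restricted to $\Var(s)$ remains linear as each $\mu_i$ is multiplied in, and that this is preserved when later $\mu_k$ ($k>i$) are applied to variables already in the range of $\sigma_{i-1}$. Fresh variables introduced in $\mu_i$'s range originate from $\Var(t_i)\setminus\Dom(\mu_i)$, and the disjointness $\Var(t_k)\cap\Var(t_i)=\emptyset$ (again, linearity of $t$) rules out later collisions, but making this bookkeeping fully rigorous is where the real work lies. A direct positional alternative that I would try in parallel is to observe that each $x\sigma$ equals $t\sigma|_p$ at any occurrence $p$ of $x$ in $s$, reducing to a case analysis on the structure of $t$ at and below $p$: either $p$ lies at or above a variable position of $t$, so $x\sigma$ is (a subterm of) $y\sigma$ for some $y\in\Var(t)$, or $t|_p$ is a non-variable, so $x\sigma=(t|_p)\sigma$ with $t|_p$ linear. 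This route requires a parallel analysis of $\sigma$ restricted to $\Var(t)$, which is subtle, since in general different variables of $t$ may share variables in their images under an mgu even though this does not obstruct the linearity of $\sigma$ on $\Var(s)$.
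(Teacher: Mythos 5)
Your overall strategy (induction on $t$, decomposing into argument-wise sequential unification) is reasonable, but the proof is not complete: you explicitly defer the composition step as ``bookkeeping,'' and that step is in fact the entire content of the proposition, not an administrative detail. Concretely, to conclude that $x\sigma_i=(x\sigma_{i-1})\mu_i$ is linear from linearity of $x\sigma_{i-1}$, you need more than what your induction hypothesis supplies. The hypothesis gives you only that each $y\mu_i$ is linear for $y\in\Var(s_i\sigma_{i-1})$; but for $(x\sigma_{i-1})\mu_i$ to be linear you additionally need (a) that distinct variables $y,z$ of the linear term $x\sigma_{i-1}$ receive \emph{variable-disjoint} images under $\mu_i$, and (b) that $y\mu_i$ does not reintroduce some other variable occurring elsewhere in $x\sigma_{i-1}$. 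Neither (a) nor (b) follows from per-variable linearity of $\mu_i$, and for a non-linear dividend unified against an arbitrary term both can fail; they hold here only because the divisor $t_i$ is linear and variable-disjoint from the dividend --- which is essentially the proposition again, in a strengthened form. Your remark about $\Var(t_i)\cap\Var(t_k)=\emptyset$ only addresses collisions \emph{across} stages $i\neq k$, not collisions \emph{within} a single $\mu_i$ between the images of two distinct variables of $x\sigma_{i-1}$, which is where the danger lies. So the induction, as set up, cannot be closed without strengthening the statement being proved.

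The fix is to prove the stronger invariant that the paper carries through its modified Martelli--Montanari transformation: not only is each right-hand side linear, but the right-hand sides of distinct equations are pairwise variable-disjoint (the paper's ``R-linearity''), and left and right variables never mix (``LR-separation''). In the resulting solved form, every $x\in\Var(s)$ is bound to a linear term, and distinct such variables are bound to variable-disjoint terms over $\Var(t)$ only --- precisely the properties (a) and (b) you are missing (this is the paper's notion of a \emph{linearity-preserving} substitution, which is strictly stronger than ``every image is linear''). If you want to keep your sequential-unification decomposition, you should strengthen your induction hypothesis to: $\mu_i$ restricted to the dividend's variables is linearity-preserving and its range on those variables contains only divisor variables. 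With that, the composition step goes through; without it, the argument has a genuine hole. Your ``positional'' alternative has the same issue in disguise: as you note yourself, it requires controlling how $\sigma$ acts on $\Var(t)$, which again is the pairwise-disjointness invariant.
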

\begin{proof}[Sketch]
We generalize this claim to unification problems of the form $\{ u_1 =^? u'_1, \ldots, u_k =^? u'_k \}$ and use the unification procedure in~\cite[Section~4.6]{BN98}.
To distinguish variables in $s$ and $t$, we do not swap both sides of $=^?$ and prepare the \textsf{Eliminate} rule for both sides of $=^?$.
Then, we show that the initial unification problem $\{ s =^? t \}$ can be transformed into an extended solved form $\{ x_1 =^? t_1, \ldots, x_n =^? t_n, s_1 =^? y_1, \ldots, s_m =^? y_m \}$ such that
$x_1,\ldots,x_n$ are pairwise distinct variables in $s$, 
$y_1,\ldots,y_m$ are pairwise distinct variables in $t$,
none of $s_1,\ldots,s_m$ is a variable,
$\{x_1,\ldots,x_n,y_1,\ldots,y_m\} \cap \Var(t_1,\ldots,t_n,s_1,\ldots,s_m) = \emptyset$,
and 
$t_i$ is linear for all $1 \leq i \leq n$.
A complete proof can be seen in the appendix.
\qed
\end{proof}
For a pattern $s$ and a linear pattern $t$, $s \ominus t$ is a finite complement of patterns with sort $\iota$ w.r.t.\ ground $\cC$-instances of $s$
(i.e., $\GInst[\cC]{s \ominus t} = \GInst[\cC]{s} \setminus \GInst[\cC]{t}$).

\begin{example}
For $\Sigmaflist'=\{\symb{f},\symb{nil},\symb{cons},\symb{0},\symb{1}\}$,
we have that
$ \symb{f}(\var{xs},y) \ominus \symb{f}(\symb{nil},\symb{0})
= 
\{ \symb{f}(\symb{nil},\symb{1}), \, \symb{f}(\symb{cons}(x',\var{xs}'),\symb{0}), \, \symb{f}(\symb{cons}(x',\var{xs}'),\symb{1}) \}
$.
\end{example}

\subsection{Extension of Difference Operator to Constrained Patterns}
\label{subsec:extension-of-ominus-to-constrained-patterns}



In the definition of $\ominus$ for unconstrained patterns, for $s \ominus t$, unifiability of $s$ and $t$ is used.
Thus, for the extension, 
we define unifiability of constrained terms.

\begin{definition}[unifiability of constrained terms]
\label{def:unifiability-of-constrained-terms}
Two constrained terms $\CTerm{s}{\phi}$ and $\CTerm{t}{\psi}$ are said to be \emph{unifiable} if $s$ and $t$ are unifiable and their mgu $\theta$ with $\Dom(\theta) \subseteq \Var(s,t)$ satisfies both of the following:
$x\theta \in \Val \cup \cV$ for all variables $x \in \Var(\phi,\psi)$,
        and
$\phi\theta \land \psi\theta$ is satisfiable.
\end{definition}
\begin{example}
The constrained terms $\CTerm{\symb{f}(\symb{nil},y_1)}{y_1 \leq \symb{0}}$ and $\CTerm{\symb{f}(\symb{nil},y_{\mathrm{a}})}{y_{\mathrm{a}} > \symb{0}}$ are not unifiable, but $\CTerm{\symb{f}(\symb{nil},y_1)}{y_1 \leq \symb{0}}$ and $\CTerm{\symb{f}(\var{xs},y)}{\symb{true}}$ are unifiable by means of an mgu $\{ y_1 \mapsto y, \var{xs} \mapsto \symb{nil} \}$ of $\symb{f}(\symb{nil},y_1)$ and $\symb{f}(\var{xs},y)$;
their instance by means of the mgu is $\CTerm{\symb{f}(\symb{nil},y)}{y \leq \symb{0} \land \symb{true}}$.
\end{example}

In considering ground constructor instances of a constrained term $\CTerm{s}{\phi}$, logical variables---variables in $\phi$---are instantiated by value-constants but others are instantiated by arbitrary ground terms.
Under our assumption that there is no user-defined constructor term with a theory sort,
any variable with a theory sort is instantiated by value-constants only,
and no variable with a non-theory sort is instantiated by any value-constant.
For this reason, unlike the usual discussion on the instantiation of variables in the LCTRS setting (e.g.,~\cite{ANS24fscd}), we do not take care of whether a variable is logical or not.

However, signatures of LCTRSs are infinite in general.
For example, signatures of integer LCTRSs include the integers and are thus infinite.
For this reason, complements are often infinite, e.g., a complement of the value-constant $\symb{0}$ is $\{ \symb{n} \mid n \in \mathbb{Z} \setminus \{0\} \}$.
On the other hand, user-defined constructors (i.e., $\cC \setminus \Val$) are usually finite.
Since we consider constrained patterns, we can move value-constants in terms to constraints.
For example, we transform the constrained pattern $\CTerm{\symb{f}(\symb{nil},\symb{0})}{\symb{true}}$ into $\CTerm{\symb{f}(\symb{nil},x)}{x = \symb{0}}$, where $x$ is a fresh variable.
In summary, we consider value-free constrained patterns. 
Since each constrained term has an equivalent value-free constrained term~\cite{Kop17,KN24jip}, in the rest of this paper, we assume w.l.o.g.\ that constrained patterns are value-free.
For a constrained term $\CTerm{s}{\phi}$, we denote its value-free variant by $\CTerm{\tilde{s}}{\tilde{\phi}}$~\cite{KN24jip}:
$\tilde{s} = s[y_1,\ldots,y_n]_{p_1,\ldots,p_n}$ and
$\tilde{\phi} = \phi \land \bigwedge_{i=1}^n (y_i = s|_{p_i})$,
where $p_1,\ldots,p_n$ are the positions of values in $s$, and $y_1,\ldots,y_n$ are pairwise distinct variables such that $\{y_1,\ldots,y_n\} \cap \Var(s,\phi) = \emptyset$.
It is clear that $\GInst[\cC]{\CTerm{s}{\phi}} = \GInst[\cC]{\CTerm{\tilde{s}}{\tilde{\phi}}}$.

For the extension of $\ominus$ to constrained patterns, the key property of $\ominus$---%
$
    \GInst[\cC]{s} 
    = \GInst[\cC]{s\sigma} \uplus 
     \GInst[\cC]{\{ s\rho \mid \rho \in \Cosubst[\cC]{\sigma}, \, s\rho \ne s\sigma \}}
$---%
is adapted to unifiable constrained patterns $\CTerm{s}{\phi}, \CTerm{t}{\psi}$ with $\sigma=\mgu[\cC](s,t)$ as follows:
\[\textstyle
    \GInst[\cC]{\CTerm{s}{\phi}} = 
    \GInst[\cC]{\CTerm{s\sigma}{\phi\sigma \land \psi\sigma}} 
    \uplus
    \GInst[\cC]{\CTerm{s\sigma}{\phi\sigma \land \neg\psi\sigma}} 
    \uplus
    \GInst[\cC]{\{ \CTerm{s\rho}{\phi\sigma} \mid \rho \in \Cosubst[\cC]{\sigma}, \, s\rho \ne s\sigma\}}
\]
    The first and second ones are obtained by the following division of $\GInst[\cC]{\CTerm{s\sigma}{\phi\sigma}}$:
$ 
    \GInst[\cC]{\CTerm{s\sigma}{\phi\sigma}} = 
    \GInst[\cC]{\CTerm{s\sigma}{\phi\sigma \land \psi\sigma}} \uplus 
    \GInst[\cC]{\CTerm{s\sigma}{\phi\sigma \land \neg\psi\sigma}}
$. 
    The third one can be obtained similarly to the unconstrained case, but
    we restrict instances to those satisfying $\phi\sigma$.
    This is because all ground $\cC$-instances in $\CTerm{s}{\phi}$ satisfy $\phi$, i.e., a substitution to obtain a ground $\cC$-instance in $\CTerm{s}{\phi}$ respects $\phi$.
    Since $\CTerm{s}{\phi}$ and $\CTerm{t}{\psi}$ are assumed to be value-free, 
    we have that 
    $\Ran(\sigma|_{\Var(\phi,\psi)}) \subseteq \cV$.
    Thus, by definition, for all substitutions $\rho \in \Cosubst[\cC]{\sigma}$, we have that $\rho|_{\Var(\phi,\psi)} = \sigma|_{\Var(\phi,\psi)}$, and hence $\phi\sigma = \phi\rho$.

Following the idea above, we extend the difference operator $\ominus$ to value-free constrained patterns.

\begin{definition}[$\ominus$ over value-free constrained patterns]
We define a difference operator $\ominus$, which takes a value-free constrained pattern as a dividend and takes a value-free constrained linear pattern as a divisor, as follows:
\[
{\CTerm{s}{\phi}} \,\ominus\, {\CTerm{t}{\psi}} =
\left\{
\begin{array}{l@{\hspace{-12ex}}l@{}}
\{ \CTerm{s\rho}{\phi\sigma} \mid \rho \in \Cosubst[\cC]{\sigma}, \, s\rho \ne s\sigma\}
\\
{} \cup\{\CTerm{s\sigma}{\phi\sigma\land\lnot\psi'\sigma} \mid \mbox{$(\phi\sigma\land\lnot\psi'\sigma)$ is satisfiable} \}
\\
& \mbox{if $\CTerm{s}{\phi},\CTerm{t'}{\psi'}$ are unifiable} 
\\[3pt]
\{\CTerm{s}{\phi}\} 
& \mbox{otherwise} 
\\
\end{array}
\right.
\]
where 
$\CTerm{t'}{\psi'} \doteq \CTerm{t}{\psi}$, $\Var(s,\phi)\cap\Var(t',\psi')=\emptyset$,
and $\sigma = \mgu(s,t')$.
\end{definition}
By definition, it is clear that $\CTerm{s}{\phi} \ominus \CTerm{s'}{\phi'} = \emptyset$
for any value-free constrained pattern $\CTerm{s'}{\phi'}$ such that $\CTerm{s'}{\phi'} \doteq \CTerm{s}{\phi}$.

\begin{example}
For $\Sigmaflist=\{\symb{f},\symb{nil},\symb{cons}\}\cup \{ \symb{n} \mid n \in \mathbb{Z} \}$, we have that
${\CTerm{\symb{f}(\var{xs},y)}{\symb{true}}} \ominus {\CTerm{\symb{f}(\symb{nil},y_1)}{y_1 \leq \symb{0}}}
= 
\{\, \CTerm{\symb{f}(\symb{nil},y_{\mathrm{a}})}{\neg (y_{\mathrm{a}} \leq \symb{0})}, \, \CTerm{\symb{f}(\symb{cons}(x',\var{xs}'),y')}{\symb{true}} \,\}
$.
\end{example}

The following theorem shows correctness of $\ominus$ over constrained patterns.

\begin{restatable}[correctness of $\CTerm{s}{\phi} \ominus \CTerm{t}{\psi}$]
{theorem}{ThmConstrainedPatternOminusPropertiesX}
\label{thm:constrained-pattern-ominus-propertiesX}
For a value-free constrained pattern $\CTerm{s}{\phi}:\iota$ and a value-free constrained linear pattern $\CTerm{t}{\psi}$,
$\CTerm{s}{\phi} \ominus \CTerm{t}{\psi}$ is a finite set of value-free constrained patterns with sort $\iota$
such that
\begin{enumerate}
    \item $\GInst[\cC]{\CTerm{u}{\varphi}} \cap \GInst[\cC]{\CTerm{u'}{\varphi'}} = \emptyset$ for any different constrained patterns $\CTerm{u}{\varphi},\CTerm{u'}{\varphi'} \in 
    (\CTerm{s}{\phi} \ominus \CTerm{t}{\psi})$,
    \item if $\CTerm{s}{\phi}$ is linear, then $\CTerm{s}{\phi} \ominus \CTerm{t}{\psi}$ is a set of constrained linear patterns,
    \item $s < u$ for any constrained term $\CTerm{u}{\varphi} \in (\CTerm{s}{\phi} \ominus \CTerm{t}{\psi})$,
    \item $\max\{\Height{s},\Height{t}\} \geq \Height{u}$ for any term $\CTerm{u}{\varphi} \in (\CTerm{s}{\phi} \ominus \CTerm{t}{\psi})$,
        and
    \item $\GInst[\cC]{\CTerm{s}{\phi} \ominus \CTerm{t}{\psi}} = \GInst[\cC]{\CTerm{s}{\phi}} \setminus \GInst[\cC]{\CTerm{t}{\psi}}$.
\end{enumerate}
\end{restatable}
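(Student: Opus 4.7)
My plan is to case-split on the two clauses used in the definition of $\ominus$. In the non-unifiable branch the output is $\{\CTerm{s}{\phi}\}$, and finiteness, value-freeness, sort, (1), (2), and (4) are immediate from the assumptions on $\CTerm{s}{\phi}$; (5) reduces to the disjointness $\GInst[\cC]{\CTerm{s}{\phi}} \cap \GInst[\cC]{\CTerm{t}{\psi}} = \emptyset$, which holds because a common ground $\cC$-instance would supply a substitution unifying $s$ and $t'$ while respecting both $\phi$ and $\psi'$, contradicting Definition~\ref{def:unifiability-of-constrained-terms}. Item (3) becomes vacuous in this branch, or must be read modulo the convention~$\doteq$; I return to this subtlety below.

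For the unifiable branch I would first establish well-definedness. Since $\CTerm{s}{\phi}$ and $\CTerm{t'}{\psi'}$ are value-free, $\sigma=\mgu(s,t')$ is a $\cC$-substitution and $\sigma|_{\Var(\phi,\psi')}$ maps into $\cV$, so for every $\rho \in \Cosubst[\cC]{\sigma}$ we have $\phi\rho = \phi\sigma$. The correctness claim~(5) is obtained from the identity
\[
\GInst[\cC]{\CTerm{s}{\phi}} \;=\; \GInst[\cC]{\CTerm{s\sigma}{\phi\sigma}} \;\uplus\; \GInst[\cC]{\{\CTerm{s\rho}{\phi\sigma} \mid \rho \in \Cosubst[\cC]{\sigma},\ s\rho \ne s\sigma\}},
\]
the direct constrained analogue of the decomposition underlying Section~4.1: any ground $\cC$-substitution $\gamma$ respecting $\phi$ and defined on $\Var(s)$ either factors as $\gamma = \delta \circ \sigma|_{\Var(s)}$, giving an instance of $s\sigma$, or else disagrees with $\sigma$ at some variable, in which case correctness of $\Cosubst[\cC]{\cdot}$ on the linear $\cC$-substitution $\sigma|_{\Var(s)}$---linearity coming from Proposition~\ref{prop:sufficient-condition-of-linearity-of-mgu} applied to the linear $t'$---yields a unique $\rho$ through which $\gamma$ factors. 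Splitting the first summand by $\psi'\sigma$ vs.\ $\neg\psi'\sigma$ and observing $\GInst[\cC]{\CTerm{s\sigma}{\phi\sigma \land \psi'\sigma}} = \GInst[\cC]{\CTerm{s}{\phi}} \cap \GInst[\cC]{\CTerm{t}{\psi}}$ then yields~(5).

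The remaining items follow from this decomposition together with the properties of $\Cocterm[\cC]{\cdot}$ and $\Cosubst[\cC]{\cdot}$ from Section~3: finiteness from finiteness of $\Cosubst[\cC]{\sigma}$; value-freeness because neither the mgu nor $\Cocterm[\cC]{\cdot}$ over value-free arguments introduces value-constants under the paper's assumption on theory sorts; and the sort because $\sigma$ is sort-preserving. Item~(1) rests on disjointness of ground instances across $\Cosubst[\cC]{\sigma}$---different $\rho$'s disagree on a position of $s$ with incompatible constructor heads---and on the constraint split $\psi'\sigma$ vs.\ $\neg\psi'\sigma$ that separates $\CTerm{s\sigma}{\phi\sigma \land \neg\psi'\sigma}$ from each $\CTerm{s\rho}{\phi\sigma}$ which is already outside $\GInst[\cC]{\CTerm{t'}{\psi'}}$ at the term level. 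Item~(2) follows because $s\sigma$ is linear when $s$ is (again by Proposition~\ref{prop:sufficient-condition-of-linearity-of-mgu}), and each $\rho$ plugs in linear elements of $\Cocterm[\cC]{x\sigma}$ using pairwise disjoint fresh variables. Item~(4) uses the inductive bound $\Height{u'}\leq\Height{u}$ for $u'\in\Cocterm[\cC]{u}$ together with $\Height{x\sigma}\leq\Height{t}$ for $x\in\Var(s)$ to conclude $\Height{s\rho},\Height{s\sigma}\leq\max\{\Height{s},\Height{t}\}$.

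The main obstacle is item~(3). When $\sigma|_{\Var(s)}$ introduces a non-variable subterm at some position of $s$, the strict inequality $s<s\sigma$ is immediate, and every $s\rho$ either retains that subterm or further specializes $s$ at a variable position via $\Cocterm[\cC]{\cdot}$. The delicate case is when $\sigma|_{\Var(s)}$ is a variable renaming---which happens precisely when $s\doteq t'$ up to renaming of their variables---so that $s\sigma\doteq s$. I would handle this by showing that in that case every $\rho \ne \sigma$ in $\Cosubst[\cC]{\sigma}$ either changes $\sigma$ on $\Var(s)$, yielding a genuine specialization with $s<s\rho$, or changes it only on $\Var(t')\setminus\Var(s)$ and is filtered out by the side condition $s\rho\ne s\sigma$; the only output whose term equals $s$ up to renaming is then $\CTerm{s\sigma}{\phi\sigma \land \neg\psi'\sigma}$, and (3) must be read modulo the convention~$\doteq$ on constrained patterns. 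Pinning down this reading and checking its compatibility with the later uses of (3) in the complement algorithm is where I expect the bulk of the bookkeeping to lie.
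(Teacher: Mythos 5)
Your proposal follows essentially the same route as the paper's: the paper's proof is exactly the three-way decomposition of $\GInst[\cC]{\CTerm{s}{\phi}}$ into $\GInst[\cC]{\CTerm{s\sigma}{\phi\sigma\land\psi'\sigma}}$, $\GInst[\cC]{\CTerm{s\sigma}{\phi\sigma\land\neg\psi'\sigma}}$, and the part generated by $\Cosubst[\cC]{\sigma}$, combined with the observation that value-freeness forces $\phi\rho=\phi\sigma$ for every $\rho\in\Cosubst[\cC]{\sigma}$, all riding on the unconstrained correctness of $s\ominus t$ and on Proposition~\ref{prop:sufficient-condition-of-linearity-of-mgu} for the linearity of $\sigma|_{\Var(s)}$. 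Items (1), (2), (4), (5) and the well-definedness argument are handled the same way in both.

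One remark on item~(3): your hesitation is justified, and it points at a defect of the statement rather than of your proof. When $\sigma|_{\Var(s)}$ is a renaming (i.e., $t'\lesssim s$), every $\rho\in\Cosubst[\cC]{\sigma}$ is filtered out by the side condition $s\rho\ne s\sigma$, and the only possible output is $\CTerm{s\sigma}{\phi\sigma\land\neg\psi'\sigma}$ with $s\sigma$ a renamed variant of $s$, so $s<s\sigma$ fails; the same happens trivially in the non-unifiable branch, where the output is $\{\CTerm{s}{\phi}\}$ itself (this case is not vacuous, as you suggest, but simply false for a strict order). Reading (3) modulo $\doteq$ does not rescue it, since $s\doteq s\sigma$ there. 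The paper's sketch is silent on this, but the paper effectively concedes the point in the proof of Theorem~\ref{thm:constrained-set-ominus-properties} (``we may have that $s=s\sigma$''), which is precisely why that later proof abandons the order $\SuccH$ in favour of the lexicographic order $\SuccHN$ with the divisor count. The clean fix is to assert (3) only for the elements $\CTerm{s\rho}{\phi\sigma}$ with $\rho\in\Cosubst[\cC]{\sigma}$ and $s\rho\ne s\sigma$ (where it follows from Proposition~\ref{prop:Copatterns-correctness}), and to note that the sole possible exception is the single pattern $\CTerm{s\sigma}{\phi\sigma\land\neg\psi'\sigma}$; the subsequent development only uses (3) through the weight $w$, which is built to tolerate exactly that exception.
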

\begin{proof}[Sketch]
This can be proved as a straightforward extension of the correctness proof for unconstrained patterns to constrained ones by means of the following properties:
    $\{ s\rho \mid \rho \in \Cosubst[\cC]{\sigma}, \, s\rho \ne s\sigma \}$ 
    is a set of value-free patterns;
$\sigma|_{\Var(\phi,\psi')} = \rho|_{\Var(\phi,\psi')}$ for any $\rho \in \Cosubst[\cC]{\sigma}$;
$
    \GInst[\cC]{\CTerm{s\sigma}{\phi\sigma}} = 
    \GInst[\cC]{\CTerm{s\sigma}{\phi\sigma \land \psi'\sigma}} \uplus 
    \GInst[\cC]{\CTerm{s\sigma}{\phi\sigma \land \neg\psi'\sigma}}
$ for an arbitrary constraint $\psi'$.
A complete proof can be seen in the appendix.
\qed
\end{proof}

\section{Complements of Constrained Patterns}
\label{sec:complements-of-constrained-patterns}

In this section, we first extend the complement algorithm for unconstrained linear patterns in~\cite{LM86,LLT90,HA19pro} to constrained patterns.
Then, using the extended complement algorithm, we show that quasi-reducibility is decidable for left-linear LCTRSs over a signature $\Sigma$ with decidable built-in theories 
such that $\Sigterm$ is finite and
there is no constructor $c: \iota_1 \times \cdots \times \to \iota \in \cC_\cR$ with $\iota \in \cStheory$.


Complements of constrained patterns are defined as a straightforward extension of complements of unconstrained patterns.
\begin{definition}[complement of constrained patterns]
A set $P$ of constrained patterns is called a \emph{complement} of a set $P'$ of constrained patterns \emph{w.r.t.} an $n$-ary defined symbol $f \in \cD$ 
if
$\GInst[\cC]{P} = 
\GInst[\cC]{\{ f(x_1,\ldots,x_n) \}}
\setminus \GInst[\cC]{P'}$,
where $x_1,\ldots,x_n$ are pairwise distinct variables.
A set $P$ of constrained patterns is called a \emph{complement} of a constrained pattern $\CTerm{f(s_1,\ldots,s_n)}{\phi}$ 
if
$P$ is a complement of $\{\CTerm{f(s_1,\ldots,s_n)}{\phi}\}$ w.r.t.\ $f$.
A set $P$ of constrained patterns is called a \emph{complement} of a set $P'$ of constrained patterns if 
$\GInst[\cC]{P} = 
\GInst[\cC]{\{ f(x_1,\ldots,x_n) \mid f \in \cD \}}
\setminus 
\GInst[\cC]{P'}$,
where $x_1,\ldots,x_n$ are pairwise distinct variables.
\end{definition}
Note that as for unconstrained patterns, complements of constrained patterns are not unique.

\begin{example}
Let us consider the $\cSzlist$-sorted signature $\Sigmaflist$ in Example~\ref{ex:flist} again.
The set $\{ (\mathrm{a})~\CTerm{\symb{f}(\symb{nil},y_{\mathrm{a}})}{\neg(y_{\mathrm{a}} \leq \symb{0}}), \, (\mathrm{b})~\CTerm{\symb{f}(\symb{cons}(x_{\mathrm{b}},\symb{nil}),y_{\mathrm{b}})}{\symb{true}} \}$ is a complement of~$(1)~\CTerm{\symb{f}(\symb{nil},y_1)}{y \leq \symb{0}}$.
\end{example}

\subsection{Difference Operator over Sets of Constrained Patterns}

The complement algorithm in~\cite{LM86,LLT90,HA19pro} is defined by the repetition of applying the difference operator $\ominus$ over patterns.
Thus, as in~\cite{LM86,LLT90,HA19pro}, we extend $\ominus$ to finite sets of constrained patterns
in order to have 
the results in Example~\ref{ex:goal}.

In computing $P \ominus Q$, 
if there exist constrained patterns $\CTerm{s}{\phi} \in P$ and $\CTerm{t}{\psi} \in Q$ such that $\CTerm{s}{\phi} \ominus \CTerm{t}{\psi} \ne \{\CTerm{s}{\phi}\}$, then
we recursively compute
$((P \setminus \{\CTerm{s}{\phi}\}) \cup (\CTerm{s}{\phi} \ominus \CTerm{t}{\psi})) \ominus ((Q \setminus \{\CTerm{t}{\psi}\}) \cup (\CTerm{t}{\psi} \ominus \CTerm{s}{\phi}))$.
In Section~\ref{subsec:extension-of-ominus-to-constrained-patterns}, we required divisors of $\ominus$ to be linear, and thus $\CTerm{t}{\psi}$ has to be linear.
It is difficult to know which constrained terms are selected as a divisor for $\ominus$ over constrained patterns.
For this reason, to ensure the linear of divisors for the application of $\ominus$ to constrained patterns, we assume that all constrained terms in $Q$ are linear, while not all of them may be necessarily linear.
Then, for the recursive call of $\ominus$ for sets of constrained patterns,
we would like to ensure that all constrained patterns in 
$(Q \setminus \{\CTerm{t}{\psi}\}) \cup (\CTerm{t}{\psi} \ominus \CTerm{s}{\phi})$ are linear.
Since $Q$ is assumed to be linear in advance, we need to ensure that all constrained patterns in $\CTerm{t}{\psi} \ominus \CTerm{s}{\phi}$ are linear.
Therefore, for the extension of $\ominus$ to sets of constrained patterns, we assume that all constrained patterns in $P$ are linear, as well as $Q$.

\begin{definition}[$\ominus$ over sets of value-free constrained linear patterns]
We extend $\ominus$ for finite sets of value-free constrained linear patterns as follows:
For finite sets $P,Q$ of value-free constrained linear patterns,
\[
P \mathop{\ominus} Q = 
\!\left\{
\begin{array}{@{}l@{\hspace{-36.5ex}}l@{}}
(P' \cup (\CTerm{s}{\phi} \ominus \CTerm{t}{\psi})) \ominus (Q' \cup (\CTerm{t}{\psi} \ominus \CTerm{s}{\phi})) 
\\
& \mbox{if $P = P' \uplus \{\CTerm{s}{\phi}\}$, $Q = Q' \uplus \{\CTerm{t}{\psi}\}$,
and
$\CTerm{s}{\phi} \ominus \CTerm{t}{\psi} \ne \{\CTerm{s}{\phi}\}$%
}
\\[3pt]
P
& \mbox{otherwise}
\end{array}
\right.
\]
    
\end{definition}
Since $\ominus$ over sets of value-free constrained linear patterns is non-deterministic, there may be two or more sets that can be results of $P \ominus Q$, while all the results are correct.
In addition, a result of $P \ominus Q$ is not always minimal. 
Note that 
using $\Cosubst[\cC]{\sigma}$, 
$\CTerm{s}{\phi} \ominus \CTerm{t}{\psi}$ and $\CTerm{t}{\psi} \ominus \CTerm{s}{\phi}$ 
can be computed simultaneously.

\begin{example}
For $\Sigmaflist=\{\symb{f},\symb{nil},\symb{cons}\}\cup \{ \symb{n} \mid n \in \mathbb{Z} \}$, we have that
\[
\begin{array}{@{}l@{}c@{}l@{}}
\lefteqn{\{ \CTerm{\symb{f}(\var{xs},y)}{\symb{true}} \}
\ominus 
\left\{
\begin{array}{@{}l@{~}r@{\,}l@{}}
(1) & \CTerm{\symb{f}(\symb{nil},y_1) &}{y_1 \leq \symb{0}},\\
(2) & \CTerm{\symb{f}(\symb{cons}(x_2,\var{xs}_2), y_2) &}{x_2 \leq \symb{0} \land y_2 > \symb{0}},\\
(3) & \CTerm{\symb{f}(\symb{cons}(x_3,\symb{cons}(z_3,\var{zs}_3)), y_3) &}{ x_3 > \symb{0} \land y_3 > \symb{1} }\\
\end{array}
\right\}
}\\
&=& 
\left\{ 
\begin{array}{@{}l@{~}r@{\,}l@{}}
(\mathrm{a}) & \CTerm{\symb{f}(\symb{nil},y_{\mathrm{a}}) &}{\neg(y_{\mathrm{a}} \leq \symb{0})},  \\
(1') & \CTerm{\symb{f}(\symb{cons}(x',\var{xs}'),y') &}{\symb{true}} \\
\end{array}
\right\}
\ominus
\{\,(2), \, (3) \,\}
\\
&=& 
\left\{ 
\begin{array}{@{}l@{~}r@{\,}l@{}}
(\mathrm{a}) & \CTerm{\symb{f}(\symb{nil},y_{\mathrm{a}}) &}{\neg(y_{\mathrm{a}} \leq \symb{0})},  \\
(2') & \CTerm{\symb{f}(\symb{cons}(x',\var{xs}'),y') &}{\neg(x' \leq \symb{0} \land y' > \symb{0})} \\
\end{array}
\right\}
\ominus
\{\, (3) \,\}
\\
&=& 
\left\{ 
\begin{array}{@{}l@{~}r@{\,}l@{}}
(\mathrm{a}) & \CTerm{\symb{f}(\symb{nil},y_{\mathrm{a}}) &}{\neg(y_{\mathrm{a}} \leq \symb{0})},  \\
(\mathrm{b}) & \CTerm{\symb{f}(\symb{cons}(x_{\mathrm{b}},\symb{nil}),y_{\mathrm{b}}) &}{\symb{true} }, \\
(\mathrm{c}) & \CTerm{\symb{f}(\symb{cons}(x_{\mathrm{c}},\symb{cons}(z_{\mathrm{c}},\var{zs}_{\mathrm{c}})),y_{\mathrm{c}}) &}{ 
\neg(x_{\mathrm{c}} \,{\leq}\, \symb{0} \,{\land}\, y_{\mathrm{c}} \,{>}\, \symb{0})
\,{\land}\,
\neg(x_{\mathrm{c}} \,{>}\, \symb{0} \,{\land}\, y_{\mathrm{c}} \,{>}\, \symb{1})} \\
\end{array}
\right\} \!
\ominus
\emptyset
\\
&=& 
\{ (\mathrm{a}), \, (\mathrm{b}), \, (\mathrm{c}) \}
\\
\end{array}
\]
\end{example}

A key for correctness of $\ominus$ over sets of (constrained) linear patterns is non-overlappingness between (constrained) patterns in dividends sets of $\ominus$.
We define a notion of non-overlappingness of constrained patterns and their sets.
\begin{definition}[$\cC$-non-overlappingness of constrained terms]
Two constrained terms $\CTerm{s}{\phi},\CTerm{t}{\psi}$ are called \emph{non-overlapping w.r.t.\ $\cC$} ($\cC$-non-overlapping, for short) if $\GInst[\cC]{\CTerm{s}{\phi}} \cap \GInst[\cC]{\CTerm{t}{\psi}} = \emptyset$.
Two sets $P,Q$ of constrained terms are said to be \emph{$\cC$-non-overlapping} if $\GInst[\cC]{P} \cap \GInst[\cC]{Q} = \emptyset$.
A set $P$ of constrained terms is said to be \emph{pairwise $\cC$-non-overlapping} if 
any two different constrained terms $\CTerm{s}{\phi},\CTerm{t}{\psi} \in P$ are $\cC$-non-overlapping.
\end{definition}

\begin{example}
The constrained terms $\CTerm{\symb{f}(\symb{nil},y_1)}{y_1 \leq \symb{0}}$ and $\CTerm{\symb{f}(\symb{nil},y_{\mathrm{a}})}{\neg(y_{\mathrm{a}} \leq \symb{0})}$ are $\cCzlist$-non-overlapping,
but $\CTerm{\symb{f}(\symb{nil},y_1)}{y_1 \leq \symb{0}}$ and $\CTerm{\symb{f}(\var{xs},y)}{\symb{true}}$ are not $\cCzlist$-non-overlapping because, e.g.,
$\symb{f}(\symb{nil},\symb{0}) \in \GInst[\cCzlist]{\CTerm{\symb{f}(\symb{nil},y_1)}{y_1 \leq \symb{0}}} \cap
\GInst[\cCzlist]{\CTerm{\symb{f}(\var{xs},y)}{\symb{true}}}$.
\end{example}
Note that non-overlappingness and unifiability of constrained patterns are dual properties:
$\CTerm{s}{\phi},\CTerm{t}{\psi}$ are $\cC$-non-overlapping 
if and only if 
$\CTerm{s}{\phi}$ and $\CTerm{t'}{\psi'}$ are not unifiable,
where $\CTerm{t'}{\psi'} \doteq \CTerm{t}{\psi}$ and
$\Var(s,\phi) \cap \Var(t',\psi') = \emptyset$.


For correctness of $P \ominus Q$, we assume that $P$ is pairwise $\cC$-non-overlappingness.
\begin{example}
For $\Sigmaflist=\{\symb{f},\symb{nil},\symb{cons}\}\cup \{ \symb{n} \mid n \in \mathbb{Z} \}$,
we have that a result of $\{ \CTerm{\symb{f}(\symb{nil},y)}{\symb{true}}, \, \CTerm{\symb{f}(\var{xs},y)}{\symb{true}} \} \ominus \{ \CTerm{\symb{f}(\var{xs}',y')}{\symb{true}} \}$ is $\{ \CTerm{\symb{f}(\symb{nil},y)}{\symb{true}} \}$, which is not an expected one.
Since $\symb{f}(\var{xs},y) \lesssim \symb{f}(\symb{nil},y)$, the set $\{ \CTerm{\symb{f}(\symb{nil},y)}{\symb{true}}, \, \CTerm{\symb{f}(\var{xs},y)}{\symb{true}} \}$ is redundant from the viewpoint of ground instances.
\end{example}

The following proposition shows the correctness of $\ominus$ over sets of value-free constrained linear patterns.

\begin{restatable}[correctness of $P \ominus Q$]
{theorem}{ThmConstrainedSetOminusProperties}
\label{thm:constrained-set-ominus-properties}
Let 
$P,Q$ be finite sets of value-free constrained linear patterns.
If $P$ is pairwise $\cC$-non-overlapping, 
then
$P \ominus Q$ is a pairwise $\cC$-non-overlapping finite set of value-free constrained linear patterns
such that
$\GInst[\cC]{P \ominus Q} = \GInst[\cC]{P} \setminus \GInst[\cC]{Q}$.
\end{restatable}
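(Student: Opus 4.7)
The plan is to establish all three conclusions---finiteness of $P\ominus Q$, that it is a pairwise $\cC$-non-overlapping set of value-free constrained linear patterns, and the semantic identity $\GInst[\cC]{P\ominus Q}=\GInst[\cC]{P}\setminus\GInst[\cC]{Q}$---by a single induction along the recursion tree of $P\ominus Q$, treating Theorem~\ref{thm:constrained-pattern-ominus-propertiesX} as a black box applied at each recursive step. Before the induction I must justify that the recursion terminates. By part~(4) of Theorem~\ref{thm:constrained-pattern-ominus-propertiesX}, the heights of patterns produced at a recursive step never exceed the maximum height already present in $P\cup Q$, so throughout the whole computation all term parts stay within a height bound $H$ fixed by the initial input; since the relevant term signature is finite, the set of value-free linear terms of height at most $H$ is finite up to renaming, and the strict-instance order $>$ is therefore well-founded on this pool. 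By part~(3), each recursive step removes some $s$ from the multiset of term parts of $P$ and replaces it by strict instances of $s$, so the multiset extension of $>$ strictly decreases on the term parts of the dividend, giving termination.

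I would then verify by induction that each recursive call again satisfies the hypotheses of the theorem. Linearity and value-freeness of the new dividend $P'\cup(\CTerm{s}{\phi}\ominus\CTerm{t}{\psi})$ and new divisor $Q'\cup(\CTerm{t}{\psi}\ominus\CTerm{s}{\phi})$ follow at once from part~(2) of Theorem~\ref{thm:constrained-pattern-ominus-propertiesX}. Pairwise $\cC$-non-overlappingness of the new dividend requires two checks: internal disjointness inside $\CTerm{s}{\phi}\ominus\CTerm{t}{\psi}$ is part~(1), and for any $\CTerm{u}{\varphi}\in\CTerm{s}{\phi}\ominus\CTerm{t}{\psi}$ and any $\CTerm{v}{\chi}\in P'$, part~(5) yields $\GInst[\cC]{\CTerm{u}{\varphi}}\subseteq\GInst[\cC]{\CTerm{s}{\phi}}$, which is disjoint from $\GInst[\cC]{\CTerm{v}{\chi}}$ by the assumed non-overlap inside $P$.

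For the semantic identity, in the base case $P\ominus Q=P$ every pair $(\CTerm{s}{\phi},\CTerm{t}{\psi})\in P\times Q$ satisfies $\CTerm{s}{\phi}\ominus\CTerm{t}{\psi}=\{\CTerm{s}{\phi}\}$, so part~(5) forces $\GInst[\cC]{\CTerm{s}{\phi}}\cap\GInst[\cC]{\CTerm{t}{\psi}}=\emptyset$ and hence $\GInst[\cC]{P}\setminus\GInst[\cC]{Q}=\GInst[\cC]{P}$. In the inductive step, abbreviating $A=\GInst[\cC]{P'}$, $B=\GInst[\cC]{\CTerm{s}{\phi}}$, $C=\GInst[\cC]{Q'}$, $D=\GInst[\cC]{\CTerm{t}{\psi}}$, the induction hypothesis together with part~(5) gives $\GInst[\cC]{P\ominus Q}=(A\cup(B\setminus D))\setminus(C\cup(D\setminus B))$, and a short Boolean manipulation shows this coincides with $(A\cup B)\setminus(C\cup D)=\GInst[\cC]{P}\setminus\GInst[\cC]{Q}$ exactly when $A\cap B=\emptyset$, which is precisely the pairwise $\cC$-non-overlappingness hypothesis on $P$. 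I expect termination to be the main obstacle, since both term and constraint parts may appear to grow during the recursion; what rescues the argument is focusing on the multiset of term parts of the dividend and combining the height bound from part~(4) with the strict refinement from part~(3) of Theorem~\ref{thm:constrained-pattern-ominus-propertiesX}.
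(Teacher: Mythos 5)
Your overall architecture---induction along the recursion tree, re-establishing the hypotheses of the theorem for each recursive call via Theorem~\ref{thm:constrained-pattern-ominus-propertiesX}, and closing the semantic identity with the Boolean computation $(A\cup(B\setminus D))\setminus(C\cup(D\setminus B))=(A\cup B)\setminus(C\cup D)$ under $A\cap B=\emptyset$---coincides with the paper's, and those parts are sound. The genuine gap is in the termination argument, precisely the step you flag as the main obstacle. The multiset of term parts of the dividend does \emph{not} strictly decrease under the instance order: when $\sigma=\mgu(s,t')$ restricted to $\Var(s)$ is a renaming, $\Cosubst[\cC]{\sigma}$ contributes nothing and $\CTerm{s}{\phi}\ominus\CTerm{t}{\psi}$ is the single pattern $\CTerm{s\sigma}{\phi\sigma\land\lnot\psi'\sigma}$ with $s\sigma\doteq s$; the term part is unchanged and only the constraint is strengthened. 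This occurs in the paper's own worked computation: dividing $\CTerm{\symb{f}(\symb{cons}(x',\var{xs}'),y')}{\symb{true}}$ by rule $(2)$ yields $\CTerm{\symb{f}(\symb{cons}(x',\var{xs}'),y')}{\neg(x'\leq\symb{0}\land y'>\symb{0})}$, with an identical term part. So item~(3) of Theorem~\ref{thm:constrained-pattern-ominus-propertiesX}, on which your strict decrease rests, does not cover the constraint-strengthening component (it is inherited from the unconstrained operator, which has no such component), and your measure stalls. Nor can you fall back on ``the set of assignments satisfying the constraint shrinks'': that relation is not well-founded over infinite value domains, as the paper notes.

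The paper's proof closes exactly this hole with a lexicographic measure: each $\CTerm{s}{\phi}\in P$ is weighted by the pair $(s,n)$, where $n$ is the number of $\CTerm{t}{\psi}\in Q$ with $\CTerm{s}{\phi}\ominus\CTerm{t}{\psi}\ne\{\CTerm{s}{\phi}\}$; pairs are compared lexicographically by strict instance within the height bound (well-founded because the signature is finite up to renaming) and then by $>$ on $\mathbb{N}$, and the multiset extension of this order is used. In the stalling case the term part stays put, but the divisor $\CTerm{t}{\psi}$ is simultaneously replaced by $\CTerm{t}{\psi}\ominus\CTerm{s}{\phi}$, whose ground instances are disjoint from those of the new dividend element, so the count $n$ strictly drops. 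You need this (or an equivalent) secondary component; once termination is secured this way, the rest of your argument goes through essentially as you wrote it.
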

\begin{proof}[Sketch]
To prove this claim, we show that 
for finite sets $P,Q$ of value-free constrained linear patterns,
all of the following hold:
(i)
    if $P$ is pairwise $\cC$-non-overlapping, $P = P'\uplus \{\CTerm{s}{\phi}\}$, $Q=Q' \uplus \{\CTerm{t}{\psi}\}$, and
    $\CTerm{s}{\phi} \ominus \CTerm{t}{\psi} \ne \{\CTerm{s}{\phi}\}$,
    then 
        $P'$ and $\{\CTerm{s}{\phi} \ominus \CTerm{t}{\psi}\}$ are $\cC$-non-overlapping,
        and
(ii)
    if $P$ is pairwise $\cC$-non-overlapping, then $\GInst[\cC]{P \ominus Q} = \GInst[\cC]{P} \setminus \GInst[\cC]{Q}$.

The first statement~(i) is a straightforward extension of the corresponding statement for unconstrained linear patterns to constrained ones.
The difficulty of the second statement~(ii) is the well-founded order for induction.
Let $h$ be the maximum height of terms in $P,Q$.
Then, in computing $P \ominus Q$, the heights of all terms considered during the computation are less than or equal to $h$, and thus such terms are finitely many.
We denote by $T(\Sigma,\cV)_{\leq h}$ the set of terms whose heights are less than or equal to $h$.
Then, We define a quasi-order $\SuccsimH$ over terms in $T(\Sigma,\cV)_{\leq h}$ as follows:
$s \SuccsimH t$ if and only if $s \lesssim t$, $\Height{s} \leq h$, and $\Height{t} \leq h$.
The strict part $\SuccH$ of $\SuccsimH$ is defined as ${\SuccH} = ({\SuccsimH} \setminus {\PrecsimH})$.
Note that $s \SuccH t$ if and only if $s < t$, $\Height{s} \leq h$, and $\Height{t} \leq h$.
In computing $\CTerm{s}{\phi} \ominus \CTerm{t}{\psi}$, the resulting set may contain $\CTerm{s\sigma}{\phi\sigma \land \neg\psi'\sigma}$ and we may have that $s = s\sigma$ (and thus $(\phi\sigma \land \neg\psi'\sigma) = (\phi \land \neg\psi')$.
On the other hand, since $\phi \land \psi'$ and $\neg\psi'$ are satisfiable, we have that
$\{ \theta \mid \Eval{\phi\theta} = \top \} \supset \{ \theta \mid \Eval{(\phi \land \neg\psi')\theta} = \top \}$.
However, the sets may be infinite, and thus the relation $\supset$ for the sets is not well-founded in general.
To overcome the problem, we consider the number of constrained terms $\CTerm{t}{\psi}$ 
such that $\CTerm{s}{\phi} \ominus \CTerm{t}{\psi} \ne \{ \CTerm{s}{\phi}\}$.
Let $\SuccsimHN$ be the order of lexicographic products of terms and natural numbers compared by $\SuccsimH$ and $\geq_{\mathbb{N}}$.
Let ${\SuccHN} = (\SuccsimHN \setminus \PrecsimHN)$.
Then, it is clear that $\SuccHN$ is well-founded.
We define the weight $w$ for pairs of sets of constrained linear patterns as follows:
$
w(P,Q) = \{ (s,n) \mid \CTerm{s}{\phi} \in P, \, n = |\{ \CTerm{t}{\psi} \in Q \mid 
\CTerm{s}{\phi} \ominus \CTerm{t}{\phi} \ne \{\CTerm{s}{\phi}\}
\}| \,\}
$.
When $P \ominus Q$ calls $P' \ominus Q'$,
we have that $w(P,Q) \SuccHN w(P',Q')$.
The second statement~(ii) can be proved by induction on the well-founded order $\SuccHN$ with the weight $w$.
A complete proof can be seen in the appendix.
\qed
\end{proof}

Note that for $P \ominus Q$, if there are no constrained terms $\CTerm{s}{\phi} \in P$ and $\CTerm{t}{\psi} \in Q$ such that $\CTerm{s}{\phi} \in P$ and $\CTerm{t}{\psi} \in Q$ are not unifiable, 
then neither $P$ nor $Q$ must be a set of constrained linear patterns
and
$P$ must not be pairwise $\cC$-non-overlapping.

\subsection{Complement Algorithm for Constrained Linear Patterns}

In this section, we first show a complement algorithm for sets of value-free constrained linear patterns, and then show that quasi-reducibility is decidable for some class of left-linear LCTRSs with decidable built-in theories.

\begin{definition}
Let 
$f$ be an $n$-ary defined symbol in $\cD$,
and $Q$ be a finite set of value-free  constrained linear patterns. 
Then, we define a finite set $\CopatternSet[f]{Q}$ of constrained patterns as follows:
$
\CopatternSet[f]{Q} = \{ \CTerm{f(x_1,\ldots,x_n)}{\symb{true}} \} \ominus Q
$,
where $x_1,\ldots,x_n$ are pairwise distinct variables.
We extend $\CopatternSet[f]{\cdot}$ for $\cD$ as follows:
$
\CopatternSet{Q} = \bigcup_{f \in \cD} \CopatternSet[f]{Q}
$.
\end{definition}
\begin{example}
For $\Sigmaflist=\{\symb{f},\symb{nil},\symb{cons}\}\cup \{ \symb{n} \mid n \in \mathbb{Z} \}$ in Example~\ref{ex:flist}, we have that 
$\CopatternSet[]{\{(1),(2),(3)\}} = \CopatternSet[\symb{f}]{\{(1),(2),(3)\}} = \{(\mathrm{a}), (\mathrm{b}),(\mathrm{c})\}$.
\end{example}

As a direct consequence of Theorem~\ref{thm:constrained-set-ominus-properties},
the following claim holds for $\CopatternSet[f]{\cdot}$.

\begin{restatable}{theorem}{ThmOptimizationOfCopatternSetX}
\label{thm:optimization-of-CopatternSetX}
Let 
$Q$ be a pairwise $\cC$-non-overlapping finite set of value-free constrained linear patterns, and 
$f$ an $n$-ary defined symbol in $\cD$. 
Then, $
\GInst[\cC]{\CopatternSet[f]{Q}}$ $=
\GInst[\cC]{\CTerm{f(x_1,\ldots,x_n)}{\symb{true}}} \setminus \GInst[\cC]{\{ \CTerm{t}{\psi} \in Q \mid \Root(t) = f \}}$
and $\CopatternSet[f]{Q}$ is a complement of $Q$ w.r.t.\ $f$,
where $x_1,\ldots,x_n$ are pairwise distinct variables.
\end{restatable}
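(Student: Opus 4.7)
The plan is to obtain the claim as a direct consequence of Theorem~\ref{thm:constrained-set-ominus-properties} together with a simple observation about the root symbol of ground instances. First, note that the singleton $\{\CTerm{f(x_1,\ldots,x_n)}{\symb{true}}\}$ is trivially a pairwise $\cC$-non-overlapping finite set of value-free constrained linear patterns, and $Q$ is such a set by assumption. Hence Theorem~\ref{thm:constrained-set-ominus-properties} applies directly to $\{\CTerm{f(x_1,\ldots,x_n)}{\symb{true}}\} \ominus Q$, yielding
\[
\GInst[\cC]{\CopatternSet[f]{Q}} \;=\; \GInst[\cC]{\CTerm{f(x_1,\ldots,x_n)}{\symb{true}}} \setminus \GInst[\cC]{Q}.
\]

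The remaining step is to restrict the divisor side to those elements of $Q$ whose root is $f$. Every ground $\cC$-instance of $\CTerm{f(x_1,\ldots,x_n)}{\symb{true}}$ has root symbol $f$, so for any $\CTerm{t}{\psi} \in Q$ with $\Root(t) \ne f$ every element of $\GInst[\cC]{\CTerm{t}{\psi}}$ has root $\Root(t) \ne f$ and is therefore disjoint from $\GInst[\cC]{\CTerm{f(x_1,\ldots,x_n)}{\symb{true}}}$. Splitting $Q$ as $Q_f \uplus (Q \setminus Q_f)$ with $Q_f = \{\CTerm{t}{\psi} \in Q \mid \Root(t)=f\}$, removing $Q \setminus Q_f$ from the divisor does not affect the set difference, so
\[
\GInst[\cC]{\CTerm{f(x_1,\ldots,x_n)}{\symb{true}}} \setminus \GInst[\cC]{Q} \;=\; \GInst[\cC]{\CTerm{f(x_1,\ldots,x_n)}{\symb{true}}} \setminus \GInst[\cC]{Q_f}.
\]
This gives the displayed equation in the statement.

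The second half of the claim, that $\CopatternSet[f]{Q}$ is a complement of $Q$ w.r.t.\ $f$, then follows by unfolding the definition of complement w.r.t.\ $f$: the required identity is $\GInst[\cC]{\CopatternSet[f]{Q}} = \GInst[\cC]{\{f(x_1,\ldots,x_n)\}} \setminus \GInst[\cC]{Q}$, whose right-hand side equals $\GInst[\cC]{\CTerm{f(x_1,\ldots,x_n)}{\symb{true}}} \setminus \GInst[\cC]{Q}$ since the tautological constraint $\symb{true}$ does not constrain instances. I do not expect a substantive obstacle; the only thing to be careful about is verifying that the hypotheses of Theorem~\ref{thm:constrained-set-ominus-properties} are met for the singleton dividend (which is immediate) and that $\doteq$-equivalent renamings of $f(x_1,\ldots,x_n)$ yield the same ground instance set so that the choice of fresh variables $x_1,\ldots,x_n$ is immaterial.
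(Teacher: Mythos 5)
Your proposal is correct and follows essentially the same route as the paper: both rest on Theorem~\ref{thm:constrained-set-ominus-properties} applied to the singleton dividend $\{\CTerm{f(x_1,\ldots,x_n)}{\symb{true}}\}$ (which is trivially pairwise $\cC$-non-overlapping) together with the observation that divisors rooted by a symbol other than $f$ cannot contribute. The only, harmless, difference is one of ordering: the paper first argues syntactically that the computation of $\{\CTerm{f(x_1,\ldots,x_n)}{\symb{true}}\} \ominus Q$ coincides with that of $\{\CTerm{f(x_1,\ldots,x_n)}{\symb{true}}\} \ominus \{\CTerm{t}{\psi} \in Q \mid \Root(t)=f\}$ because such divisors are never unifiable with any dividend, and only then invokes the correctness theorem, whereas you invoke the correctness theorem on all of $Q$ first and then discard the irrelevant divisors by a set-theoretic argument on ground instances.
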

\begin{proof}[Sketch]
By definition, during the computation of $\CopatternSet[f]{Q}$, all constrained patterns in the left-hand sides of $\ominus$ are rooted by $f$.
Thus, no constrained term $\CTerm{t}{\psi}$ in $Q$ such that $\Root(t) \ne f$ is not unifiable with any constrained pattern in the left-hand sides of $\ominus$, and hence
$\{ \CTerm{f(x_1,\ldots,x_n)}{\symb{true}} \} \ominus Q = \{ \CTerm{f(x_1,\ldots,x_n)}{\symb{true}} \} \ominus \{ \CTerm{t}{\psi} \in Q \mid \Root(t) = f \}$.
Then, by Theorem~\ref{thm:constrained-set-ominus-properties}, $\CopatternSet[f]{Q}$ is a complement of $Q$ w.r.t.\ $f$.
A complete proof can be seen in the appendix.
\qed
\end{proof}
The following corollary is a direct consequence of Theorem~\ref{thm:optimization-of-CopatternSetX}.
\begin{corollary}
\label{cor:correctness-of-CopatternSetX}
For a pairwise $\cC$-non-overlapping finite set $Q$ of value-free constrained linear patterns, 
    $
    \GInst[\cC]{\CopatternSet{Q}}
    =
    \GInst[\cC]{\{f(x_1,\ldots,x_n) \mid f \in \cD \}}
    \setminus
    \GInst[\cC]{Q}$,
where $x_1,\ldots,x_n$ are pairwise distinct variables.
\end{corollary}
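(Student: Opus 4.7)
The plan is to derive this corollary as an immediate consequence of Theorem~\ref{thm:optimization-of-CopatternSetX} by aggregating over all defined symbols. First I would unfold the definition $\CopatternSet{Q} = \bigcup_{f \in \cD} \CopatternSet[f]{Q}$ and take ground $\cC$-instances, obtaining
\[
\GInst[\cC]{\CopatternSet{Q}} = \bigcup_{f \in \cD} \GInst[\cC]{\CopatternSet[f]{Q}}.
\]
Applying Theorem~\ref{thm:optimization-of-CopatternSetX} pointwise for each $f \in \cD$ then rewrites the right-hand side as
\[
\bigcup_{f \in \cD} \bigl( \GInst[\cC]{\CTerm{f(x_1,\ldots,x_n)}{\symb{true}}} \setminus \GInst[\cC]{\{\CTerm{t}{\psi} \in Q \mid \Root(t) = f\}} \bigr).
\]

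The next step is a disjointness-by-root argument. Since a pattern is, by definition, a term rooted by a defined symbol with constructor arguments, every element of $Q$ is rooted by a unique $f \in \cD$, and so
\[
\GInst[\cC]{Q} = \bigcup_{f \in \cD} \GInst[\cC]{\{\CTerm{t}{\psi} \in Q \mid \Root(t) = f\}},
\]
where the union is disjoint because instantiating $f(t_1,\ldots,t_n)$ with a $\cC$-substitution preserves the root symbol $f$. The same disjointness holds for $\GInst[\cC]{\{f(x_1,\ldots,x_n) \mid f \in \cD\}}$. I can then invoke the elementary identity $\bigcup_f (A_f \setminus B_f) = (\bigcup_f A_f) \setminus (\bigcup_f B_f)$, which is valid whenever $B_f \cap A_g = \emptyset$ for $f \neq g$; in our case each $B_f$ sits inside $A_f$, so it is trivially disjoint from every $A_g$ with $g \neq f$. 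This converts the right-hand side into $\GInst[\cC]{\{f(x_1,\ldots,x_n) \mid f \in \cD\}} \setminus \GInst[\cC]{Q}$, as required.

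I do not anticipate any substantive obstacle, since the delicate parts (unifiability, height control, satisfiability, preservation of pairwise $\cC$-non-overlappingness, and linearity of intermediate dividends and divisors) are already packaged inside Theorem~\ref{thm:optimization-of-CopatternSetX}. What remains is a routine root-indexed partitioning of the ground-instance sets together with distributivity of set difference over unions of disjoint families. The only point requiring explicit care is that $\cC$-substitutions do not alter the root symbol of a pattern, which is immediate from the fact that defined symbols never occur in constructor terms.
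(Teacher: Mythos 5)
Your proposal is correct and matches the paper's intent: the paper offers no separate proof, stating only that the corollary is a direct consequence of Theorem~\ref{thm:optimization-of-CopatternSetX}, and your argument is exactly the routine aggregation over $f \in \cD$ that this implies. The root-indexed disjointness you make explicit (ground $\cC$-instances preserve the root symbol, so the families $A_f$ are pairwise disjoint and each $B_f \subseteq A_f$) is the right justification for distributing the set difference over the union.
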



Corollary~\ref{cor:correctness-of-CopatternSetX} immediately implies decidability of quasi-reducibility for left-linear LCTRSs with decidable built-in theories.
\begin{restatable}{theorem}{ThmDecidabilityOfQuasiReducibilityOfLLLCTRSs}
\label{thm:decidability-of-quasi-reducibility-of-LL-LCTRSs}
Let $\cR$ be a finite left-linear LCTRS 
such that $\Sigterm$ is finite and there is no constructor $c: \iota_1 \times \cdots \times \to \iota \in \cC_\cR$ with $\iota \in \cStheory$.
Then, $\cR$ is quasi-reducible if and only if $\CopatternSet{\{ \CTerm{\tilde{\ell}}{\tilde{\phi}} \mid \ell \to r ~[\phi] \in \cR, ~ \mbox{$\ell$ is a pattern}\}} = \emptyset$.
Thus, quasi-reducibility is decidable for such LCTRSs with decidable built-in theories.%
\footnote{It is sufficient that all constraints of constrained rewrite rules in $\cR$ are in a class of a decidable theory.
For example, the integer theory is not decidable, but the constraints of $\cRflist$ and $\cRflist'$ are formulas of Presburger arithmetic.
}
\end{restatable}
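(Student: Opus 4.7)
The plan is to recast quasi-reducibility as an emptiness condition on the complement of $Q = \{\CTerm{\tilde{\ell}}{\tilde{\phi}} \mid \ell \to r ~[\phi] \in \cR,\ \ell \text{ is a pattern}\}$, and then invoke Corollary \ref{cor:correctness-of-CopatternSetX} together with decidability of the built-in theory. By definition, $\cR$ is quasi-reducible iff every ground pattern $f(t_1,\ldots,t_n)$ with $f \in \cD_\cR$ and $t_1,\ldots,t_n \in T(\cC_\cR)$ is a redex. Ground patterns rooted at calculation symbols $f \in \Sigtheory \setminus \Val$ are always reducible via a rule of $\cRcalc$, so the quasi-reducibility check reduces to ground patterns rooted at user-defined symbols, and we apply $\CopatternSet{\cdot}$ with $\cD = \cD_\cR \cap \Sigterm$ and $\cC = \cC_\cR$.

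First I would establish the identity
\[
\{\ell\gamma \mid \ell \to r ~[\phi] \in \cR,\ \ell \text{ is a pattern},\ \gamma \text{ respects } \ell \to r ~[\phi]\} = \GInst[\cC_\cR]{Q},
\]
which translates matching by respecting substitutions into the ground $\cC_\cR$-instance semantics on which $\CopatternSet{\cdot}$ operates. The $\supseteq$ direction is immediate from the definitions of $\tilde{\ell},\tilde{\phi}$ and of respecting substitutions; for $\subseteq$, the crucial input is the hypothesis that no user-defined constructor has a theory sort, which forces every variable of a theory sort in $\ell$ to be instantiated by a value-constant from $\cC_\cR$, matching the constraint on respecting substitutions for logical variables. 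Left-linearity of $\cR$ together with the construction of $\tilde{\cdot}$ gives that each $\tilde{\ell}$ is linear, so $Q$ is a finite set of value-free constrained linear patterns. Corollary \ref{cor:correctness-of-CopatternSetX} is then applicable---and does not actually require $Q$ itself to be pairwise $\cC_\cR$-non-overlapping, because each $\CopatternSet[f]{Q}$ unfolds to $\{\CTerm{f(x_1,\ldots,x_n)}{\symb{true}}\} \ominus Q$, whose singleton dividend is trivially non-overlapping and permits a direct appeal to Theorem \ref{thm:constrained-set-ominus-properties}. This yields $\GInst[\cC_\cR]{\CopatternSet{Q}} = \GInst[\cC_\cR]{\{f(x_1,\ldots,x_n) \mid f \in \cD_\cR \cap \Sigterm\}} \setminus \GInst[\cC_\cR]{Q}$, so quasi-reducibility is equivalent to $\GInst[\cC_\cR]{\CopatternSet{Q}} = \emptyset$, which in turn reduces to $\CopatternSet{Q} = \emptyset$ because $\ominus$ keeps in its output only constrained patterns whose constraints are satisfiable and whose variables admit a ground $\cC_\cR$-instance.

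For decidability under a decidable built-in theory, every recursive step of $\CopatternSet{Q}$ reduces to unifiability of constrained patterns (Definition \ref{def:unifiability-of-constrained-terms}), satisfiability of formulas in the theory, and finite computations of $\Cosubst[\cC_\cR]{\cdot}$; the last is effective because $\Sigterm$ is finite by hypothesis and value-constants have been absorbed into constraints by the value-free transform. Termination follows from the height bound of Theorem \ref{thm:constrained-pattern-ominus-propertiesX}(4) together with the well-founded weight $w$ introduced in the sketch for Theorem \ref{thm:constrained-set-ominus-properties}. I expect the main obstacle to be the bridging identity of the second paragraph: it is precisely there that the theory-sort restriction on user-defined constructors and the separate treatment of calculation rules must be combined carefully to align respecting substitutions in the operational notion of redex with the $\cC_\cR$-ground-instance semantics driving $\CopatternSet{\cdot}$.
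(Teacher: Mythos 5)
Your proof follows essentially the same route as the paper's: both directions of the equivalence are obtained by identifying $\GInst[\cC_\cR]{\CopatternSet{Q}}$ with the set of unmatched ground patterns via Corollary~\ref{cor:correctness-of-CopatternSetX}, and decidability comes from computability of $\ominus$ under a finite $\Sigterm$ and a decidable theory together with termination via the well-founded order $\SuccHN$. Your added care improves on the paper's terse argument in two places: you make explicit the bridge between operational reducibility (respecting substitutions) and the $\cC_\cR$-ground-instance semantics, which is precisely where the no-theory-sort-constructor hypothesis is used, and you correctly note that the pairwise-$\cC_\cR$-non-overlappingness hypothesis of Corollary~\ref{cor:correctness-of-CopatternSetX} need not hold for the left-hand sides of an arbitrary $\cR$ but is harmless because the dividend of each $\CopatternSet[f]{Q}$ is a singleton, so Theorem~\ref{thm:constrained-set-ominus-properties} applies directly.
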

\begin{proof}[Sketch]
Since the built-in theory is decidable, 
by the well-founded order in the proof of Theorem~\ref{thm:constrained-set-ominus-properties},
the set $\CopatternSet{\{ \CTerm{\tilde{\ell}}{\tilde{\phi}} \mid \ell \to r ~[\phi] \in \cR, ~ \mbox{$\ell$ is a pattern}\}}$ is computable, and thus
its emptiness is decidable.
Therefore, quasi-reducibility of $\cR$ is decidable.
A complete proof can be seen in the appendix.
\qed
\end{proof}

\begin{example}
The LCTRS $\cRflist$ in Example~\ref{ex:flist} is not quasi-reducible
because
$
\{ \CTerm{\symb{f}(\var{xs},y)}{\symb{true}} \} \ominus \{ (1), (2), (3) \}
=
\{ (\mathrm{a}), (\mathrm{b}), (\mathrm{c}) \}
$.
On the other hand, the LCTRS $\cRflist'$ is quasi-reducible
because
$
\{ \CTerm{\symb{f}(\var{xs},y)}{\symb{true}} \} \ominus \{ (1), (2), (3), (\mathrm{a}), (\mathrm{b}), (\mathrm{c}) \}
=
\emptyset
$.
\end{example}


\section{Related Work}
\label{sec:related-work}

As mentioned before, to the best of our knowledge, there is no work for difference operators and complement algorithms for constrained patterns, and decidability of quasi-reducibility in the LCTRS setting, while some sufficient conditions have been investigated~\cite{SNSSK09,Kop17}.

There are several works on quasi-reducibility and \emph{sufficient completeness} of TRSs and constrained rewrite systems.
Decidability of quasi-reducibility has been shown in~\cite{KNZ87} without yielding a practical algorithm.
Then, complement algorithms have been proposed~\cite{LM86,LLT90,HA19pro}, which are used as decision procedures for quasi-reducibility of left-linear TRSs.
\emph{Negation elimination} from equational formulas~\cite{Taj93,Fer98} is a well-investigated application of complement algorithms.
\emph{Tree automata techniques}~\cite{tata2007} can be used to obtain complements,
and thus they also yield decision procedures for quasi-reducibility of left-linear TRSs.
Recently, a well-designed decision procedure for quasi-reducibility of TRSs has been proposed~\cite{TY24}.
However, to the best of our knowledge, the above procedures have not yet been extended to constrained rewrite systems, and the extension is not so trivial.

Sufficient completeness is equivalent to quasi-reducibility for terminating systems.
A sufficient condition for sufficient completeness of constrained and conditional rewrite systems based on constrained tree automata techniques has been shown in~\cite{BJ12}.
In addition, as mentioned in Section~\ref{sec:intro}, 
an SMT-based sufficient condition for left-patternless constrained rewrite systems~\cite{SNSSK09,KNM25jlamp} and a procedural sufficient condition for quasi-reducibility of LCTRSs~\cite{Kop17} have been shown.

For programming languages, there are some works on the exhaustiveness checking.
In \cite{Mar07}, a method for examining pattern-matching anomalies of ML programming languages has been proposed. 
This method checks two points: there are no useless patterns, and it considers all patterns. This method is implemented on OCaml but is also useful for Haskell. 
However, guards are not considered:
Standard ML of New Jersey does not allow us to write guards for pattern matching;
a recent version of the GHC implementation of Haskell does not implement any form of exhaustiveness checking even without guards, while there is a check for redundancy of patterns, which does not work for guards.
When compiling programs, OCaml compilers warn users about guard constraints attached to patterns so that they can take care of the exhaustiveness of constrained patterns, which is not checked by the compilers.
As mentioned before, due to the SMT solving for supported built-in theories, the exhaustiveness checking for guarded patterns is not realistic.


Unification of constrained terms in Definition~\ref{def:unifiability-of-constrained-terms} is a bit different from \emph{unification modulo built-ins}~\cite[Definition~11]{CAL18}.
Unlike ours, the unification is defined for terms, and unifiers are pairs of substitutions and constraints.

Another related work on unification of constrained terms is
unification with abstraction and theory instantiation for SMT solving~\cite{RSV18}.
The algorithm for such unification is very similar to the syntactic unification procedure~\cite[Section~4.6]{BN98}, while it considers underlying theories and disequalities as constraints.
Unlike the algorithm, our unification of constrained terms $\CTerm{s}{\phi},\CTerm{t}{\psi}$ first uses syntactic unification to obtain an mgu $\sigma$ of $s$ and $t$, and then check satisfiability of $\phi\sigma \land \psi\sigma$.

\emph{Matching logic}~\cite{Ros17,CR19lics,CR19calco} (ML, for short) is a first-order logic variant
to reason about structures by allowing formulas to represent terms, and is expressive enough and very powerful.
For example, $s \ominus t$ with $\Var(s) \cap \Var(t) = \emptyset$ can be represented by the formula $s \land \neg t$.
However, decidability of ML formulas is not clear due to its generality.








\section{Conclusion}
\label{sec:conclusion}

In this paper, we extended the difference operator and the complement algorithm in the unconstrained setting to constrained linear patterns, proposing a difference operator and a complement algorithm for constrained linear patterns used in LCTRSs that have finitely many user-defined function symbols and have no user-defined constructor term with a theory sort for built-in value-constants.
Then, we showed that quasi-reducibility is decidable for such LCTRSs.
Many LCTRSs obtained from practical programs by existing transformations belong to the class.
We will implement the algorithm in our LCTRS tool as future work.

For brevity, we had some restrictions, e.g., non-existence of user-defined constructor terms with theory sorts.
A future work is to drop such restrictions, proposing a difference operator and a complement algorithm for more general classes of LCTRSs.

The complexity of the complement algorithm for constrained linear patterns relies on that of SMT solving used in the computation.
We are interested in complexity of the algorithm without the complexity of SMT solving.
Analysis of such complexity is another future work.


%
%
%
\bibliographystyle{splncs04}
\bibliography{mybiblio}

\appendix
\section{Auxiliary Claims for Missing Proofs}

Let $\Sigma' \subseteq \Sigma$.
For unifiable terms $s,t$, their $\Sigma'$-mgu $\sigma$ splits $\GInst[\Sigma']{s} \cup \GInst[\Sigma']{t}$ to the disjoint sets
$\GInst[\Sigma']{s\sigma}$, $\GInst[\Sigma']{s}\setminus \GInst[\Sigma']{s\sigma}$, and $\GInst[\Sigma']{t}\setminus \GInst[\Sigma']{t\sigma}$.
Note that $\GInst[\Sigma']{s\sigma} = \GInst[\Sigma']{t\sigma}$.

\begin{restatable}{proposition}{PropDisjointnessOfFUnifiableTerms}
\label{prop:disjointness-of-F-unifiable-terms}
Let $\Sigma' \subseteq \Sigma$, $s,t$ be terms such that $\Var(s)\cap\Var(t)=\emptyset$ and $s,t$ are $\Sigma'$-unifiable.
Let $\sigma = \mgu[\Sigma'](s,t)$.
Then, both of the following statements hold:
\begin{enumerate}
\renewcommand{\labelenumi}{(\arabic{enumi})}
    \item 
$\GInst[\Sigma']{s\sigma}$, $\GInst[\Sigma']{s}\setminus \GInst[\Sigma']{s\sigma}$, and $\GInst[\Sigma']{t}\setminus \GInst[\Sigma']{t\sigma}$ are  pairwise disjoint, 
    and
    \item
$
\GInst[\Sigma']{s} \cup \GInst[\Sigma']{t} = \GInst[\Sigma']{s\sigma} \uplus (\GInst[\Sigma']{s}\setminus \GInst[\Sigma']{s\sigma}) \uplus (\GInst[\Sigma']{t}\setminus \GInst[\Sigma']{t\sigma})
$.
\end{enumerate}
\end{restatable}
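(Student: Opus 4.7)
The plan is to reduce both statements to the single pivotal identity
\[
\GInst[\Sigma']{s\sigma} \;=\; \GInst[\Sigma']{s} \cap \GInst[\Sigma']{t},
\]
after which (1) and (2) are elementary set theory. So I would first isolate this equality as an auxiliary lemma and prove it from the defining property of a $\Sigma'$-mgu.

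For the inclusion $\subseteq$, I would simply use that $s\sigma = t\sigma$ and that applying further $\Sigma'$-substitutions preserves membership in $\GInst[\Sigma']{\cdot}$, hence $\GInst[\Sigma']{s\sigma} \subseteq \GInst[\Sigma']{s}$ and $\GInst[\Sigma']{s\sigma} = \GInst[\Sigma']{t\sigma} \subseteq \GInst[\Sigma']{t}$. The $\supseteq$ direction is the main technical step: given a ground $\Sigma'$-term $u$ with $u = s\theta_1 = t\theta_2$ for $\Sigma'$-substitutions $\theta_1,\theta_2$, I would exploit $\Var(s)\cap\Var(t)=\emptyset$ to combine $\theta_1|_{\Var(s)}$ and $\theta_2|_{\Var(t)}$ into a single $\Sigma'$-substitution $\theta$ that unifies $s$ and $t$. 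By the most general property of $\sigma = \mgu[\Sigma'](s,t)$, there exists $\delta$ with $\theta = \delta \circ \sigma$ on $\Var(s,t)$; then $u = s\theta = (s\sigma)\delta$, and since $u$ is ground and $\theta$ is a $\Sigma'$-substitution, $\delta$ restricted to $\Var(s\sigma)$ can be taken ground and $\Sigma'$-valued, giving $u \in \GInst[\Sigma']{s\sigma}$.

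Once the auxiliary identity is in hand, statement (1) follows by case analysis on the three pairs: $\GInst[\Sigma']{s\sigma}$ is disjoint from $\GInst[\Sigma']{s}\setminus\GInst[\Sigma']{s\sigma}$ by definition, and disjoint from $\GInst[\Sigma']{t}\setminus\GInst[\Sigma']{t\sigma}$ because $\GInst[\Sigma']{s\sigma} = \GInst[\Sigma']{t\sigma}$; and the third pair is disjoint because any $u$ in the intersection would lie in $\GInst[\Sigma']{s}\cap\GInst[\Sigma']{t} = \GInst[\Sigma']{s\sigma}$ while simultaneously being excluded from $\GInst[\Sigma']{s\sigma}$. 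Statement (2) is then the standard decomposition $B\cup C = (B\cap C) \cup (B\setminus (B\cap C)) \cup (C\setminus (B\cap C))$ applied with $B = \GInst[\Sigma']{s}$, $C = \GInst[\Sigma']{t}$, and $B\cap C = \GInst[\Sigma']{s\sigma}$, with the pairwise disjointness supplied by (1).

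The only subtle point, and hence the main obstacle, is the $\supseteq$ direction of the auxiliary identity: one must carefully argue that $\delta$ may be chosen as a $\Sigma'$-substitution (not merely a general substitution) so that the witnessing instance lies in $\GInst[\Sigma']{s\sigma}$ rather than some larger set of instances. This is where the fact that $\sigma$ is specifically a $\Sigma'$-mgu (and not just an mgu) is used; the disjointness $\Var(s)\cap\Var(t)=\emptyset$ is needed precisely to form a single unifier $\theta$ out of $\theta_1,\theta_2$ without variable clashes.
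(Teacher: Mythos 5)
Your proposal is correct and follows essentially the same route as the paper: the paper's own argument for the only non-trivial disjointness pair is exactly your $\supseteq$ step (combining the two instance substitutions into a single unifier $\theta \cup \theta'$ via $\Var(s)\cap\Var(t)=\emptyset$ and invoking $\sigma \lesssim \theta$), just phrased as a direct proof by contradiction rather than isolated as the identity $\GInst[\Sigma']{s\sigma} = \GInst[\Sigma']{s} \cap \GInst[\Sigma']{t}$. Your explicit remark that the witnessing $\delta$ must be (and can be) taken as a ground $\Sigma'$-substitution is a detail the paper glosses over, so flagging it is a small improvement rather than a deviation.
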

\begin{proof}
We first show the first claim~(1).
It is clear that $\GInst[\Sigma']{s\sigma}$ and $\GInst[\Sigma']{s}\setminus \GInst[\Sigma']{s\sigma}$ are disjoint and $\GInst[\Sigma']{s\sigma}$ and $\GInst[\Sigma']{t}\setminus \GInst[\Sigma']{t\sigma}$ are disjoint.
We show that $\GInst[\Sigma']{s}\setminus \GInst[\Sigma']{s\sigma}$ and $\GInst[\Sigma']{t}\setminus \GInst[\Sigma']{t\sigma}$ are disjoint.
We proceed by contradiction.
Assume that $(\GInst[\Sigma']{s}\setminus \GInst[\Sigma']{s\sigma}) \cap (\GInst[\Sigma']{t}\setminus \GInst[\Sigma']{t\sigma}) \ne \emptyset$.
Then, there exist $\Sigma'$-substitutions $\theta,\theta'$ such that $s\theta = t\theta' \in \GInst[\Sigma']{s} \cap \GInst[\Sigma']{t}$, $s\theta \notin \GInst[\Sigma']{s\sigma}$, and $t\theta' \notin \GInst[\Sigma']{t\sigma}$.
Since $s\theta = t\theta'$ and $\Var(s) \cap \Var(t) = \emptyset$, $(\theta \cup \theta')$ is a unifier of $s,t$ and thus $\sigma \lesssim \theta$ and $\sigma \lesssim \theta'$.
Thus, we have that $s\theta \in \GInst[\Sigma']{s\sigma}$ and $t\theta' \in \GInst[\Sigma']{t\sigma}$.
This contradicts the fact that $s\theta \notin \GInst[\Sigma']{s\sigma}$ and $t\theta' \notin \GInst[\Sigma']{t\sigma}$.

Next, we show the second claim~(2).
By definition, it is clear that $\GInst[\Sigma']{s} \supseteq \GInst[\Sigma']{s\sigma}$ and $\GInst[\Sigma']{t} \supseteq \GInst[\Sigma']{t\sigma}$ and hence 
$\GInst[\Sigma']{s} = \GInst[\Sigma']{s\sigma} \cup (\GInst[\Sigma']{s}\setminus \GInst[\Sigma']{s\sigma})$
and
$\GInst[\Sigma']{t} = \GInst[\Sigma']{t\sigma} \cup (\GInst[\Sigma']{t}\setminus \GInst[\Sigma']{t\sigma})$.
Since $s\sigma = t\sigma$, we have that $\GInst[\Sigma']{s\sigma} = \GInst[\Sigma']{t\sigma}$.
Therefore, by the first claim~(1), we have that
$
\GInst[\Sigma']{s} \cup \GInst[\Sigma']{t} = \GInst[\Sigma']{s\sigma} \uplus (\GInst[\Sigma']{s}\setminus \GInst[\Sigma']{s\sigma}) \uplus (\GInst[\Sigma']{t}\setminus \GInst[\Sigma']{t\sigma})
$.
\qed
\end{proof}

We have a sufficient condition for $\cC$-non-overlappingness of linear terms.
\begin{restatable}{proposition}{PropSufficientConditionForFDisjointnessOfTerms}
\label{prop:sufficient-condition-for-F-non-overlappingness-of-terms}
Let 
$s,t$ be terms such that $\Pos(s) \cap \Pos(t) \ne \emptyset$.
If $s|_p$ and $t|_p$ are $\cC$-non-overlapping for some $p \in \Pos(s) \cap \Pos(t)$, then $s$ and $t$ are $\cC$-non-overlapping.
\end{restatable}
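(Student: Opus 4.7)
The plan is to prove the contrapositive: assume $s$ and $t$ share a ground $\cC$-instance and show that, for every $p \in \Pos(s) \cap \Pos(t)$, the subterms $s|_p$ and $t|_p$ also share a ground $\cC$-instance. First I would pick some $u \in \GInst[\cC]{s} \cap \GInst[\cC]{t}$, so there exist a $\Var(s)$-ground $\cC$-substitution $\sigma$ and a $\Var(t)$-ground $\cC$-substitution $\sigma'$ with $u = s\sigma = t\sigma'$. Next I would fix an arbitrary $p \in \Pos(s) \cap \Pos(t)$, observe that $p \in \Pos(s) \subseteq \Pos(s\sigma) = \Pos(u)$ so that $u|_p$ is well-defined, and then apply the standard identity $(v\tau)|_q = (v|_q)\tau$ (for any term $v$, substitution $\tau$, and $q \in \Pos(v)$) to both expressions for $u$. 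This yields simultaneously $u|_p = (s|_p)\sigma \in \GInst[\cC]{s|_p}$ and $u|_p = (t|_p)\sigma' \in \GInst[\cC]{t|_p}$, so $u|_p$ witnesses that $s|_p$ and $t|_p$ are $\cC$-overlapping, contradicting the hypothesis that some common position gives $\cC$-non-overlapping subterms.

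There is no real obstacle here: the result is essentially a one-line consequence of how instantiation commutes with subterm selection along a position of the original term, and no linearity, sort-compatibility, or other structural assumption on $s$ and $t$ is needed. The only point worth pausing on is checking that the fixed common position $p$ remains a position of the instance $u$ so that $u|_p$ is meaningful, and this follows immediately from $\Pos(s) \subseteq \Pos(s\sigma)$, even when $s|_p$ happens to be a variable.
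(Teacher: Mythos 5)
Your proof is correct, but it takes a different route from the paper's. The paper proceeds by structural induction on $s$: it peels off the position $p = ip'$ one constructor layer at a time, applies the induction hypothesis to the arguments $s_i, t_i$, and derives a contradiction from a hypothetical common instance of $s$ and $t$. You instead prove the contrapositive in one step, using the commutation identity $(v\tau)|_q = (v|_q)\tau$ for $q \in \Pos(v)$ to project a common ground $\cC$-instance $u = s\sigma = t\sigma'$ down to $u|_p = (s|_p)\sigma = (t|_p)\sigma'$, which immediately witnesses overlap of the subterms at \emph{every} common position, not just the chosen one. Your argument is shorter and avoids induction entirely; its only obligations are that $p$ remains a position of the instance (which you correctly justify via $\Pos(s) \subseteq \Pos(s\sigma)$) and that the restriction of $\sigma$ to $\Var(s|_p) \subseteq \Var(s)$ is still a ground $\cC$-substitution, which is immediate. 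The paper's induction buys nothing extra here beyond fitting the style of the surrounding structural-induction proofs; if anything, your version makes explicit the slightly stronger fact that a single common instance of $s$ and $t$ forces overlap at all shared positions simultaneously.
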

\begin{proof}
We prove this claim by structural induction on $s$.
Let $p \in \Pos(s) \cap \Pos(t)$ such that $s|_p$ and $t|_p$ are $\cC$-non-overlapping.
Then, neither $s|_p$ nor $t|_p$ is a variable and hence neither $s$ nor $t$ is a variable.
Since the case where $p=\epsilon$ or $s$ and $t$ are rooted by different function symbols is trivial, we consider the remaining case where $p \ne \epsilon$ and $s$ and $t$ are rooted by the same function symbol.
Let $s = f(s_1,\ldots,s_n)$, $t = f(t_1,\ldots,t_n)$, and $p = ip'$ such that $1 \leq i \leq n$.
Note that $p' \in \Pos(s_i) \cap \Pos(t_i)$ by definition.
By the induction hypothesis, $s_i$ and $t_i$ are $\cC$-non-overlapping.
We now assume that $s$ and $t$ are not $\cC$-non-overlapping.
Then, by definition, there exist substitutions $\sigma,\sigma'$ such that $s\sigma = t\sigma' \in \GInst[\cC]{s} \cap \GInst[\cC]{t}$, and thus $s_i\sigma = t_i\sigma' \in \GInst[\Sigma']{s} \cap \GInst[\cC]{t}$.
Thus, $s_i$ and $t_i$ are not $\cC$-non-overlapping.
This contradicts the $\cC$-non-overlappingness of $s_i$ and $t_i$.
Therefore, $s$ and $t$ are $\cC$-non-overlapping.
\qed
\end{proof}

\section{Proof of Proposition~\ref{prop:sufficient-condition-of-linearity-of-mgu}}

To prove Proposition~\ref{prop:sufficient-condition-of-linearity-of-mgu}, we generalize it to unification problems $S$ of the form $\{ {s_1 =^? t_1},\ldots,{s_n =^? t_n} \}$ and use the unification procedure in~\cite[Section~4.6]{BN98} as follows:
\begin{itemize}
    \item For a term $s$ and a linear term $t$ with $\Var(s) \cap \Var(t) = \emptyset$, we consider the unification problem $\{ {s =^? t} \}$.
    \item The disjointness of variables ($\Var(s) \cap \Var(t) = \emptyset$) is formulated as 
    the \emph{LR-separation} of $S$: $\Var(s_1,\ldots,s_n) \cap \Var(t_1,\ldots,t_n)=\emptyset$.
    \item The linearity of $t$ is formulated as the \emph{R-linearity} of $S$:
    \begin{itemize}
        \item $t_1,\ldots,t_n$ are linear,
            and
        \item $\Var(t_i) \cap \Var(t_j) = \emptyset$ for all $1 \leq i < j \leq n$.
    \end{itemize}
    \item To preserve the two properties by transformation steps of unification problems, we do not use \textsf{Orient} that replaces $t =^? x$ by $x =^? t$, where $t$ is not a variable.
\end{itemize}

We first recall the unification problems.
\begin{definition}[unification problem~\cite{BN98}] 
A \emph{unification problem} $S$ is a finite set of equations $s =^? t$ over terms:
$S = \{ {s_1 =^? t_1}, \ldots, {s_n =^? t_n} \}$.
A \emph{unifier} or \emph{solution} of $S$ is a substitution $\sigma$ such that
$s_i\sigma = t_i\sigma$ for all $1 \leq i \leq n$.
$\cU(S)$ denotes the set of unifiers of $S$.
$S$ is called \emph{unifiable} if $\cU(S) \ne \emptyset$.
%
A substitution $\sigma$ is called a \emph{most general unifier} (mgu, for short) of $S$ if $\sigma$ is a least element of $\cU(S)$, i.e.,
$\sigma \in \cU(S)$ and $\sigma \lesssim \sigma'$ for all $\sigma' \in \cU(S)$.
\end{definition}
Note that $\sigma$ is an mgu of terms $s,t$ if and only if $\sigma$ is an mgu of the unification problem $\{ {s =^? t} \}$.
A substitution $\sigma$ is called \emph{idempotent} if $\sigma = (\sigma \circ \sigma)$ (i.e., $\Ran(\sigma) \cap \VRan(\sigma) = \emptyset$~\cite[Lemma~4.5.7]{BN98}).


\begin{theorem}[\cite{BN98}] 
If a unification problem $S$ has a solution, then $S$ has an idempotent mgu.
\end{theorem}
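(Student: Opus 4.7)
The plan is to invoke the standard syntactic unification algorithm from \cite{BN98} Section~4.6, analyze its output on a solvable input, and verify that the resulting substitution is idempotent. First, I would recall the transformation rules \textsf{Delete}, \textsf{Decompose}, \textsf{Orient} (Swap), \textsf{Eliminate}, together with the failure rules \textsf{Clash} and \textsf{Occurs-check}. A direct inspection shows that each non-failure rule preserves $\cU(S)$, while the failure rules apply only when $\cU(S) = \emptyset$; hence if $S$ is solvable, no failure rule ever fires. Termination of the rewrite relation on unification problems is obtained from a well-founded lexicographic measure, e.g.\ (number of variables of $S$ not yet in solved position, multiset of term sizes, number of equations of the form $t =^? x$ with $t$ non-variable), which strictly decreases under every applicable non-failure rule.

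Next, I would inspect the normal forms. A problem $S^*$ in normal form containing no failure and to which no non-failure rule applies must be a \emph{solved form} $\{x_1 =^? t_1, \ldots, x_n =^? t_n\}$ where $x_1, \ldots, x_n$ are pairwise distinct variables and $x_i \notin \Var(t_1, \ldots, t_n)$ for every $i$: any other shape would enable some rule to fire or a failure to trigger. From $S^*$, I would read off the substitution $\sigma = \{x_1 \mapsto t_1, \ldots, x_n \mapsto t_n\}$.

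Then $\sigma$ is verified to be an mgu of $S^*$ in two short steps: the side condition $x_i \notin \Var(t_i)$ gives $x_i\sigma = t_i = t_i\sigma$, so $\sigma \in \cU(S^*)$; and for any $\theta \in \cU(S^*)$ one has $x_i\theta = t_i\theta = (x_i\sigma)\theta$ for all $i$, so $\theta = \theta \circ \sigma$, showing $\sigma \lesssim \theta$. Since the transformation rules preserve $\cU$, $\sigma$ is also an mgu of $S$. Finally, idempotence is immediate from the solved-form conditions: $\Dom(\sigma) = \{x_1,\ldots,x_n\}$, $\VRan(\sigma) \subseteq \Var(t_1,\ldots,t_n)$, and these two sets are disjoint by the defining property of solved forms; hence $\Dom(\sigma) \cap \VRan(\sigma) = \emptyset$, which is equivalent to $\sigma = \sigma \circ \sigma$.

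The main obstacle I anticipate is the careful statement and termination proof of the transformation system, in particular picking a measure that decreases strictly under \textsf{Eliminate} (which can increase term sizes elsewhere). Once termination and the characterization of normal forms as either failure or solved form are in place, the mgu property and idempotence are routine algebraic consequences of the solved-form conditions, so the bulk of the argument can simply cite \cite{BN98}.
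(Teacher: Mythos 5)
Your proposal is correct: the paper states this result without proof, citing \cite{BN98}, and your argument is precisely the standard one from that reference (unifier preservation under the transformation rules, termination via a lexicographic measure, characterization of non-failure normal forms as solved forms, and reading off an idempotent mgu whose domain and variable range are disjoint). Nothing is missing; the only cosmetic difference is that the paper later works with a slightly generalized solved form and an \textsf{EliminateR} rule in place of \textsf{Orient}, but that variation is irrelevant to this particular theorem.
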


Our goal is to show that, given a term $s$ and a linear term $t$ with $\Var(s) \cap \Var(t) = \emptyset$, 
if $\{ {s =^? t} \}$ has an mgu $\sigma$ as as solution, then $x\sigma$ is linear for any variable $x \in \Var(s)$.
To distinguish variables in $s,t$, we distinguish $u =^? u'$ and $u' =^? u$
and we do not use the transformation step \textsf{Orient} which replaces $t =^? x$ by $x =^? t$, where $t$ is not a variable.
For this reason, we slightly extend solved forms~\cite[Definition~4.6.1]{BN98} of unification problems.
\begin{definition}[solved form]
A unification problem $S$ is said to be \emph{in solved form} if 
$S$ is of the following form
\[
\{ {x_1 =^? t_1}, \ldots, {x_m =^? t_m} \} \uplus \{ {t_{m+1} =^? x_{m+1}}, \ldots, {t_n =^? x_n} \}
\]
where
\begin{itemize}
    \item $0 \leq m \leq n$,
    \item none of $t_{m+1},\ldots,t_n$ is a variable,
    \item $x_1,\ldots,x_m$ are pairwise different variables, 
        and
    \item $\{x_1,\ldots,x_n\} \cap \Var(t_1,\ldots,t_n) = \emptyset$.
\end{itemize}
Note that $t_1,\ldots,t_m$ may be variables.
We denote $\{ x_i \mapsto t_i \mid 1 \leq i \leq n\}$ by $\vec{S}$.
\end{definition}


\begin{lemma}[\cite{BN98}] 
If a unification problem $S$ is in solved form, then $\vec{S}$ is an idempotent mgu of $S$.
\end{lemma}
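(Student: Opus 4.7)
The plan is to verify four properties of $\vec{S}$ in turn: that it is a well-defined substitution, that it is idempotent, that it unifies $S$, and that it is maximally general. First, I would note that the pairwise distinctness of $x_1,\ldots,x_n$---which the solved-form definition forces, since otherwise $\vec{S}$ would assign multiple right-hand sides to a single variable---makes $\vec{S}$ a well-defined substitution with $\Dom(\vec{S}) = \{x_1,\ldots,x_n\}$ and $\VRan(\vec{S}) \subseteq \Var(t_1,\ldots,t_n)$. Idempotence then follows immediately from the condition $\{x_1,\ldots,x_n\} \cap \Var(t_1,\ldots,t_n) = \emptyset$, i.e.\ $\Dom(\vec{S}) \cap \VRan(\vec{S}) = \emptyset$, which by the standard characterization (cf.\ \cite[Lemma~4.5.7]{BN98}) is equivalent to $\vec{S} \circ \vec{S} = \vec{S}$.

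To see that $\vec{S}$ is a unifier, I would inspect each equation separately: for $x_i =^? t_i$ with $1 \leq i \leq m$, we have $x_i\vec{S} = t_i$ on the one hand and $t_i\vec{S} = t_i$ on the other, because $\Var(t_i) \cap \Dom(\vec{S}) = \emptyset$; equations $t_i =^? x_i$ with $m+1 \leq i \leq n$ are handled symmetrically. For the mgu property, I would take an arbitrary unifier $\theta \in \cU(S)$ and show $\theta = \theta \circ \vec{S}$ by case analysis on variables: if $x \notin \Dom(\vec{S})$, then $x(\theta \circ \vec{S}) = x\theta$ holds trivially; if $x = x_i$ for some $i$, then $x_i(\theta \circ \vec{S}) = (x_i\vec{S})\theta = t_i\theta = x_i\theta$, where the last equality uses the fact that $\theta$ unifies the corresponding equation of $S$. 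Hence $\vec{S} \lesssim \theta$, which is precisely the mgu property.

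I do not foresee any real obstacle, as the lemma is essentially a bookkeeping exercise once the definition of solved form is pinned down. The one slightly subtle point worth verifying is the implicit distinctness of the left-hand variables across the two halves of the partition---in particular, that $x_{m+1},\ldots,x_n$ are pairwise distinct and disjoint from $x_1,\ldots,x_m$---but this can be extracted from the combination of the disjoint union $\uplus$ used in the solved-form definition together with the condition $\{x_1,\ldots,x_n\} \cap \Var(t_1,\ldots,t_n) = \emptyset$, or alternatively made explicit in the definition without altering the argument.
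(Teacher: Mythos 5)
Your proof is correct and is exactly the standard argument that the paper implicitly relies on: the lemma is stated here without proof (it is imported from \cite{BN98}), and your four-step verification (well-definedness, idempotence via $\Dom(\vec{S})\cap\VRan(\vec{S})=\emptyset$, unifier by inspecting each equation, and $\vec{S}\lesssim\theta$ witnessed by $\theta=\theta\circ\vec{S}$) is the textbook proof, correctly adapted to the paper's extended solved form that also admits R-solved equations $t=^?x$. Your closing caveat is well taken: the paper's definition only asserts pairwise distinctness of $x_1,\ldots,x_m$ and does not explicitly forbid repetitions among $x_{m+1},\ldots,x_n$ or collisions with $x_1,\ldots,x_m$, which is needed for $\vec{S}$ to be a well-defined substitution; this is a minor imprecision in the definition rather than in your argument, and making it explicit changes nothing downstream.
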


Because of our solved form, by introducing a variant of \textsf{Eliminate}, we do not have to use the transformation step \textsf{Orient} in order to solve given unification problems.
\begin{definition}[unification transformation $\UnifStep$]
The transformation of unification problems is defined as a binary relation ${\UnifStep} = ({\UnifStep[Delete]} \cup {\UnifStep[Decompose]} \cup {\UnifStep[EliminateL]} \cup {\UnifStep[EliminateR]})$, where 
\begin{itemize}
    \item $\{{ t =^? t} \} \uplus S \UnifStep[Delete] S$,
    \item $\{ {f(s_1,\ldots,s_n) =^? f(t_1,\ldots,t_n)} \} \uplus S \UnifStep[Decompose] \{ {s_1 =^? t_1}, \ldots, {s_n =^? t_n} \} \cup S$,
    \item $\{ {x =^? t} \} \uplus S \UnifStep[EliminateL] \{ {x =^? t} \} \cup (\{x\mapsto t\}(S))$,
    if $x \in \Var(S) \setminus \Var(t)$,
        and
    \item $\{ {t =^? x} \} \uplus S \UnifStep[EliminateR] \{ {t =^? x} \} \cup (\{x\mapsto t\}(S))$,
    if $x \in \Var(S) \setminus \Var(t)$.
\end{itemize}
\end{definition}
In the original framework, for equations of the form $t =^? x$ with $t \notin \cV$, we apply \textsf{Orient} to $\{ {t =^? x} \} \uplus S$ and then apply \textsf{EliminateL} to $\{ {x =^? t} \} \cup S$ if necessary.
This role is done by \textsf{EliminateR} in our modified transformation.
Thus, our transformation has the same properties of the original one.
\begin{lemma}[\cite{BN98}] 
All of the following statements hold:
\begin{enumerate}
    \item The unification transformation $\UnifStep$ is well-founded.
    \item For unification problems $S,S'$, if $S \UnifStep S'$, then $\cU(S) = \cU(S')$.
    \item If a unification problem $S$ is solvable, then there exists a unification problem $S'$ in solved form such that $S \UnifStep^* S'$.
\end{enumerate}
\end{lemma}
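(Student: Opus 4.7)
The plan is to adapt the classical proof of the unification procedure in \cite{BN98}, verifying that the substitution of \textsf{EliminateR} for \textsf{Orient} preserves all three properties. The three parts can be handled essentially independently: well-foundedness by a suitable lexicographic measure, solution invariance by case analysis on each rule, and termination in solved form by showing every $\UnifStep$-irreducible unifiable problem is in solved form.

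For (1), I would assign to each unification problem $S$ the lexicographic pair $(u(S), |S|)$, where $u(S)$ is the number of variables of $S$ that either occur in a proper subterm of some side, or occur in two or more distinct equations of $S$, and $|S|$ is the total symbol count. \textsf{Delete} and \textsf{Decompose} strictly decrease $|S|$ and never increase $u(S)$, since they only delete or expose already-present symbols. For \textsf{EliminateL} and \textsf{EliminateR}, the side condition $x \in \Var(S) \setminus \Var(t)$ ensures that after the step $x$ appears in exactly one equation, in isolated position; as $t$ contains no $x$, substituting $\{x \mapsto t\}$ in the remaining equations cannot raise the status of any other variable. Hence $u(S)$ strictly decreases and well-foundedness follows from that of lex on $\mathbb{N}^2$.

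For (2), a case analysis suffices. \textsf{Delete} is trivial. For \textsf{Decompose}, $\sigma$ unifies $f(s_1,\ldots,s_n) =^? f(t_1,\ldots,t_n)$ iff $\sigma$ unifies every $s_i =^? t_i$. For the Eliminate rules, any unifier of $\{x =^? t\}$ (respectively $\{t =^? x\}$) satisfies $x\sigma = t\sigma$, whence $u\sigma = u\{x \mapsto t\}\sigma$ for all terms $u$, so $\sigma \in \cU(S)$ iff $\sigma \in \cU(\{x \mapsto t\}(S))$. The symmetric treatment of the left and right orientations is precisely what allows us to dispense with \textsf{Orient}. For (3), I would show that every $\UnifStep$-irreducible unifiable $S$ is in solved form: if no rule applies then no equation has the form $t =^? t$, equations with function symbols on both sides cannot occur (else \textsf{Decompose} applies, while mismatched roots would contradict unifiability), and every variable appearing in an equation $x =^? t$ or $t =^? x$ satisfies $x \notin \Var(t)$ (occurs check via unifiability) and does not occur elsewhere in $S$ (else an Eliminate rule applies). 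These are exactly the conditions defining the extended solved form.

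The main obstacle I anticipate is the precise bookkeeping in (1), namely verifying that \textsf{EliminateL} and \textsf{EliminateR} do indeed decrease $u(S)$ without any variable $y \neq x$ changing its ``isolated/non-isolated'' status. The side condition $x \notin \Var(t)$ is essential: without it, substituting $x$ by $t$ could duplicate occurrences of variables of $t$ and inflate $u$. Once this bookkeeping is fixed, the remainder of the argument is a routine adaptation of \cite{BN98}, and the non-standard shape of our solved form (allowing equations of both orientations $x =^? t$ and $t =^? x$) is accommodated seamlessly because \textsf{EliminateR} handles the latter directly.
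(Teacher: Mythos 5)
The paper offers no proof of this lemma---it is cited from \cite{BN98} with the remark that replacing \textsf{Orient} by \textsf{EliminateR} preserves the properties of the original procedure---so the comparison here is only against the standard argument. Your treatment of (2) is correct, and (3) is fine modulo (1). The genuine gap is in (1): your measure $(u(S),|S|)$ does not decrease under the Eliminate rules. Take $S=\{x=^?y,\ f(x)=^?z\}$. Here $u(S)=1$ (only $x$ is counted: it occurs in the proper subterm $f(x)$; $y$ and $z$ occur once each, as whole sides) and $|S|=5$. Applying $\UnifStep[EliminateL]$ to $x=^?y$ yields $\{x=^?y,\ f(y)=^?z\}$, where $x$ is no longer counted but $y$ now is (it occurs in two equations, one occurrence proper), so the pair is still $(1,5)$. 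This refutes your claim that substituting $\{x\mapsto t\}$ ``cannot raise the status of any other variable'': when $t$ contains variables, those variables are propagated into the other equations and can lose their isolated status. The classical fix is to make the first component asymmetric---count variables that are not \emph{solved}, where $x$ is solved iff it occurs exactly once, as the left-hand side of an equation, or as the right-hand side of an equation whose left-hand side is a non-variable; with that definition the example above drops from $2$ unsolved variables to $1$.

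There is a second, sharper problem you should be aware of: with $\UnifStep[EliminateR]$ as literally defined (no side condition $t\notin\cV$), statement (1) is in fact \emph{false}. In the example above, the equation $x=^?y$ also matches the $\UnifStep[EliminateR]$ pattern with $t=x$, and since $y\in\Var(\{f(y)=^?z\})\setminus\Var(x)$, one step of $\UnifStep[EliminateR]$ maps $\{x=^?y,\ f(y)=^?z\}$ back to $\{x=^?y,\ f(x)=^?z\}$, giving an infinite alternation of $\UnifStep[EliminateL]$ and $\UnifStep[EliminateR]$. So no measure can succeed; one must first restrict $\UnifStep[EliminateR]$ to non-variable $t$ (as the definitions of R-solved equations and of the solved form already implicitly assume), and only then does the BN98-style measure go through. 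Your part (3) is otherwise sound---an irreducible unifiable problem does satisfy all clauses of the extended solved form---but it inherits the dependence on a correct (1).
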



The application of a substitution $\sigma$ to a unification problem $S$ 
is denoted by $\sigma(S)$:
$\sigma(S) =  \{ {s\sigma =^? t\sigma} \mid (s =^? t) \in S \}$.
The set of variables appearing in $S$ 
is denoted by $\Var(S)$:
$\Var(S) = \bigcup_{(s =^? t) \in S} \Var(s,t)$.
The set of variables appearing in the left-hand sides of equations in $S$ is denoted by $\Var_L(S)$:
$\Var_L(S) = \bigcup_{(s =^? t) \in S} \Var(s)$.
The set of variables appearing in the right-hand sides of equations in $S$ is denoted by $\Var_R(S)$:
$\Var_R(S) = \bigcup_{(s =^? t) \in S} \Var(t)$.
Note that $\Var(S) = \Var_L(S) \cup \Var_R(S)$.

An equation $x =^? t$ in a unification problem $S$ is called \emph{L-solved in S} if $x \notin \Var(S \setminus \{ {x =^? t} \}) \cup \Var(t)$.
An equation $t =^? x$ in $S$ is called \emph{R-solved in S} if $x \notin \Var(S \setminus \{ {t =^? x} \}) \cup \Var(t)$ and $t \notin \cV$.
Note that there is no L-solved and R-solved equation in $S$.
An equation in $S$ is called \emph{solved in $S$} if it is L-solved or R-solved in $S$.
An equation in $S$ is called \emph{unsolved in $S$} if it is not solved in $S$ (i.e., it is neither L-solved nor R-solved in $S$).


We call a unification problem $S$ \emph{LR-separated} if $\Var_L(S) \cap \Var_R(S) = \emptyset$.
We call a unification problem $\{ {s_1 =^? t_1}, \ldots, {s_n =^? t_n} \}$ \emph{R-linear} if 
    $t_1,\ldots,t_n$ are linear
        and
    $\Var(t_i) \cap \Var(t_j) = \emptyset$ for all $1 \leq i < j \leq n$.

The transformation step $\UnifStep[EliminateR]$ is not applicable to any LR-separated and R-linear unification problem.
\begin{lemma}
\label{lem:EliminateR-is-not-applicable}
If a unification problem $S$ is LR-separated and R-linear, then $S \not\UnifStep[EliminateR] {}$.
\end{lemma}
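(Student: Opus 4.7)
The plan is to argue by cases on where the ``extra'' occurrence of the variable $x$ in the EliminateR premise could live, and derive a contradiction with either LR-separation or R-linearity.

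First, I would unfold the definition: assume for contradiction that $S \UnifStep[\textsf{EliminateR}] S'$. Then $S$ admits a decomposition $S = \{ t =^? x \} \uplus S_0$ with $x \in \Var(S_0) \setminus \Var(t)$. So $x$ occurs somewhere in $S_0$, either on the left of some equation or on the right of some equation. Note also that by definition $t \notin \cV$ is not required for EliminateR, but $x$ itself contributes $x \in \Var_R(S)$ (via the equation $t =^? x$), and $\Var(t) \subseteq \Var_L(S)$.

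Next I would split into the two cases. In the first case, $x$ appears on the left of some equation $u =^? v$ in $S_0$, so $x \in \Var_L(S)$. Combined with $x \in \Var_R(S)$ from the equation $t =^? x$, this contradicts the LR-separation assumption $\Var_L(S) \cap \Var_R(S) = \emptyset$. In the second case, $x$ appears on the right of some equation $u =^? v$ in $S_0$, so $x \in \Var(v)$. Writing $S$ in the enumerated form $\{ s_1 =^? t_1, \ldots, s_n =^? t_n\}$ with, say, $t_i = x$ (corresponding to $t =^? x$) and $t_j = v$ (corresponding to $u =^? v$), we have $i \ne j$ and $x \in \Var(t_i) \cap \Var(t_j)$, contradicting R-linearity.

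Since both subcases lead to a contradiction, no such decomposition of $S$ exists and therefore $S \not\UnifStep[\textsf{EliminateR}] {}$. No step looks like a real obstacle here; the only mild subtlety is making sure the ``extra'' occurrence of $x$ is correctly classified as either contributing to $\Var_L(S)$ or to a second right-hand side, and that the right-hand side containing $x$ in $t =^? x$ is genuinely distinct (as an equation) from the hypothetical right-hand side $v$ containing $x$, which holds because the decomposition $\{t =^? x\} \uplus S_0$ is disjoint.
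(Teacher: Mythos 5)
Your proof is correct and takes essentially the same approach as the paper: the paper likewise notes that $x\in\Var_R(S)$ via the equation $t=^?x$, uses LR-separation to exclude $x$ from every left-hand side and the pairwise-disjointness clause of R-linearity to exclude $x$ from every other right-hand side, concluding $x\notin\Var(S_0)\setminus\Var(t)$ and contradicting the side condition of \textsf{EliminateR}. Your case split on where the occurrence of $x$ lives is just a reorganization of that same argument.
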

\begin{proof}
We proceed by contradiction. 
Assume that 
\[
\begin{array}{@{}r@{\>}c@{\>}l}
S & = & \{ {t =^? x} \} \uplus \{ {s_1 =^? t_1}, \ldots, {s_n =^? t_n} \} \\
&& {} \UnifStep[EliminateR] \{ {t =^? x} \} \cup \{ {s_1\sigma =^? t_1\sigma}, \ldots, {s_n\sigma =^? t_n\sigma} \} = S'
\end{array}
\]
where $\sigma = \{ x \mapsto t \}$
and $x \in \Var(s_1,t_1,\ldots,s_n,t_n) \setminus \Var(t)$.
Since $S$ is LR-separated and R-linear, we have that
\begin{itemize}
    \item $\Var_L(S) \cap \Var_R(S)=\emptyset$ (i.e., $\Var_L(t,s_1,\ldots,s_n) \cap \Var_R(x,t_1,\ldots,t_n) = \emptyset$),
    \item $t_0,t_1,\ldots,t_n$ are linear,
    \item $x \notin \Var(t_1,\ldots,t_n)$,
        and
    \item $\Var(t_i) \cap \Var(t_j) = \emptyset$ for all $1 \leq i < j \leq n$
\end{itemize}
and hence
$x \notin \Var(t,s_1,t_1,\ldots,s_n,t_n)$.
This contradicts the assumption that $x \in \Var(s_1,t_1,\ldots,s_n,t_n) \setminus \Var(t)$.
\qed
\end{proof}

In the following, by Lemma~\ref{lem:EliminateR-is-not-applicable}, we do not consider the transformation step $\UnifStep[EliminateR]$ for LR-separated and R-linear unification problems.

\begin{lemma}
\label{lem:LR-separated-and-R-linear-preservation}
Let $S,S'$ be unification problems such that $S \UnifStep S'$.
If $S$ is LR-separated and R-linear, then 
\begin{itemize}
    \item $S'$ is LR-separated and R-linear,
        and
    \item $\Var_L(S) \cap \Var_R(S') = \emptyset$.
\end{itemize}
\end{lemma}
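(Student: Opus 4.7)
My plan is to prove this by case analysis on which transformation rule produces $S'$ from $S$. By Lemma~\ref{lem:EliminateR-is-not-applicable}, the \textsf{EliminateR} rule cannot fire when $S$ is LR-separated and R-linear, so only \textsf{Delete}, \textsf{Decompose}, and \textsf{EliminateL} need to be analyzed. In each case I would first express $\Var_L(S')$ and $\Var_R(S')$ in terms of $\Var_L(S)$ and $\Var_R(S)$, and then verify the three conclusions separately.

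The \textsf{Delete} case is immediate, since removing an equation $t =^? t$ only shrinks $\Var_L$, $\Var_R$, and the multiset of right-hand sides, so all three properties are inherited from $S$. For \textsf{Decompose}, writing $S = \{f(s_1,\ldots,s_n) =^? f(t_1,\ldots,t_n)\} \uplus S_0$, we get $\Var_L(S') = \Var_L(S)$ and $\Var_R(S') = \Var_R(S)$, so LR-separation and the cross-claim $\Var_L(S) \cap \Var_R(S') = \emptyset$ transfer at once; R-linearity of $S'$ follows because linearity of $f(t_1,\ldots,t_n)$ gives linear and pairwise variable-disjoint $t_1,\ldots,t_n$, combined with R-linearity of $S_0$.

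The \textsf{EliminateL} case is where the real work lies. Writing $S = \{x =^? t\} \uplus S_0$ with $\sigma = \{x \mapsto t\}$ and $x \in \Var(S_0) \setminus \Var(t)$, the first observation is that $x \in \Var_L(S)$ together with LR-separation of $S$ forces $x \notin \Var_R(S) \supseteq \Var_R(S_0)$. Hence $\sigma$ leaves every right-hand side of $S_0$ untouched, giving $\Var_R(\sigma(S_0)) = \Var_R(S_0)$ and therefore $\Var_R(S') = \Var(t) \cup \Var_R(S_0) = \Var_R(S)$, from which $\Var_L(S) \cap \Var_R(S') = \Var_L(S) \cap \Var_R(S) = \emptyset$ is immediate. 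R-linearity of $S'$ then reduces to R-linearity of $S$ because the multiset of right-hand sides is unchanged by the step.

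The main obstacle is establishing LR-separation of $S'$ itself in the \textsf{EliminateL} case: whenever $x$ occurs on a left-hand side of $S_0$, applying $\sigma$ injects $\Var(t)$ into $\Var_L(S')$, and since $\Var(t) \subseteq \Var_R(S) = \Var_R(S')$ this threatens the separation directly. Resolving this cleanly will require carefully exploiting R-linearity to control how $\Var(t)$ interacts with $\Var_L(S_0)$---for instance, by strengthening the inductive invariant beyond bare LR-separation to track disjointness between $\Var(t)$ and $\Var_L(S_0)$ inherited from the original separation of $s$ and $t$, or by using R-linearity to argue that in the configurations that actually arise during a derivation starting from the initial problem $\{s =^? t\}$, the troublesome overlap on the left-hand sides of $\sigma(S_0)$ cannot produce a conflict on both sides of $S'$ simultaneously.
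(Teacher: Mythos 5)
Your treatment of \textsf{Delete} and \textsf{Decompose} agrees with the paper's, and in the \textsf{EliminateL} case your derivations of $\Var_R(S') = \Var_R(S)$, of the cross-claim $\Var_L(S)\cap\Var_R(S')=\emptyset$, and of R-linearity of $S'$ are sound. However, the proposal stops short of actually proving LR-separation of $S'$ in the \textsf{EliminateL} case, offering only speculative directions, so as written it is incomplete. Moreover, the ``obstacle'' you flag there is not something that more careful bookkeeping will dissolve: the implication fails on reachable instances. Take $S = \{\, x =^? g(y),\ x =^? z \,\}$, which arises by \textsf{Decompose} from $\{\, f(x,x) =^? f(g(y),z) \,\}$ (recall that Proposition~\ref{prop:sufficient-condition-of-linearity-of-mgu} does not assume the left term linear, so non-linear left-hand sides do occur). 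This $S$ is LR-separated and R-linear, yet \textsf{EliminateL} yields $S' = \{\, x =^? g(y),\ g(y) =^? z \,\}$, in which $y \in \Var_L(S') \cap \Var_R(S')$. Exactly as you observe: whenever $x$ occurs in another left-hand side, $\{x \mapsto t\}$ injects $\Var(t)$ into $\Var_L(S')$, while the retained equation $x =^? t$ keeps $\Var(t)$ inside $\Var_R(S')$.

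For comparison, the paper's own proof of this case only verifies $\Var(s_1\sigma,\ldots,s_n\sigma) \cap \Var(t_1,\ldots,t_n) = \emptyset$ and never checks the intersection with $\Var(t_0)$ contributed by the retained equation $x =^? t_0$; it therefore has the same defect you identified rather than a resolution of it. A genuine repair must weaken the invariant: what Proposition~\ref{prop:sufficient-condition-of-linearity-of-mgu} actually needs, via Lemma~\ref{lem:LR-separated-and-R-linear-preservation-by-multi-step}, is R-linearity of every intermediate problem together with $\Var_L(S_0) \cap \Var_R(S') = \emptyset$ for the \emph{initial} problem $S_0$ (so that variables of the original $s$ never migrate into right-hand sides, and each such variable ends up bound to a right-hand side that is linear by R-linearity); full LR-separation of every intermediate problem is both unnecessary and, as the example shows, false. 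Your second suggested direction --- replacing bare LR-separation by a derivation-relative disjointness invariant --- is the right one, but the proposal does not carry it out, so the proof is not yet complete.
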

\begin{proof}
Let $S = \{ {s_0 =^? t_0}, {s_1 =^? t_1}, \ldots, {s_n =^? t_n} \}$ for some $n \geq 0$.
Assume that $s_0 =^? t_0$ is a focused equation at the transformation step $S \UnifStep S'$.
We make a case analysis depending on which rule of $\UnifStep$ is applied to $S$ to obtain $S'$.
\begin{itemize}
    \item Case where $\UnifStep[Delete]$ is applied to $S$.
    Assume that $s_0 = t_0$ and 
    \[
    S = \{ {t_0 =^? t_0}, {s_1 =^? t_1}, \ldots, {s_n =^? t_n} \} \UnifStep[Delete] \{ {s_1 =^? t_1}, \ldots, {s_n =^? t_n} \} = S'
    \]
    Since $S$ is LR-separated and R-linear, we have that
    \begin{itemize}
        \item $\Var_L(S) \cap \Var_R(S) = \emptyset$,
        \item $t_0,t_1,\ldots,t_n$ are linear,
            and
        \item $\Var(t_i) \cap \Var(t_j) = \emptyset$ for all $0 \leq i < j \leq n$,
    \end{itemize}
    and thus 
    \begin{itemize}
        \item $\Var(t_0) = \emptyset$,
        \item $\Var(t_0,s_1,\ldots,s_n) \cap \Var(t_0,t_1,\ldots,t_n) = \emptyset$,
        \item $t_1,\ldots,t_n$ are linear,
            and
        \item $\Var(t_i) \cap \Var(t_j) = \emptyset$ for all $1 \leq i < j \leq n$.
    \end{itemize}
    Since $\Var(t)=\emptyset$, we have that $\Var_L(S) = \Var(s_1,\ldots,s_n)$, and thus 
    we have that $\Var_L(S) \cap \Var_R(S') = \emptyset$.
    Therefore, $S'$ is also LR-separated and R-linear, and $\Var_L(S) \cap \Var_R(S') = \emptyset$.

    \item Case where $\UnifStep[Decompose]$ is applied to $S$.
    Assume that $s_0 = f(s'_1,\ldots,s'_m)$, $t_0 = f(t'_1,\ldots,t'_m)$, and 
    \[
    \begin{array}{@{}r@{\>}c@{\>}l}
    S & = & \{ {f(s'_1,\ldots,s'_m) =^? f(t'_1,\ldots,t'_m)} \} \uplus \{ {s_1 =^? t_1}, \ldots, {s_n =^? t_n} \} \\
    && {} \UnifStep[Decompose] \{ {s'_1 =^? t'_1},\ldots,{s'_m =^? t'_m} \} \cup \{ {s_1 =^? t_1}, \ldots, {s_n =^? t_n} \} = S'
    \end{array}
    \]
    Since $S$ is LR-separated and R-linear, we have that
    \begin{itemize}
        \item $\Var_L(S) \cap \Var_R(S) = \emptyset$,
        \item $t_0,t_1,\ldots,t_n$ are linear,
            and
        \item $\Var(t_i) \cap \Var(t_j) = \emptyset$ for all $0 \leq i < j \leq n$
    \end{itemize}
    and hence
    \begin{itemize}
        \item $\Var(s'_1,\ldots,s'_m,s_1,\ldots,s_n) \cap \Var(t'_1,\ldots,t'_m,t_1,\ldots,t_n) = \emptyset$,
        \item $t'_1,\ldots,t'_m,t_1,\ldots,t_n$ are linear,
        \item $\Var(t'_i) \cap \Var(t'_j) = \emptyset$ for all $1 \leq i < j \leq m$,
        \item $\Var(t_i) \cap \Var(t_j) = \emptyset$ for all $1 \leq i < j \leq n$,
            and
        \item $\Var(t'_i) \cap \Var(t_j) = \emptyset$ for any $i \in \{1,\ldots,m\}$ and $j \in \{1,\ldots,n\}$.
    \end{itemize}
    By definition, we have that 
    \begin{itemize}
        \item $\Var_L(S) = \Var(s_0,s_1,\ldots,s_n)=\Var(s'_1,\ldots,s'_m,s_1,\ldots,s_n)$,
            and
        \item $\Var_R(S') = \Var(t_0,t_1,\ldots,t_n) = \Var(t'_1,\ldots,t'_m,t_1,\ldots,t_n)$.
    \end{itemize}
    Therefore, $S'$ is LR-separated and R-linear, and $\Var_L(S) \cap \Var_R(S') = \emptyset$.

    \item Case where $\UnifStep[EliminateL]$ is applied to $S$.
    Assume that $s_0$ is a variable $x \in \Var(s_1,t_1,\ldots, s_n,t_n) \setminus \Var(t_0)$
    (i.e., $s_0 = x \in \Var(s_1,t_1,\ldots, s_n,t_n) \setminus \Var(t_0)$) and
    \[
    \begin{array}{@{}r@{\>}c@{\>}l}
    S & = & \{ {x =^? t_0} \} \uplus \{ {s_1 =^? t_1}, \ldots, {s_n =^? t_n} \} \\
    && {} \UnifStep[EliminateL] \{ {x =^? t_0} \} \cup \{ {s_1\sigma =^? t_1\sigma}, \ldots, {s_n\sigma =^? t_n\sigma} \} = S'
    \end{array}
    \]
    where $\sigma = \{ x \mapsto t \}$.
    Since $S$ is LR-separated and R-linear, we have that
    \begin{itemize}
        \item $\Var_L(S) \cap \Var_R(S) = \emptyset$,
        \item $t_0,t_1,\ldots,t_n$ are linear,
            and
        \item $\Var(t_i) \cap \Var(t_j) = \emptyset$ for all $0 \leq i < j \leq n$
    \end{itemize}
    and hence
    $\Var_L(x,s_1,\ldots,s_n) \cap \Var_R(t_0,t_1,\ldots,t_n) = \emptyset$.
    Thus, we have that
    \begin{itemize}
        \item $(s_i\sigma =^? t_i\sigma) = (s_i\sigma =^? t_i)$ for all $1 \leq i \leq n$,
            and
        \item  $\Var_L(x,s_1,\ldots,s_n) \cap \Var_R(t_1,\ldots,t_n) = \emptyset$.
    \end{itemize}
    By definition, we have that 
    $\Var(s_i\sigma) = (\Var(s_i)\setminus\{x\}) \cup \Var(t_0)$,
    and thus
    $\Var(s_1\sigma,\ldots,s_n\sigma) = (\Var(s_1,\ldots,s_n) \setminus \{x\}) \cup \Var(t_0)
    = \Var(s_1,\ldots,s_n,t_0) \setminus \{x\}$.
    Since $\Var(t_0) \cap \Var(t_j) = \emptyset$ for all $1 \leq j \leq n$,
    we have that $\Var(t_0) \cap \Var(t_1,\ldots,t_n) = \emptyset$ and hence
    $\Var(s_1,\ldots,s_n,t_0) \cap \Var(t_1,\ldots,t_n)=\emptyset$.
    Thus, we have that $\Var(s_1\sigma,\ldots,s_n\sigma) \cap \Var(t_1,\ldots,t_n) = \emptyset$.
    Therefore, $S'$ is LR-separated and R-linear, and $\Var_L(S) \cap \Var_R(S') = \emptyset$.
\qed
\end{itemize}
\end{proof}

\begin{lemma}
\label{lem:LR-separated-and-R-linear-preservation-by-multi-step}
Let $S,S'$ be unification problems such that $S \UnifStep^* S'$.
If $S$ is LR-separated and R-linear, then 
\begin{itemize}
    \item $S'$ is LR-separated and R-linear,
        and
    \item $\Var_L(S) \cap \Var_R(S') = \emptyset$.
\end{itemize}
\end{lemma}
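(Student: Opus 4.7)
The plan is to proceed by straightforward induction on the length $n$ of a derivation $S \UnifStep^n S'$, using the preceding single-step lemma as the engine and establishing one small monotonicity observation to bridge $\Var_R(S)$ and $\Var_R(S')$.

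For the base case $n=0$ we have $S = S'$, so LR-separation and R-linearity of $S'$ are exactly the hypotheses on $S$, and $\Var_L(S) \cap \Var_R(S') = \Var_L(S) \cap \Var_R(S) = \emptyset$ by LR-separation of $S$. For the inductive step, write $S \UnifStep^n S'' \UnifStep S'$. The induction hypothesis gives that $S''$ is LR-separated and R-linear and that $\Var_L(S) \cap \Var_R(S'') = \emptyset$. Applying the single-step preservation lemma to $S'' \UnifStep S'$ immediately yields that $S'$ is LR-separated and R-linear, and also that $\Var_L(S'') \cap \Var_R(S') = \emptyset$; but this last conclusion is about $S''$, not $S$, so it does not directly supply what we want.

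The remaining task, and the only nontrivial point, is to show $\Var_L(S) \cap \Var_R(S') = \emptyset$. For this I would record the auxiliary observation that $\UnifStep$ is non-increasing on right-hand-side variables when the source is LR-separated and R-linear, i.e.\ $S'' \UnifStep S'$ with $S''$ LR-separated and R-linear implies $\Var_R(S') \subseteq \Var_R(S'')$. This is visible by inspecting the three applicable transformation cases from the proof of the single-step lemma: in \textsf{Delete} the removed equation $t_0 =^? t_0$ had $\Var(t_0) = \emptyset$, so $\Var_R$ can only shrink; in \textsf{Decompose} the right-hand side variable set is unchanged, since $\Var(f(t'_1,\ldots,t'_m)) = \Var(t'_1,\ldots,t'_m)$; and in \textsf{EliminateL} with $\sigma = \{x \mapsto t_0\}$, LR-separation of $S''$ forces $x \notin \Var_R(S'')$, so $t_i\sigma = t_i$ for all remaining equations and $\Var_R(S') = \Var_R(S'')$. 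The case \textsf{EliminateR} is excluded by Lemma~\ref{lem:EliminateR-is-not-applicable}.

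Composing this monotonicity step with the inductive hypothesis, $\Var_R(S') \subseteq \Var_R(S'') $, so $\Var_L(S) \cap \Var_R(S') \subseteq \Var_L(S) \cap \Var_R(S'') = \emptyset$, which closes the induction. I do not expect any real obstacle: the only place where one could slip is the \textsf{EliminateL} case, where it is tempting to think $\Var_R$ grows by the variables of $t_0$; the correct observation is that the applied substitution leaves every right-hand side untouched precisely because LR-separation of $S''$ is preserved by the induction hypothesis, which is exactly why the monotonicity needs to be bundled into the inductive argument rather than proved once in isolation.
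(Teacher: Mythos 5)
Your proof is correct, and the overall skeleton (induction on the derivation length, driven by the single-step Lemma~\ref{lem:LR-separated-and-R-linear-preservation}) matches the paper's. The two arguments diverge in how they bridge the gap between the single-step conclusion and the multi-step claim $\Var_L(S)\cap\Var_R(S')=\emptyset$. You peel off the \emph{last} step $S \UnifStep^{n-1} S'' \UnifStep S'$ and then prove a fresh auxiliary fact --- that $\UnifStep$ is non-increasing on $\Var_R$ for LR-separated, R-linear sources --- which forces you to re-open the case analysis over \textsf{Delete}, \textsf{Decompose}, and \textsf{EliminateL} (your treatment of each case is sound; in particular, LR-separation of $S''$ is exactly what makes the \textsf{EliminateL} substitution act trivially on right-hand sides). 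The paper instead peels off the \emph{first} step $S \UnifStep S'' \UnifStep^{n-1} S'$ and bridges with the much weaker, case-free observation that $\Var(S)\supseteq\Var(S'')\supseteq\Var(S')$ (no rule introduces fresh variables): given $x\in\Var_L(S)\cap\Var_R(S')$, membership in $\Var(S'')$ forces $x$ into either $\Var_L(S'')$ or $\Var_R(S'')$, and each branch contradicts one of the two disjointness facts already in hand. Your route buys a slightly stronger and reusable invariant ($\Var_R$-monotonicity) at the cost of duplicating the case analysis; the paper's route is more economical but yields only the specific disjointness needed. Both are complete proofs.
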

\begin{proof}
We prove this claim by induction on the length of $S \UnifStep^* S'$.
Assume that $S \UnifStep^n S'$.
Since the case where $n=0$ (i.e., $S = S'$) is trivial, we consider the remaining case where $n > 0$.
Suppose that $S \UnifStep S'' \UnifStep^{n-1} S'$.
It follows from Lemma~\ref{lem:LR-separated-and-R-linear-preservation} that 
$S''$ is LR-separated and R-linear, and $\Var_L(S) \cap \Var_R(S'') = \emptyset$.
By the induction hypothesis, $S'$ is LR-separated and R-linear, and $\Var_L(S'') \cap \Var_R(S') = \emptyset$.
We show that $\Var_L(S) \cap \Var_R(S') = \emptyset$, by contradiction.
Assume that there exists a variable $x \in \Var_L(S) \cap \Var_R(S')$.
By definition, it is clear that $\Var(S) \supseteq \Var(S'') \supseteq \Var(S')$.
Thus, we have that $x \in \Var(S'')$.
Since $\Var_L(S'') \cap \Var_R(S') = \emptyset$, we have that $x \in \Var_R(S'')$ and hence $x \in \Var_L(S) \cap \Var_R(S'')$.
This contradicts the fact that $\Var_L(S) \cap \Var_R(S'') = \emptyset$.
\qed
\end{proof}

\PropSufficientConditionOfLinearityOfMgu*
\begin{proof}
Assume that $t$ is linear.
We consider the unification problem $S = \{ {s =^? t} \}$.
Since $t$ is linear and $\Var(s)\cap\Var(t) = \emptyset$, the problem $S$ is LR-separated and R-linear.
Since $s,t$ are unifiable, there exists a unification problem $S'$ in solved form such that
$S \UnifStep^* S'$.
It follows from Lemma~\ref{lem:LR-separated-and-R-linear-preservation-by-multi-step} that 
$S'$ is LR-separated and R-linear, and $\Var_L(S) \cap \Var_R(S') = \emptyset$, and thus $\Var(s) \cap \Var_R(S') = \emptyset$.
Thus, for any variable $x \in \Var(s)$, one of the following holds:
\begin{itemize}
    \item $x$ does not appear in $S'$ (i.e., $x\vec{S'} = x$), 
        or
    \item there exists an equation $x =^? t$ in $S'$ such that $t$ is linear (and thus, $x\vec{S'} = t$).
\end{itemize}
In both cases above, $x\vec{S'}$ is linear.
Since $\vec{S'}$ is an mgu of $S$ (i.e., an mgu of $s,t$), $x\sigma$ is linear for any mgu $\sigma$ of $s,t$.
\qed
\end{proof}

\section{Proof of Theorem~\ref{thm:constrained-pattern-ominus-propertiesX}}

The following proposition shows correctness of $\Cocterm[\cC]{\cdot}$ for linear terms, together with some properties.
%
\begin{restatable}[correctness of {$\Cocterm[\cC]{u}$}]{proposition}{PropCoctermsCorrectness}
\label{prop:Cocterms-correctness}
Let 
$u:\iota$ be a linear $\cC$-term.
Then, $\Cocterm[\cC]{u}$ is a finite pairwise $\cC$-non-overlapping complement of $u$
(i.e., $\{ t \in T(\cC) \mid t:\iota\} = \GInst[\cC]{u} \uplus \GInst[\cC]{\Cocterm[\cC]{u}}$)
such that
\begin{enumerate}
    \item any term in $\Cocterm[\cC]{u}$ is a linear non-variable $\cC$-term with sort $\iota$,
    \item $u$ and $u'$ are $\cC$-non-overlapping for any $u' \in \Cocterm[\cC]{u}$,
        and
    \item $\Height{u} \geq \Height{u'}$ for any term $u' \in \Cocterm[\cC]{u}$.
\end{enumerate}
\end{restatable}
\begin{proof}
We prove this claim by structural induction on $u$.
Since the case where $u$ is a variable is trivial, we only show the remaining case where $u$ is not a variable.
Let $u = f(u_1,\ldots,u_n)$ such that $f : \iota_i \times \cdots \times \iota_n \to \iota \in \Sigma'$.
Let $\Cocterm[\Sigma']{u} = U_0 \cup \bigcup_{i=1}^n U_i$, where
\begin{itemize}
    \item $U_0 = \{ f'(x_1,\ldots,x_m) \mid f' \ne f, ~ f' : \iota'_1 \times \cdots \times \iota'_m \to \iota \in \Sigma' \}$,
        and
    \item $U_i = \{ f(u_1,\ldots,u_{i-1},u'_i,y_{i+1},\ldots,y_n) \mid u'_i \in \Cocterm[\Sigma']{u_i} \}$.
\end{itemize}
Note that by definition, $x_1,\ldots,x_m,y_2,\ldots,y_n$ are pairwise distinct fresh variables.
By the induction hypothesis, for any $i \in \{1,\ldots,n\}$,
\begin{enumerate}
\renewcommand{\labelenumi}{(\roman{enumi})}
\leftskip=1ex
    \item $\Cocterm[\Sigma']{u_i}$ is a finite pairwise $\cC$-non-overlapping set of linear non-variable $\cC$-terms with sort $\iota_i$, 
    \item $u_i$ and $u'_i$ are $\cC$-non-overlapping for any term $u'_i \in \Cocterm[\Sigma']{u_i}$,
    \item $\Height{u_i} \geq \Height{u'_i}$ for any term $u'_i \in \Cocterm[\Sigma']{u_i}$,
        and
    \item $\{ t_i \in T(\cC) \mid t_i:\iota_i\} = \GInst[\cC]{u_i} \uplus \GInst[\cC]{\Cocterm[\cC]{u'_i}}$.
\end{enumerate}

We first show that
$\Cocterm[\cC]{u}$ is a finite pairwise $\cC$-non-overlapping set of linear non-variable $\cC$-terms with sort $\iota$.
It suffices to show that
\begin{enumerate}
\renewcommand{\labelenumi}{(\alph{enumi})}
    \item $U_0$ is a finite pairwise $\cC$-non-overlapping set of linear non-variable $\cC$-terms with sort $\iota$,
    \item $U_i$ is a finite pairwise $\cC$-non-overlapping set of linear non-variable $\cC$-terms with sort $\iota_i$ for any $i \in \{1,\ldots,n\}$, 
    \item $U_0$ and $U_i$ are $\cC$-non-overlapping for any $i \in \{1,\ldots,n\}$,
        and
    \item $U_i$ and $U_j$ are $\cC$-non-overlapping for any $i,j \in \{1,\ldots,n\}$ with $i<j$.
\end{enumerate}
Since $\cC$ is finite, by definition, the substatement~(a) above trivially holds.
We show the remaining substatements~(b)--(d) in order.
\begin{enumerate}
\renewcommand{\labelenumii}{\alph{enumii}.}
\stepcounter{enumii}
    \item 
    Let $i \in \{1,\ldots,n\}$.
    By~(i) and the definition of $U_i$, all of the following hold:
    \begin{itemize}
        \item $U_i$ is finite,
        \item for any term $u'_i \in \Cocterm[\cC]{u_i}$, $f(u_1,\ldots,u_{i-1},u'_i,y_{i+1},\ldots,y_n)$ is a linear non-variable $\cC$-term with sort $\iota$,
        \item for any terms $u'_i,u''_i \in \Cocterm[\cC]{u_i}$, $u'_i$ and $u''_i$ are $\cC$-non-overlapping and hence
            $f(u_1,\ldots,u_{i-1},u'_i, y_{i+1},\ldots,y_n)$
            and
            $f(u_1,\ldots,u_{i-1},u''_i,y_{i+1},\ldots,y_n)$
            are $\cC$-non-overlapping.
    \end{itemize}
    Therefore, $U_i$ is a finite pairwise $\cC$-non-overlapping set of linear non-variable $\cC$-terms with sort $\iota_i$.

    \item 
    Let $i \in \{1,\ldots,n\}$.
    Any term in $U_0$ is rooted by a function symbol $f' \in \cC$ such that $f' \ne f$.
    Therefore, $U_0$ and $U_i$ are $\cC$-non-overlapping.
    
    \item 
    Let $i,j \in \{1,\ldots,n\}$ with $i < j$, $u_i \notin \cV$, and $u_j \notin \cV$.
    Then, we have that $\Cocterm[\Sigma']{u_i} \ne \emptyset$ and $\Cocterm[\Sigma']{u_j} \ne \emptyset$.
    Let $u'_i\in \Cocterm[\Sigma']{u_i}, u'_j \in \Cocterm[\Sigma']{u_j}$.
    Since $u_i$ and $u'_i$ are $\cC$-non-overlapping,
    $f(u_1,\ldots,u_{i-1},u'_i,y_{i+1},\ldots,y_n)$
    and
    $f(u_1,\ldots,u_{j-1},u'_j,y'_{j+1}, \ldots,y'_n)$ are also $\cC$-non-overlapping.
    Therefore, $U_i,U_j$ are $\cC$-non-overlapping.
    
\end{enumerate}

Secondly, we show the statement~1.
It suffices to show that
for any $i \in \{0.\ldots,n\}$,
$u$ and $u'$ are $\cC$-non-overlapping for any $u' \in U_i$.
We make a case analysis depending on what $i$ is.
\begin{itemize}
    \item Case where $i=0$.
    Let $u' \in U_0$.
    Then, by definition, $u'$ is rooted by a function symbol $f' \in \Sigma'$ such that $f'\ne f$.
    Therefore, $u$ and $u'$ are $\cC$-non-overlapping.
    \item Case where $i \in \{1,\ldots,n\}$.
    Let $u' = f(u_1,\ldots,u_{i-1},u'_i,y_{i+1},\ldots,y_n)$ for some $u'_i \in \Cocterm[\Sigma']{u_i}$.
    It follows from~(ii) and Proposition~\ref{prop:sufficient-condition-for-F-non-overlappingness-of-terms} that $u$ and $u'$ are $\cC$-non-overlapping.
\end{itemize}

Thirdly, we show the statement~2.
Since $u' \in \Cocterm[\Sigma']{u}$, we have that $u' \in U_i$ for some $i \in \{0,\ldots,n\}$.
%
    By definition, we have that $\Height{u}=\Height{f(u_1,\ldots,u_n)} = 1 + \max\{\Height{u_1},\ldots,\Height{u_n}\} \geq 1$.
    We make a case analysis depending on whether $u' \in U_0$.
    \begin{itemize}
        \item Case where $u' \in U_0$.
        In this case, $u'$ is of the form $f'(x_1,\ldots,x_m)$ for some $f' \in \cC$, and thus $\Height{u'} = 1$.
        Therefore, we have that $\Height{u} \geq 1 = \Height{u'}$.

        \item Case where $u' \notin U_0$.
        In this case, we have that $u' \in U_i$ for some $i \in \{1,\ldots,n\}$.
        Thus, $u'$ is of the form $f(u_1,\ldots,u_{i-1},u'_i,y_{i+1},\ldots,y_n)$ for some term $u'_i \in \Cocterm[\Sigma']{u_i}$.
        By definition, we have that 
        \[
        \begin{array}{@{}l@{\>}c@{\>}l@{}}
        \Height{u'} & = & \Height{f(u_1,\ldots,u_{i-1},u'_i,y_{i+1},\ldots,y_n)} \\
        & = & 1 + \max\{\Height{u_1},\ldots,\Height{u_{i-1}},\Height{u'_i}\} \\
        \end{array}
        \]
        Therefore, it follows from~(iii) that 
        $\Height{u} = 1 + \max\{\Height{u_j}  \mid 1 \leq j \leq n \} \geq 1+\max\{\Height{u_1},\ldots,\Height{u_{i-1}}, \Height{u'_i}\} = \Height{u'}$.
    \end{itemize}

Finally, we show that 
$\GInst[\cC]{\Cocterm[\cC]{u}}$ is a complement of $u$, 
i.e., $\{ t \in T(\cC) \mid t:\iota\} = \GInst[\cC]{u} \uplus \GInst[\cC]{\Cocterm[\cC]{u}}$.
It follows from~(2) that 
$\GInst[\cC]{u}$ and $\GInst[\cC]{\Cocterm[\cC]{u}}$ are disjoint (i.e., $\GInst[\cC]{u} \cap \GInst[\cC]{\Cocterm[\cC]{u}} = \emptyset$).
To prove this claim, it suffices to show that
any term $t \in T(\cC)$ with sort $\iota$ is in $\GInst[\cC]{u} \cup \GInst[\cC]{\Cocterm[\cC]{u}}$.
We proceed by contradiction.
Assume that there exists a term $t \in T(\cC)$ with sort $\iota$ such that $t \notin \GInst[\cC]{u} \cup \GInst[\cC]{\Cocterm[\cC]{u}}$.
Let $t = g(t_1,\ldots,t_k)$ such that $g: \iota'_1 \times \cdots \times \iota'_k \to \iota \in \cC$.
We make a case analysis depending on whether $g = f$.
\begin{itemize}
    \item Case where $g = f$.
    Then, there exists $i \in \{1,\ldots,n\}$ such that 
    \begin{itemize}
        \item $t_j \in \GInst[\cC]{u_j}$ for $j \in \{1,\ldots,i-1\}$
            and
        \item $t_i \notin \GInst[\cC]{u_i}$.
    \end{itemize}
    It follows from~(iv) that $t_i \in \GInst[\cC]{u_i} \uplus \GInst[\cC]{\Cocterm[\cC]{u'_i}}$ and hence $t_i \in \GInst[\cC]{\Cocterm[\cC]{u'_i}}$.
    Thus, we have that $t \in \GInst[\cC]{U_i}$.

    \item Case where $g \ne f$.
    In this case, by definition, we have that $t \in \GInst[\cC]{U_0}$.
\end{itemize}
In both cases above, there exists $i \in \{0,\ldots,n\}$ such that $t \in \GInst[\cC]{U_i}$.
This contradicts the assumption that  $t \notin \GInst[\cC]{u} \cup \GInst[\cC]{\Cocterm[\cC]{u}}$.
\qed
\end{proof}

In computing complements of linear patterns, we apply substitutions to linear patterns to obtain their linear instances.
We formulate substitutions preserving linearity in applying them to linear terms.

\begin{definition}[linearity-preserving substitution]
\label{def:linearity-preserving-substitution}
We say that a substitution $\sigma$ is \emph{linearity-preserving w.r.t.\ a set $X$ of variables} ($X$-linearity-preserving, for short) if
both of the following statements hold:
\begin{itemize}
   \item 
    $\Ran(\sigma|_X)$ is a set of linear terms,
        and
   \item
    $\Var(x\sigma) \cap \Var(y\sigma) = \emptyset$ for any variables $x,y \in X$ such that $x \ne y$.
\end{itemize}
\end{definition}
\begin{example}
The substitution $\{ \var{xs} \,{\mapsto}\, \symb{cons}(x',\var{xs}'), y \,{\mapsto}\, \symb{s}(y') \}$ is $\{\var{xs},y\}$-linearity-preserving but not $\{ \var{xs}, x' \}$-linearity-preserving.
\end{example}

The following proposition shows that any $X$-linearity-preserving substitution actually preserves linearity of terms $t$ with $\Var(t) \subseteq X$ and also shows a sufficient condition for $X$-linearity-preservation.
\begin{restatable}{proposition}{PropLinearSubstProperties}
\label{prop:linear-subst-properties}
Let $X \subseteq \cV$ and $\sigma$ be a substitution.
Then, both of the following statements hold:
\begin{enumerate}
\renewcommand{\labelenumi}{(\arabic{enumi})}
    \item
if $\sigma$ is $X$-linearity-preserving, then
        $t\sigma$ is linear for any linear term $t$ such that $\Var(t) \subseteq X$,
        and
    \item
if $t\sigma$ is linear for some term $t$ with $\Var(t) = X$, then
        $\sigma$ is $X$-linearity-preserving.
\end{enumerate}
\end{restatable}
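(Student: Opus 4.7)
The plan is to prove the two statements separately, with (1) by structural induction on $t$ and (2) by a direct position-counting argument.

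For (1), I would induct on the structure of the linear term $t$ with $\Var(t)\subseteq X$. In the base case, $t$ is a variable $x$: if $x\in X$ then $t\sigma=x\sigma$ is linear by the first clause of Definition~\ref{def:linearity-preserving-substitution}, and if $x\notin X$ (which the hypothesis forbids when $\Var(t)\subseteq X$ is assumed, but one may still treat $t=x$ for $x\notin\Var(t)$ trivially as $t\sigma=t$), linearity is immediate. In the inductive case $t=f(t_1,\ldots,t_n)$, linearity of $t$ gives that each $t_i$ is linear and $\Var(t_i)\cap\Var(t_j)=\emptyset$ for $i\neq j$; the induction hypothesis then yields that each $t_i\sigma$ is linear. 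To conclude that $t\sigma=f(t_1\sigma,\ldots,t_n\sigma)$ is linear, it suffices to show $\Var(t_i\sigma)\cap\Var(t_j\sigma)=\emptyset$ for $i\neq j$. Writing $\Var(t_k\sigma)=\bigcup_{x\in\Var(t_k)}\Var(x\sigma)$ and using that $\Var(t_i),\Var(t_j)$ are disjoint subsets of $X$, the second clause of $X$-linearity-preservation applied pairwise to variables $x\in\Var(t_i)$ and $y\in\Var(t_j)$ gives the required disjointness.

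For (2), I would show contrapositively (or directly) that, whenever $t\sigma$ is linear with $\Var(t)=X$, both conditions of Definition~\ref{def:linearity-preserving-substitution} hold. For the first clause, suppose for contradiction that some $x\sigma$ is non-linear, so some variable $z$ occurs at two distinct positions $p_1,p_2$ of $x\sigma$. Because $x\in\Var(t)$, there is a position $q\in\Pos(t)$ with $t|_q=x$, and in $t\sigma$ the subterm at $q$ equals $x\sigma$. Hence $z$ occurs at the two distinct positions $qp_1$ and $qp_2$ of $t\sigma$, contradicting linearity of $t\sigma$. For the second clause, suppose $x\neq y$ in $X$ share a variable $z\in\Var(x\sigma)\cap\Var(y\sigma)$ at positions $p_x$ of $x\sigma$ and $p_y$ of $y\sigma$; choosing positions $q_x,q_y\in\Pos(t)$ with $t|_{q_x}=x$ and $t|_{q_y}=y$, we get $q_x\neq q_y$ (since $x\neq y$), so $z$ occurs at the parallel (hence distinct) positions $q_xp_x$ and $q_yp_y$ of $t\sigma$, again contradicting linearity.

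Both parts are essentially bookkeeping on positions and variable occurrences, so I expect no serious obstacle. The one point that deserves care is the position argument in~(2): I must note that $x\ne y$ forces $q_x\ne q_y$ (they carry distinct labels in $t$), which in turn forces the two occurrences of $z$ in $t\sigma$ to live at distinct positions. Once this is phrased cleanly, both statements follow directly.
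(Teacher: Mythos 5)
Your proof is correct. Part~(2) matches the paper's argument in substance: the paper also argues by contradiction that a non-linearity-preserving $\sigma$ (either some $x\sigma$ non-linear, or two distinct $x,y \in X$ with $\Var(x\sigma)\cap\Var(y\sigma)\ne\emptyset$) forces $t\sigma$ to be non-linear; you merely spell out the position bookkeeping that the paper leaves implicit, and your observation that $x\ne y$ forces the two variable positions $q_x\ne q_y$ of $t$ to be parallel is exactly the point that makes the argument go through. For part~(1) you take a genuinely different (though equally elementary) route: the paper argues by contradiction, picking two parallel positions of a repeated variable in $t\sigma$, tracing each back to a variable occurrence in $t$, and then splitting on whether the two source variables coincide (using clause~1 of the definition in the first case and clause~2 in the second); you instead do structural induction on $t$, using the identity $\Var(t_k\sigma)=\bigcup_{x\in\Var(t_k)}\Var(x\sigma)$ together with clause~2 to propagate disjointness of the $\Var(t_i\sigma)$ up through the term. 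The inductive version is arguably cleaner because it isolates exactly where each clause of the definition is used; the paper's version avoids setting up the induction and works directly with positions. The only blemish in your write-up is the parenthetical in the base case about $x\notin X$, which is vacuous under the hypothesis $\Var(t)\subseteq X$ and can simply be dropped.
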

\begin{proof}\ 
\begin{enumerate}
\renewcommand{\labelenumi}{(\arabic{enumi})}
    \item We proceed by contradiction.
    Assume that 
    $\sigma$ is $X$-linearity-preserving
    and
    there exists a linear term $t$ such that $\Var(t) \subseteq X$ and  $t\sigma$ is not linear.
    Then, there exist two parallel positions $p_1,p_2$ such that 
    $(t\sigma)|_{p_1} = (t\sigma)|_{p_2} \in \cV$.
    Since $(t\sigma)|_{p_1} \in \cV$ and $(t\sigma)|_{p_2} \in \cV$,
    there exist variables $x_1,x_2$ at positions $q_1,q_2$ of $t$, respectively, such that $q_1 \leq p_1$ and $q_2 \leq p_2$.
    Since $\sigma$ is $X$-linearity-preserving, we have that $x_1 \ne x_2$:
    If $x_1 = x_2$, then $(t\sigma)|_{p_1},(t\sigma)|_{p_2} \in \Var(x_1\sigma)$, and thus $x_1\sigma$ is not linear;
    this contradicts $X$-linearity-preservation of $\sigma$.
    Since $(t\sigma)|_{p_1} = (t\sigma)|_{p_2}$,
    we have that $(t\sigma)|_{p_1} \in \Var(x_1\sigma) \cap \Var(x_2\sigma)$.
    This contradicts $X$-linearity-preservation of $\sigma$.

    \item We proceed by contradiction.
    Let $t$ be a term such that $\Var(t)=X$.
    Assume that 
    $\sigma$ is linear and 
    $\sigma$ is not $X$-linearity-preserving.
    Then, we have that 
    \begin{itemize}
        \item there exists a variable $x \in \Var(t)$ such that $x\sigma$ is not linear,
            or
        \item there exist two different variables $x,x' \in \Var(t)$ such that $\Var(x\sigma) \cap \Var(x'\sigma) \ne \emptyset$.
    \end{itemize}
    In both cases above, $t\sigma$ is not linear.
    This contradicts the assumption that $t\sigma$ is linear.    
\qed
\end{enumerate}
\end{proof}

Linearity-preserving mgus have the following properties.

\begin{restatable}{proposition}{PropPropetiesOfFUnifiableLinearTerms}
\label{prop:propeties-of-F-unifiable-linear-terms}
Let 
$s,t$ be linear terms such that $\Var(s) \cap \Var(t) = \emptyset$
and $s$ and $t$ are unifiable.
Let $\sigma = \mgu(s,t)$.
Then, all of the following statements hold:
\begin{enumerate}
\renewcommand{\labelenumi}{(\arabic{enumi})}
    \item $\sigma$ is $\Var(s,t)$-linearity preserving,
    \item $\Pos(s) \cup \Pos(t) = \Pos(s\sigma)$ ($=\Pos(t\sigma)$),
    \item $\max\{\Height{s},\Height{t}\} = \max\{\Height{s\sigma},\Height{t\sigma}\}$,
        and
    \item if $s,t\in T(\cC,\cV)$ or both $s,t$ are patterns, then $\sigma$ is a $\cC$-substitution.
\end{enumerate}
\end{restatable}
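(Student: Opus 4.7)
My proof plan addresses the four claims in order, using an analysis of the $\UnifStep$-procedure of Proposition~\ref{prop:sufficient-condition-of-linearity-of-mgu} for~(1) and~(4), and a direct structural induction on the unifiable pair $(s,t)$ for~(2) and~(3).

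For~(1), I would apply Proposition~\ref{prop:sufficient-condition-of-linearity-of-mgu} twice: directly (using linearity of $t$) to obtain that $\sigma|_{\Var(s)}$ is linear, and with $s$ and $t$ interchanged (using linearity of $s$) to obtain that $\sigma|_{\Var(t)}$ is linear; together, these give that each $x\sigma$ with $x \in \Var(s,t)$ is linear, which establishes the first clause of Definition~\ref{def:linearity-preserving-substitution}. For the pairwise variable-disjointness clause, I would strengthen the invariant preserved by $\UnifStep$-normalization so that, alongside LR-separation and R-linearity, a symmetric \emph{L-linearity} condition on left-hand sides is maintained; the claim then drops out by reading $\sigma = \vec{S'}$ off the solved form $S'$ and using that the variables on different sides of equations in $S'$ stay separated.

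For~(2), the inclusion $\Pos(s)\cup\Pos(t) \subseteq \Pos(s\sigma)$ is immediate since $s\sigma = t\sigma$ and substitution only extends the position set. For the converse, I would proceed by induction on the pair $(s,t)$: if either side is a variable, then idempotency of $\sigma$ together with the occurs check reduces $\sigma$ to a binding of that variable to the other term, so the position sets coincide; otherwise $s = f(s_1,\ldots,s_n)$ and $t = f(t_1,\ldots,t_n)$ share the same root symbol, and the induction hypothesis applies componentwise to the pairs $(s_i,t_i)$, which remain linear, variable-disjoint, and unifiable. Claim~(3) follows from the same induction applied to the recursive definition of height, yielding $\Height{s\sigma} = \max\{\Height{s},\Height{t}\}$; the identity with $\max\{\Height{s\sigma},\Height{t\sigma}\}$ is automatic because $s\sigma = t\sigma$. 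For~(4), if $s,t \in T(\cC,\cV)$, then every intermediate equation of a $\UnifStep$-reduction of $\{\,s =^? t\,\}$ remains between $\cC$-terms, so the solved-form mgu binds each variable to a $\cC$-term; if both $s,t$ are patterns, unifiability forces a common root and the unification reduces to the componentwise $\cC$-term argument pairs, to which the previous case applies.

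The principal obstacle is the disjointness clause of~(1): Proposition~\ref{prop:sufficient-condition-of-linearity-of-mgu} only yields linearity of each $x\sigma$, so genuinely strengthening the $\UnifStep$-invariant with the symmetric L-linearity condition will be required. The hardest cases are the $\UnifStep[EliminateL]$ and $\UnifStep[EliminateR]$ steps, where I would need to verify that the LR-separation of the intermediate problem---which keeps the eliminated variable confined to a single side---prevents the applied substitution from merging variable occurrences either within one side or across the two sides.
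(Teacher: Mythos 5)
Your proposal is correct in outline, but it takes a genuinely different route from the paper. The paper's proof does not touch the $\UnifStep$-machinery at all for this proposition: because $s$ and $t$ are \emph{both} linear and variable-disjoint, it simply writes the mgu down in closed form as the ``overlay'' substitution $\{ s|_{p} \mapsto t|_{p} \mid p \in \Pos_\cV(s) \} \cup \{ t|_{q} \mapsto s|_{q} \mid q \in \Pos_\cV(t) \setminus \Pos_\cV(s) \}$, and then reads all four claims directly off this explicit description (linearity-preservation of $\sigma$ via Proposition~\ref{prop:linear-subst-properties}(2) applied to the linearity of $s\sigma$; the position-set identity and the height identity by contradiction from the construction; the $\cC$-substitution property because $\Ran(\sigma)$ consists of subterms of $s$ and $t$). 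Your route---Proposition~\ref{prop:sufficient-condition-of-linearity-of-mgu} applied in both directions plus a strengthened invariant for clause~(1), and componentwise structural induction for clauses~(2)--(4)---does work, and it has the merit of reusing Proposition~\ref{prop:sufficient-condition-of-linearity-of-mgu} rather than duplicating a case analysis; but it is heavier than necessary. In particular, the step you flag as hardest is actually vacuous: once the invariant comprises LR-separation together with linearity and pairwise variable-disjointness on \emph{both} sides, neither $\UnifStep[EliminateL]$ nor $\UnifStep[EliminateR]$ is ever applicable (the eliminated variable cannot occur elsewhere in the problem), so only \textsf{Delete} and \textsf{Decompose} need to be checked and the disjointness clause of Definition~\ref{def:linearity-preserving-substitution} falls out of the solved form immediately. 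Two small points you should still make explicit: in the inductive case of~(2) you need the routine fact that, for terms with the same root whose argument pairs are pairwise variable-disjoint, the union of the componentwise mgus is an mgu of the whole; and since mgus are unique only up to a variable renaming, you should note that each of the four properties is stable under composing with a renaming so that the conclusion holds for an arbitrary $\sigma = \mgu(s,t)$ and not just the one produced by the procedure.
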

\begin{proof}
Let $\Pos_\cV(s) = \{p_1,\ldots,p_n\}$ and $\Pos_\cV(t) \setminus \Pos_\cV(s) = \{ q_1,\ldots,q_m \}$.
Since $s,t$ are linear and $\Var(s) \cap \Var(t) = \emptyset$, 
$\{ s|_{p_i} \mapsto t|_{p_i} \mid 1 \leq i \leq n \} \cup \{ t|_{q_j} \mapsto s|_{q_j} \mid 1 \leq j \leq m \}$ is an mgu of $s,t$.
Thus, we let $\sigma = \{ s|_{p_i} \mapsto t|_{p_i} \mid 1 \leq i \leq n \} \cup \{ t|_{q_j} \mapsto s|_{q_j} \mid 1 \leq j \leq m \}$.
\begin{enumerate}
\renewcommand{\labelenumi}{(\arabic{enumi})}
    \item Since $\sigma$ is an mgu of $s,t$, by Proposition~\ref{prop:linear-subst-properties}~(2), it suffices to show that $s\sigma$ is linear.
    We proceed by contradiction.
    Assume that $s\sigma$ is not linear.
    Then, there exist two different variables positions $p,p' \in \Pos_\cV(s\sigma)$ such that $(s\sigma)|_p = (s\sigma)|_{p'}$.
    We make a case analysis depending on where $p,p'$ are.
    \begin{itemize}
        \item Case where $p,p' \in \Pos(s)$.
        Since $p,p' \in \Pos_\cV(s\sigma)$, we have that $p,p' \in \Pos_\cV(s)$, and thus $(s\sigma)|_p = t|_p$ and $(s\sigma)|_{p'}=t|_{p'}$.
        This contradicts the linearity of $t$.

        \item Case where $p \in \Pos(s)$ and $p'\ \notin \Pos(s)$.
        Then, there exists a position $q' \in \Pos_\cV(s)$ such that $q' \leq p'$.
        It follows from the construction of $\sigma$ that $(s\sigma)|_p= t|_p$ and $(s\sigma)|_{p'} = t|_{p'}$.
        This contradicts the linearity of $t$.

        \item Case where $p \notin \Pos(s)$ and $p' \in \Pos(s)$.
        This case can be proved similarly to the previous case. 

        \item Case where $p \notin \Pos(s)$ and $p' \notin \Pos(s)$.
        In this case, there are two positions $q,q' \in \Pos_\cV(s)$ such that $q \leq p$ and $q' \leq p'$.
        It follows from the construction of $\sigma$ that $(s\sigma)|_p= t|_p$ and $(s\sigma)|_{p'} = t|_{p'}$.
        This contradicts the linearity of $t$.
    \end{itemize}

    \item By definition, it is clear that $\Pos(s) \subseteq \Pos(s\sigma)$, $\Pos(t) \subseteq \Pos(t\sigma)$, and $\Pos(s\sigma) = \Pos(t\sigma)$.
    It suffices to show that $\Pos(s) \cup \Pos(t) \supseteq \Pos(s\sigma)$.
    We proceed by contradiction.
    Assume that $\Pos(s) \cup \Pos(t) \not\supseteq \Pos(s\sigma)$.
    Then, there exists a position $p$ of $s\sigma$ such that $p \notin \Pos(s) \cup \Pos(t)$.
    Then, there exists an index $i \in \{1,\ldots,n\}$ and a position $p' \in \Pos(s|_{p_i}\sigma)$ such that $p = p_ip'$.
    It follows from the construction of $\sigma$ that $p' \in \Pos(t|_{p_i})$, and thus $p \in \Pos(t)$.
    This contradicts the fact that $p \notin \Pos(s) \cup \Pos(t)$.

    \item The height of a term is the maximum length of positions of the term.
    Therefore, this statement follows~(2).

    \item Since $s,t \in T(\Sigma',\cV)$, it follows from the construction of $\sigma$ that $\Ran(\sigma) \subseteq T(\Sigma',\cV)$.
    Therefore, $\sigma$ is a $\Sigma'$-substitution.
    %
\qed
\end{enumerate}
\end{proof}

The following proposition shows correctness of $\Cosubst[\Sigma']{\cdot}$ for linear $\cC$-substitutions, together with some properties.

\begin{restatable}[correctness of {$\Cosubst[\Sigma']{\sigma}$}]{proposition}{PropCosubstsCorrectness}
\label{prop:Cosubsts-correctness}
Let 
$X \subseteq \cV$, and $\sigma$ be an $X$-linear $\cC$-substitution.
Then, $\{ \rho \mid \rho \in \Cosubst[\cC]{\sigma}, \, t\rho \ne t\sigma \}$ is a finite complement of $\sigma$ w.r.t.\ any linear term $t:\iota$ with $\Var(t) \subseteq X$,
(i.e.,
$\GInst[\cC]{t} = \GInst[\cC]{t\sigma} \uplus \bigcup_{\rho \in \Cosubst[\cC]{\sigma},\, t\rho \ne t\sigma} \GInst[\cC]{t\rho}$)
such that
\begin{enumerate}
\renewcommand{\labelenumi}{(\arabic{enumi})}
    \item any substitution $\rho \in \Cosubst[\cC]{\sigma}$ is an $X$-linear substitution
    such that $\Dom(\rho) = \Dom(\sigma)$, $\rho \not\lesssim \sigma$, and $\sigma \not\lesssim \rho$,
    \item if $\sigma$ is $X$-linearity-preserving, then
    all substitutions in $\Cosubst[\cC]{\sigma}$ are $X$-linearity-preserving,
    \item if $\Ran(\sigma) \subseteq \cV$, then $\Cosubst[\Sigma']{\sigma} = \emptyset$, 
    \item $\Height{x\sigma} \geq \Height{x\rho}$ for any substitution $\rho \in \Cosubst[\cC]{\sigma}$ and variable $x \in \cV$,
    \item for any substitution $\rho \in \Cosubst[\cC]{\sigma}$ and $Y \subseteq \Dom(\sigma)$,
        if $\Ran(\sigma|_Y) \not\subseteq \cV$,
        then there exists at least one variable $x \in \Dom(\rho)\cap Y$ such that $x\rho \notin \cV$ (and thus, $x < x\rho$),
        and
    \item $\{ t\sigma \} \cup \{ t\rho \mid \rho \in \Cosubst[\Sigma']{\sigma} \}$ is a finite pairwise $\cC$-non-overlapping set of linear $\cC$-terms with sort $\iota$.
\end{enumerate}
\end{restatable}
\begin{proof}
Let $\Dom(\sigma) = \{x_1:\iota_1,\ldots,x_n:\iota_n\}$, and $t$ be a linear term with sort $\iota$ such that $\Var(t) \subseteq X$.
Since $\sigma$ is $X$-linear, $x_1\sigma,\ldots,x_n\sigma$ are linear.
It follows from Proposition~\ref{prop:Cocterms-correctness} that for any $i \in \{1,\ldots,n\}$,
\begin{enumerate}
\renewcommand{\labelenumi}{(\roman{enumi})}
\leftskip=1ex
    \item $\Cocterm[\Sigma']{x_i\sigma}$ is a finite pairwise $\cC$-non-overlapping set of linear non-variable $\cC$-terms with sort $\iota_i$,
    \item $x_i\sigma$ and $u$ are $\cC$-non-overlapping for any term $u \in \Cocterm[\cC]{x_i\sigma}$,
        and
    \item $\Height{x_i\sigma}+1 \geq \Height{u}$ for any term $u \in \Cocterm[\cC]{x_i\sigma}$.
\end{enumerate}

We first prove the statements~(1)--(6) in order.
\begin{enumerate}
\renewcommand{\labelenumi}{(\arabic{enumi})}
    \item 
    Let $\rho \in \Cosubst[\cC]{\sigma}$.
    It follows from~(i) that $\rho$ is $X$-linear and $\Dom(\sigma) = \Dom(\rho)$.
    We now show that $\rho \not\lesssim \sigma$ and $\sigma \not\lesssim \rho$.
    We proceed by contradiction.
    Assume that $\rho \lesssim \sigma$ or $\sigma \lesssim \rho$.
    Then, there exists an index $i \in \{1,\ldots,n\}$ such that $x_i\rho \lesssim x_i\sigma$ or $x_i\sigma \lesssim x_i\rho$.
    Then, we have that $\GInst[\cC]{x_i\rho} \supseteq \GInst[\cC]{x_i\sigma}$ or $\GInst[\cC]{x_i\sigma} \supseteq \GInst[\cC]{x_i\rho}$.
    Thus, $x_i\rho$ and $x_i\sigma$ are not $\Sigma'$-non-overlapping.
    This contradicts~(ii).

    \item 
    Assume that $\sigma$ is $X$-linearity-preserving.
    Then, we have that
    $\Var(x_i\sigma)\cap\Var(x_j\sigma) = \emptyset$ for any $i,j \in \{1,\ldots,n\}$ with $i\ne j$.
    Let $\rho \in \Cosubst[\cC]{\sigma}$.
    By definition, we have that $\rho \ne \sigma$.
    We show that $\rho$ is $X$-linearity-preserving.
    We proceed by contradiction.
    Assume that $\rho$ is not $X$-linearity-preserving.
    Then, one of the following hold:
    \begin{enumerate}
    \renewcommand{\labelenumii}{(\roman{enumii})}
    \addtocounter{enumii}{3}
    \leftskip=1ex
        \item $x_i\rho$ is not linear for some $i \in \{1,\ldots,n\}$,
            or
        \item there are two distinct variables $x,x'$ such that $\Var(x\rho) \cap \Var(x'\rho) \ne \emptyset$.
    \end{enumerate}
    Since $\Cocterm[\Sigma']{x_1\sigma},\ldots,\Cocterm[\Sigma']{x_n\sigma}$ are sets of linear terms, all $x_1\rho,\ldots,x_n\rho$ are linear,
    and thus~(iv) does not hold and~(v) holds.
    Let $x,x'$ be distinct variables such that $\Var(x\rho) \cap \Var(x'\rho) \ne \emptyset$.
    Let $y \in \Var(x\rho) \cap \Var(x'\rho)$.
    We make a case analysis depending on whether $y \in \Var(x\sigma)$.
    \begin{itemize}
        \item Case where $y \in \Var(x\sigma)$.
        It follows from the definition of $\Cocterm[\Sigma']{\cdot}$ over $\cC$-terms that $y \in \Var(x'\sigma)$.
        This contradicts the assumption that $\sigma$ is $X$-linearity-preserving.
        \item Case where $y \notin \Var(x\sigma)$.
        Then, $y$ is a variable introduced in constructing $x\rho$, and thus $y$ is fresh, i.e., $y \notin \Var(x'\sigma)$.
        Since $y \in \Var(x'\rho) \setminus \Var(x'\sigma)$, $y$ is freshly introduced in constructing $x'\rho$, and thus $y \notin \Var(x\rho)$.
        This contradicts the fact that $y \in \Var(x\rho) \cap \Var(x'\rho)$.
    \end{itemize}

    \item Assume that $\Ran(\sigma) \subseteq \cV$.
    Then, for any $i\in \{1,\ldots,n\}$, we have that $\Cocterm[\Sigma']{x_i\sigma} = \emptyset$.
    By definition, this implies that $\Cosubst[\Sigma']{\sigma} = \emptyset$.

    \item Let $\rho \in \Cosubst[\Sigma']{\sigma}$ and $x \in \cV$.
    By definition, we have that if $x \notin \Dom(\sigma)$, then $x\sigma=x\rho=x$ and hence $\Height{x\sigma} =  \Height{x\rho} = \Height{x} = 1$.
    Thus, we consider the remaining case where $x \in \Dom(\sigma)$.
    By definition, we have that $x\rho \in \Cocterm[\Sigma']{x\sigma}$.
    It follows from Proposition~\ref{prop:Cocterms-correctness}~(3) that 
    $\Height{x\sigma} \geq \Height{x\rho}$.

    \item Let $\rho \in \Cosubst[\cC]{\sigma}$ and $Y \subseteq \Dom(\sigma)$ such that $\Ran(\sigma|_Y) \not\subseteq \cV$.
    Since $\Ran(\sigma|_Y) \not\subseteq \cV$, there exists a variable $x \in \Dom(\sigma) \cap Y$ such that $x\sigma$ is not a variable.
    Since $\Cocterm[\cC]{x\sigma}$ is a set of terms in $T(\cC,\cV)\setminus \cV$, $x\rho$ is not a variable.
    Therefore, this statement holds.
    
    \item 
    Since $\sigma$ is $X$-linearity-preserving, it follows from Proposition~\ref{prop:linear-subst-properties} that $t\sigma$ is linear.
    It follows from~(1) that $\{t\rho \mid \rho \Cosubst[\cC]{\sigma} \}$ is a finite set of linear $\cC$-terms with sort $\iota$.
    Thus, $\{ t\sigma \} \cup \{ t\rho \mid \rho \in \Cosubst[\cC]{\sigma} \}$ is a finite set of linear $\cC$-terms with sort $\iota$.
    
    It is clear that $\{t\sigma\}$ and $\{t\rho \mid \rho \Cosubst[\Sigma']{\sigma}, \, t\rho \ne t\sigma \}$ are disjoint (i.e., $\{t\sigma\} \cap \{t\rho \mid \rho \in \Cosubst[\Sigma']{\sigma}, \, t\rho \ne t\sigma \} = \emptyset$).
    To complete the proof of this statement, we show 
    that 
    \begin{enumerate}
    \renewcommand{\labelenumii}{(\alph{enumii})}
    \leftskip=1ex
        \item $\{t\sigma\}$ and $\{ t\rho \mid \rho \in \Cosubst[\Sigma']{\sigma}, \, t\rho \ne t\sigma \}$ are $\cC$-non-overlapping,
            and
        \item $\{ t\rho \mid \rho \in \Cosubst[\Sigma']{\sigma}, \, t\rho \ne t\sigma \}$ is $\cC$-non-overlapping.
    \end{enumerate}
    We prove~(a) and~(b) in order.
    \begin{enumerate}
    \renewcommand{\labelenumii}{(\alph{enumii})}
    \leftskip=1ex
    \item 
    We show that $t\sigma$ and $t\rho$ is $\cC$-non-overlapping for any $\rho \in \Cosubst[\cC]{\sigma}$ such that $t\rho \ne t\sigma$.
    We proceed by contradiction.
    Assume that there exists a substitution $\rho \in \Cosubst[\Sigma']{\sigma}$ such that $t\rho \ne t\sigma$ and $t\sigma,t\rho$ are not $\cC$-non-overlapping.
    Then, we have that $\GInst[\cC]{t\sigma} \cap \GInst[\cC]{t\rho} \ne \emptyset$.
    Thus, there exists substitutions $\theta_1,\theta_2$ such that $(t\sigma)\theta_1 = (t\rho)\theta_2 \in \GInst[\cC]{t\sigma} \cap \GInst[\cC]{t\rho}$.
    Since $t\rho \ne t\sigma$ and $(t\sigma)\theta_1 = (t\rho)\theta_2$, we have that $\sigma \lesssim \rho$ or $\rho \lesssim \sigma$.
    It follows from~(1) that $\sigma \not\lesssim \rho$ and $\rho \not\lesssim \sigma$.
    This contradicts the fact that $\sigma \lesssim \rho$ or $\rho \lesssim \sigma$.

    \item
    We show that $t\rho$ and $t\rho'$ are $\cC$-non-overlapping for any $\rho,\rho' \in \Cosubst[\cC]{\sigma}$ with $t\rho\ne t\rho'$.
    Since $t\rho \ne t\rho'$, there exists a variable $x \in \Var(t)$ such that $x\rho \ne x\rho'$.
    By definition, we have that $x\rho,x\rho' \in \Cocterm[\Sigma']{x\sigma}$.
    It follows from~(1) that $x\rho$ and $x\rho'$ are $\cC$-non-overlapping.
    It follows from Proposition~\ref{prop:sufficient-condition-for-F-non-overlappingness-of-terms} that $t\rho$ and $t\rho'$ are $\cC$-non-overlapping.
    \end{enumerate}
%
\end{enumerate}

Finally, we show that
$\{ \rho \mid \rho \in \Cosubst[\cC]{\sigma}, \, t\rho \ne t\sigma \}$ is a finite complement of $\sigma$ w.r.t.\ $t$.
It follows from~(i) and the definition of $\Cosubst[\cC]{\sigma}$ that $\Cosubst[\cC]{\sigma}$ is a set of $\cC$-substitutions $\rho$ such that $\Dom(\rho) = \Dom(\sigma)$.
Since $\Cocterm[\cC]{x_1\sigma},\ldots,\Cocterm[\cC]{x_n\sigma}$ are finite sets of linear $\cC$-terms, $\Cosubst[\Sigma']{\sigma}$ is a finite set of $\cC$-substitutions.
To complete the proof, we show that $\Cosubst[\cC]{\sigma}$ is a complement of $\sigma$ w.r.t.\ $t$, i.e., 
$\GInst[\cC]{t} = \GInst[\cC]{t\sigma} \uplus \bigcup_{\rho \in \Cosubst[\cC]{\sigma},\, t\rho \ne t\sigma} \GInst[\cC]{t\rho}$.
It follows from~(6) that 
$\GInst[\cC]{t\sigma} \cap (\bigcup_{\rho \in \Cosubst[\cC]{\sigma},\, t\rho \ne t\sigma} \GInst[\cC]{t\rho}) = \emptyset$.
By definition, it is trivial that $\GInst[\cC]{t} \supseteq \GInst[\cC]{t\sigma} \uplus \bigcup_{\rho \in \Cosubst[\cC]{\sigma},\, t\rho \ne t\sigma} \GInst[\cC]{t\rho}$.
Thus, it suffices to show that $\GInst[\cC]{t} \subseteq \GInst[\cC]{t\sigma} \cup \bigcup_{\rho \in \Cosubst[\cC]{\sigma},\, t\rho \ne t\sigma} \GInst[\cC]{t\rho}$.
We proceed by contradiction.
Assume that $\GInst[\cC]{t} \not\subseteq \GInst[\cC]{t\sigma} \cup \bigcup_{\rho \in \Cosubst[\cC]{\sigma},\, t\rho \ne t\sigma} \GInst[\cC]{t\rho}$.
Let $\theta$ be a $\Var(t)$-ground $\cC$-substitution such that $t\theta \in \GInst[\cC]{t}$
and
$t\theta \notin \GInst[\cC]{t\sigma} \cup \bigcup_{\rho \in \Cosubst[\cC]{\sigma},\, t\rho \ne t\sigma} \GInst[\cC]{t\rho}$.
Then, we have that $\sigma \not\lesssim \theta$ and $\rho \not\lesssim \theta$ for any $\rho \in \Cosubst[\cC]{\sigma}$.
It follows from $\sigma \not\lesssim \theta$ that
there exists a variable $x \in \Dom(\sigma)$ such that 
$x\sigma \not\leq x\theta$.
Let $p$ be the leftmost outermost position of $x\sigma$ w.r.t.\ $\Root((x\sigma)|_p) \ne \Root((x\theta)|_p)$.
Then, by the definition of $\Cosubst[\cC]{\sigma}$, there exists a substitution $\rho \in \Cosubst[\cC]{\sigma}$ such that $x\rho \leq x\theta$.
This contradicts the fact that $\rho \not\lesssim \theta$.
\qed
\end{proof}

In the rest of the appendix, 
we denote $\{ s\rho \mid \rho \in \Cosubst[\cC]{\sigma}, \, s\rho \ne s\sigma \}$
by $\Copattern[\cC]{s}{\sigma}$.
The following proposition shows the correctness and some properties of $\Copattern[\Sigma']{\cdot}{\cdot}$ for pairs of terms and linear substitutions.

\begin{restatable}[correctness of {$\Copattern[\Sigma']{t}{\sigma}$}]{proposition}{PropCopatternsCorrectness}
\label{prop:Copatterns-correctness}
Let 
$X \subseteq \cV$, $t:\iota$ be a term, and $\sigma$ an $X$-linear $\cC$-substitution such that $\Var(t) \subseteq X$.
Then, $\{ t\rho \mid \rho \in \Copattern[\Sigma']{t}{\sigma} \}$ is a finite pairwise $\cC$-non-overlapping complement of $t\sigma$
(i.e., $\GInst[\Sigma']{t} = \GInst[\cC]{ t\sigma } \uplus \GInst[\cC]{\Copattern[\cC]{t}{\sigma}}$)
such that 
\begin{enumerate}
\renewcommand{\labelenumi}{(\arabic{enumi})}
    \item 
    any term $u \in \Copattern[\cC]{t}{\sigma}$ is a non-variable term with sort $\iota$,
    \item if $\Ran(\sigma|_{\Var(t)}) \subseteq \cV$, then $\Copattern[\cC]{t}{\sigma} = \emptyset$,
    \item if $t$ is a pattern, then all terms in $\Copattern[\cC]{t}{\sigma}$ are patterns,
    \item if $\sigma$ is $X$-linearity-preserving, then
    any term $u \in \Copattern[\cC]{t}{\sigma}$ is linear,
    \item $t < u$ for any term $u \in \Copattern[\cC]{t}{\sigma}$,
        and
    \item $\Height{t\sigma} \geq \Height{u}$ for any term $u \in \Copattern[\cC]{t}{\sigma}$.
\end{enumerate}
\end{restatable}
\begin{proof}
It follows from Proposition~\ref{prop:Cocterms-correctness} that
\begin{enumerate}
\renewcommand{\labelenumi}{(\alph{enumi})}
\leftskip=1ex
    \item $\Cosubst[\Sigma']{\sigma}$ is a finite set of $X$-linear $\cC$-substitutions $\rho$ such that $\Dom(\rho) = \Dom(\sigma)$, $\rho \not\lesssim \sigma$, and $\sigma \not\lesssim \rho$,
    \item if $\sigma$ is $X$-linearity-preserving, then any $\rho \in \Cosubst[\cC]{\sigma}$ is $X$-lineaerity-preserving, 
    \item if $\Ran(\sigma) \subseteq \cV$, then $\Cosubst[\Sigma']{\sigma} = \emptyset$, 
    \item $\Height{x\sigma} \geq \Height{x\rho}$ for any substitution $\rho \in \Cosubst[\Sigma']{\sigma}$ and variable $x \in \cV$,
    \item for any substitution $\rho \in \Cosubst[\Sigma']{\sigma}$ and $Y \subseteq \Dom(\sigma)$,
        if $\Ran(\sigma|_Y) \not\subseteq \cV$,
        then there exists at least one variable $x \in \Dom(\rho)\cap Y$ such that $x\rho \notin \cV$,
        and
    \item $\{ t\rho \mid \rho \in \Cosubst[\Sigma']{\sigma} \}$ is a finite pairwise $\cC$-non-overlapping complement of $t\sigma$
        (i.e., $\GInst[\cC]{t} = \GInst[\cC]{\{ t\sigma \}} \uplus \GInst[\cC]{\{ t\rho \mid \rho \in \Cosubst[\cC]{\sigma} \}}$).
\end{enumerate}
It follows from~(f) that
$\{ t\rho \mid \rho \in \Copattern[\Sigma']{t}{\sigma} \}$ is a finite pairwise $\cC$-non-overlapping complement of $t\sigma$.
The first statement~(1) follows~(a).
The second statement~(2) follows Proposition~\ref{prop:Cosubsts-correctness}~(3).
The fourth statement~(4) follows~(d).
We show the remaining statements~(3),\,(5),\,(6) in order.
\begin{enumerate}
\renewcommand{\labelenumi}{(\arabic{enumi})}
\addtocounter{enumi}{2}
\item
Assume that $t$ is a pattern.
It follows from~(a) that all substitutions in $\Cosubst[\cC]{\sigma}$ are $\cC$-substitutions.
Thus, for any $\rho \in \Cosubst[\cC]{\sigma}$, $t\rho$ is a pattern.
Therefore, all terms in $\Copattern[\cC]{t}{\sigma}$ are patterns.

\stepcounter{enumi}

\item
Let $u \in \Copattern[\Sigma']{t}{\sigma}$.
We proceed by contradiction.
Assume that $t \not< u$.
By definition, there exists a substitution $\rho \in \Cosubst[\cC]{\sigma}$ such that $u=t\rho$, and thus $t \lesssim t\rho = u$.
By the assumption, we have that $t = u$, i.e., $\rho$ is the identity substitution.
Since $\rho \in \Cosubst[\cC]{\sigma}$, we have that $\Cosubst[\cC]{\sigma} \ne \emptyset$ and $\Ran(\sigma|_{\Var(t)}) \not\subseteq \cV$.
Thus, it follows from~(e) above that there exists a variable $x \in \Dom(\rho) \cap \Var(t)$ such that $x\rho \notin \cV$.
Hence, we have that $t \ne t\rho$.
This contradicts the fact that $t = u$.

\item
Let $u \in \Copattern[\cC]{t}{\sigma}$ and $\rho \in \Cosubst[\cC]{\sigma}$ such that $u=t\rho$.
We prove this statement by structural induction on $t$.
Since the case where $t \in \cV$ is trivial by~(d), we consider the remaining case where $t \notin \cV$.
Let $t = f(t_1,\ldots,t_n)$.
Then, we have that $\Height{t\sigma} = 1 + \max\{ \Height{t_i\sigma} \mid 1 \leq i \leq n \}$.
By the induction hypothesis, we have that $\Height{t_i\sigma} \geq \Height{t_i\rho}$ for any $i \in \{1,\ldots,n\}$.
Therefore, we have that $\Height{t\sigma} = 1 + \max\{ \Height{t_i\sigma} \mid 1 \leq i \leq n \} \geq 1 + \max\{ \Height{t_i\rho} \mid 1 \leq i \leq n \} = \Height{t\rho} = \Height{u}$.
\qed
\end{enumerate}
\end{proof}

The following proposition shows the correctness of $\ominus$ over unconstrained patterns.
\begin{restatable}[correctness of $\ominus$ over patterns]{proposition}{PropPatternOminusProperties}
\label{prop:pattern-ominus-properties}
Let 
$s:\iota$ be a pattern, and $t:\iota'$ a linear pattern. 
Then, 
$s \ominus t$ is a finite pairwise $\cC$-non-overlapping set of patterns with sort $\iota$
such that 
\begin{enumerate}
\renewcommand{\labelenumi}{(\arabic{enumi})}
    \item if $s$ is linear, then $s \ominus t$ is a set of linear patterns,
    \item $s < u$ for any term $u \in {s \ominus t}$,
    \item $\max\{\Height{s},\Height{t}\} \geq \Height{u}$ for any term $u \in {s \ominus t}$,
        and
    \item $\GInst[\cC]{s \ominus t} = \GInst[\cC]{s} \setminus \GInst[\cC]{t}$.
\end{enumerate}
\end{restatable}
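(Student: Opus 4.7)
The plan is to perform case analysis on whether $s$ and a fresh-variable renaming $t'$ of $t$ (with $t\doteq t'$ and $\Var(s)\cap\Var(t')=\emptyset$) are unifiable, and in the unifiable case to reduce each of the five claims to the correctness propositions established earlier for $\Cocterm[\cC]{\cdot}$, $\Cosubst[\cC]{\cdot}$, and $\Copattern[\cC]{\cdot}{\cdot}$.

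In the non-unifiable branch, $s\ominus t = \{s\}$, so the five properties are immediate: a singleton is finite and pairwise $\cC$-non-overlapping, it inherits linearity and sort from $s$, the height bound collapses to $\Height{s}\leq\max\{\Height{s},\Height{t}\}$, and non-unifiability together with Proposition~\ref{prop:disjointness-of-F-unifiable-terms} gives $\GInst[\cC]{s}\cap\GInst[\cC]{t}=\emptyset$, whence $\GInst[\cC]{s\ominus t}=\GInst[\cC]{s}=\GInst[\cC]{s}\setminus\GInst[\cC]{t}$.

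In the unifiable branch, let $\sigma=\mgu(s,t')$. Two preparatory facts make the rest routine: first, $\sigma|_{\Var(s)}$ is linear by Proposition~\ref{prop:sufficient-condition-of-linearity-of-mgu}, which is exactly where the linearity hypothesis on the divisor $t'$ is used; second, $\sigma$ is a $\cC$-substitution by a short induction on the unification transformation, since $s$ and $t'$ are patterns and every elimination step introduces a binding to a $\cC$-subterm of the current problem. Applying Proposition~\ref{prop:Copatterns-correctness} with $X=\Var(s)$ to $s$ then yields, in one blow, the finiteness, pairwise $\cC$-non-overlappingness, sort condition, and pattern property of $\Copattern[\cC]{s}{\sigma}$, as well as the strict-instance claim (2) (via part (5) there), the per-element height bound $\Height{s\sigma}\geq\Height{u}$ (part (6) there), and the decomposition $\GInst[\cC]{s}=\GInst[\cC]{s\sigma}\uplus\GInst[\cC]{\Copattern[\cC]{s}{\sigma}}$. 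Combining this decomposition with $\GInst[\cC]{s}\cap\GInst[\cC]{t'}=\GInst[\cC]{s\sigma}$, an immediate consequence of $s\sigma=t'\sigma$ and Proposition~\ref{prop:disjointness-of-F-unifiable-terms}, proves claim (4). For linearity claim (1), if $s$ is additionally linear I upgrade $\Var(s)$-linearity of $\sigma$ to $\Var(s,t')$-linearity-preservation via Proposition~\ref{prop:propeties-of-F-unifiable-linear-terms}~(1), propagate this to every $\rho\in\Cosubst[\cC]{\sigma}$ via Proposition~\ref{prop:Cosubsts-correctness}~(2), and conclude linearity of $s\rho$ via Proposition~\ref{prop:linear-subst-properties}~(1).

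The main obstacle is the outer side of the height estimate in claim (3), $\Height{s\sigma}\leq \max\{\Height{s},\Height{t}\}$. When $s$ is linear this drops out of Proposition~\ref{prop:propeties-of-F-unifiable-linear-terms}~(3), so combined with the per-element bound from Proposition~\ref{prop:Copatterns-correctness}~(6) claim (3) is immediate. For non-linear $s$, the argument requires a direct structural analysis of how $\sigma$ grows $s$: each variable position of $s$ is replaced by a subterm whose height is controlled by $\Height{t'}=\Height{t}$ through the matching positions of $t'$, while the shape of $s\sigma$ above those positions is inherited verbatim from $s$. This is the only step that does not simply delegate to the linear-case propositions reused from the unconstrained complement theory; all remaining pieces are a routine reassembly of the already established propositions.
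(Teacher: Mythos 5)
Your proposal follows the paper's route essentially verbatim for the set-level properties and for claims (1), (2), and (4): the non-unifiable case is dispatched trivially, and the unifiable case is delegated to Proposition~\ref{prop:Copatterns-correctness}, with the linearity of $\sigma|_{\Var(s)}$ coming from Proposition~\ref{prop:sufficient-condition-of-linearity-of-mgu} and the ground-instance computation for claim (4) from Proposition~\ref{prop:disjointness-of-F-unifiable-terms}. Those parts are correct and match the paper.

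The genuine gap is exactly where you locate ``the main obstacle'': claim (3) for non-linear $s$. Your sketched ``direct structural analysis'' cannot be completed, because the inequality is false there. Take $\cC \supseteq \{\symb{c}:\iota,\ \symb{g}:\iota\to\iota\}$, $s=\symb{f}(x,\symb{g}(x))$ and $t=\symb{f}(\symb{g}(y),z)$. Then $\sigma=\{x\mapsto\symb{g}(y),\ z\mapsto\symb{g}(\symb{g}(y))\}$, $\Cocterm[\cC]{\symb{g}(y)}=\{\symb{c}\}$, and $s\ominus t$ contains $u=\symb{f}(\symb{c},\symb{g}(\symb{c}))$ with $\Height{u}=3$, whereas $\max\{\Height{s},\Height{t}\}=2$. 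The point your sketch misses is that for a variable occurring non-linearly in $s$, the binding $x\sigma$ is dictated by the subterm of $t'$ at one (shallow) occurrence of $x$ but is then also substituted at a deeper occurrence of $x$, so $\Height{s\sigma}$, and hence the heights of the terms $s\rho$, are not controlled by $\max\{\Height{s},\Height{t}\}$. For what it is worth, the paper's own proof has the same lacuna: it derives (3) from Proposition~\ref{prop:propeties-of-F-unifiable-linear-terms}~(3), whose hypotheses require \emph{both} $s$ and $t$ to be linear, so the published argument also only covers a linear dividend. Since claim (3) is only used downstream in Proposition~\ref{prop:set-ominus-properties}, where all patterns are linear, the correct repair is to restrict (3) to linear $s$ --- which is exactly the case your first observation (via Propositions~\ref{prop:propeties-of-F-unifiable-linear-terms}~(3) and~\ref{prop:Copatterns-correctness}~(6)) already proves.
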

\begin{proof}
Since the case where $s \ominus t = \{s\}$ is trivial, we consider the remaining case where $s \ominus t \ne \{s\}$.
Let $s \ominus t = \Copattern[\cC]{s}{\sigma}$, where $t'$ is a renamed variant of $t$ with $\Var(s)\cap\Var(t')=\emptyset$ and $\sigma = \mgu[\cC](s,t')$.
It follows from Proposition~\ref{prop:Copatterns-correctness} that
\begin{enumerate}
\renewcommand{\labelenumi}{(\alph{enumi})}
\leftskip=1ex
    \item $\Copattern[\Sigma']{s}{\sigma}$ is a finite pairwise $\cC$-non-overlapping set of linear patterns with sort $\iota$,
    \item if $\sigma$ is $\Var(s,t')$-linearity-preserving, then any $u \in s \ominus t$ is linear,
    \item if $\Ran(\sigma|_{\Var(s)}) \subseteq \cV$, then $\Copattern[\Sigma']{s}{\sigma} = \emptyset$,
    \item $s < u$ for any term $u \in \Copattern[\Sigma']{s}{\sigma}$,
    \item $\Height{s\sigma} \geq \Height{u}$ for any term $u \in \Copattern[\Sigma']{s}{\sigma}$,
        and
    \item $\GInst[\cC]{s} = \GInst[\cC]{s\sigma} \uplus \GInst[\cC]{\Copattern[\cC]{s}{\sigma}}$.
\end{enumerate}
Thus, it follows from~(a) that
$s \ominus t$ is a finite pairwise $\cC$-non-overlapping set of patterns with sort $\iota$.
The first statement~(1) follows~(b).
The second statement~(2) follows~(d).
It follows from Proposition~\ref{prop:propeties-of-F-unifiable-linear-terms}~(3) and~{e)} that the third statement~(3) holds.

We show the fourth statement~(4).
Since $\sigma$ is an mgu of $s,t'$, we have that $s\sigma = t'\sigma$.
Since $t'$ is a renamed variant of $t$, we have that $\GInst[\cC]{t} = \GInst[\cC]{t'}$ and $\GInst[\cC]{t} \supseteq \GInst[\cC]{t'\sigma}$.
It follows from~(f) that $\GInst[\cC]{\Copattern[\cC]{s}{\sigma}} = \GInst[\cC]{s} \setminus \GInst[\cC]{s\sigma}$.
It follows from Proposition~\ref{prop:disjointness-of-F-unifiable-terms} that
$\GInst[\cC]{s\sigma}$ ($=\GInst[\cC]{t\sigma}$), $\GInst[\cC]{s} \setminus \GInst[\cC]{s\sigma}$, and  $\GInst[\cC]{t'} \setminus \GInst[\cC]{t'\sigma}$ are pairwise $\cC$-non-overlapping.
Thus, we have that $\GInst[\cC]{s} \setminus \GInst[\cC]{s\sigma} = \GInst[\cC]{s} \setminus \GInst[\cC]{t'}$.
Therefore, we have that $\GInst[\cC]{s \ominus t} = \GInst[\cC]{s} = \GInst[\cC]{s\sigma} \uplus \GInst[\cC]{\Copattern[\cC]{s}{\sigma}} = \GInst[\cC]{s} \setminus \GInst[\cC]{t'} = \GInst[\cC]{s} \setminus \GInst[\cC]{t}$.
\qed
\end{proof}


\ThmConstrainedPatternOminusPropertiesX*
\begin{proof}[Sketch]
This is a straightforward extension of Proposition~\ref{prop:pattern-ominus-properties} to constrained patterns by means of the following property:
\begin{itemize}
    \item $\Copattern{s}{\sigma}$ is a set of value-free patterns because $s,t$ are value-free,
    \item $\sigma|_{\Var(\phi,\psi')} = \rho|_{\Var(\phi,\psi')}$ for any $\rho \in \Cosubst[\cC]{\sigma}$ because $s,t$ are value-free and there is no constructor $c$ in $\cC \setminus \Val$ such that
    $c: \iota_1 \times \cdots \times \iota_n \to \iota \in \cStheory$,
        and
    \item
$
    \GInst[\cC]{\CTerm{s\sigma}{\phi\sigma}} = 
    \GInst[\cC]{\CTerm{s\sigma}{\phi\sigma \land \psi'\sigma}} \uplus 
    \GInst[\cC]{\CTerm{s\sigma}{\phi\sigma \land \neg\psi'\sigma}}
$ for an arbitrary constraint $\psi'$.
\qed
\end{itemize}
\end{proof}

\section{Proof of Theorem~\ref{thm:constrained-set-ominus-properties}}

It is not so trivial to prove correctness of $\ominus$ over sets of unconstrained linear patterns.
For this reason, we first show correctness of the unconstrained case.



To prove the correctness of $\ominus$ over sets of linear patterns, we need induction with a well-founded order.
Let us consider the case where $P$ is pairwise $\cC$-non-overlapping, $P = P' \uplus \{s\}$, $Q = Q' \uplus \{t\}$, and
$s \ominus t \ne \{s\}$.
In this case, to compute $P \ominus Q$, we recursively call $\ominus$ for 
sets of linear patterns 
as $(P' \cup (s \ominus t)) \ominus (Q' \cup (t \ominus s))$.
It follows from Proposition~\ref{prop:pattern-ominus-properties}~(3) that the maximum height of terms in $P\cup Q$ is not less than that of 
$P' \cup (s \ominus t) \cup Q' \cup (t \ominus s)$, i.e., 
$\max\{ \Height{u} \mid u \in P\cup Q \} \geq \max\{ \Height{u'} \mid u' \in P' \cup (s \ominus t) \cup Q' \cup (t \ominus s) \}$.
Let $h = \max\{ \Height{u} \mid u \in P\cup Q \}$.
Then, in computing $P \ominus Q$, the heights of all terms considered during the computation are less than or equal to $h$, and thus such terms are finitely many.
We denote by $T(\Sigma,\cV)_{\leq h}$ the set of terms whose heights are less than or equal to $h$.
Then, we define a quasi-order $\SuccsimH$ over terms in $T(\Sigma,\cV)$ as follows:
$s \SuccsimH t$ if and only if $s \lesssim t$, $\Height{s} \leq h$, and $\Height{t} \leq h$.
We define the strict part $\SuccH$ of $\SuccsimH$ as ${\SuccH} = ({\SuccsimH} \setminus {\PrecsimH})$.
Note that $s \SuccH t$ if and only if $s < t$, $\Height{s} \leq h$, and $\Height{t} \leq h$.

\begin{restatable}{proposition}{PropWellFoundednessOfSuccH}
\label{prop:well-foundedness-of-SuccH}
For any $h \in \mathbb{N}$, $\SuccH$ is a well-founded order.
\end{restatable}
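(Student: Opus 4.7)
The plan is to prove well-foundedness by contradiction: I will assume an infinite chain $t_0 \SuccH t_1 \SuccH t_2 \SuccH \cdots$ with $\Height{t_i} \leq h$ for every $i$, and exhibit a natural-number-valued measure on this chain that strictly decreases at every step.

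First, I would introduce two measures on a term $t$: $N(t)$, the number of non-variable positions of $t$, and $V(t) = |\Var(t)|$. The key step is to show that each strict step $t_i \SuccH t_{i+1}$ satisfies either $N(t_{i+1}) > N(t_i)$, or $N(t_{i+1}) = N(t_i)$ and $V(t_{i+1}) < V(t_i)$. I would prove this by writing $t_{i+1} = t_i\sigma$ and splitting on whether some $x \in \Var(t_i)$ has $x\sigma \notin \cV$: if yes, at least one new non-variable symbol occurrence is introduced, so $N$ strictly grows; otherwise $\sigma$ sends every variable of $t_i$ to a variable, and since strictness rules out $\sigma$ being a pure renaming on $\Var(t_i)$, two distinct variables of $t_i$ must collapse, forcing $V(t_{i+1}) < V(t_i)$ while $N$ is unchanged.

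Second, I would use the height bound to argue that $N(t_i)$ is uniformly bounded above by a constant $N_h$ depending only on $h$ and the signature. Combined with $V(t_i) \geq 0$, the lexicographic pair $(N_h - N(t_i),\, V(t_i)) \in \mathbb{N}^2$ then lives in a well-founded set and, by the preceding step, strictly decreases along the chain, yielding the desired contradiction.

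I expect the main obstacle to be the boundedness of $N(t_i)$: without a bound on arities no global $N_h$ exists, so the argument needs either an explicit bounded-arity assumption on the signature, or the refined observation that in the actual use of this proposition inside the proof of Theorem~\ref{thm:constrained-set-ominus-properties} the only terms considered are value-free patterns whose symbols come from the finite subsignature $\Sigterm$, which makes the bound automatic.
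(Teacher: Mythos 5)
Your proposal is correct, but it proves well-foundedness by a genuinely different route than the paper. The paper's proof is a pigeonhole argument by contradiction: an infinite chain $s_1 \SuccH s_2 \SuccH \cdots$ would yield infinitely many pairwise non-equivalent (up to renaming) terms, all of height at most $h$, whereas $T(\Sigma,\cV)_{\leq h}$ is finite up to renaming because $\Sigma$ is finite --- contradiction. You instead construct an explicit ranking function: the lexicographic pair $(N_h - N(t),\, |\Var(t)|)$, where $N(t)$ counts non-variable positions, together with the case split showing that a strict instantiation step either introduces a non-variable symbol (so $N$ grows) or merely collapses variables non-injectively (so $N$ is constant and $|\Var(t)|$ drops); both subcases are argued correctly, in particular the observation that an injective variable-to-variable substitution on $\Var(t_i)$ would make $t_i$ and $t_{i+1}$ variants, contradicting strictness. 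What each approach buys: the paper's argument is shorter, while yours yields an explicit bound on the length of $\SuccH$-chains and, as you note, only needs bounded arity rather than finiteness of the whole signature. Your closing caveat is well taken and is in fact \emph{more} careful than the paper: the paper's proof simply asserts ``$\Sigma$ is finite,'' even though LCTRS signatures are infinite in general (e.g., $\Val[{\sort[\scriptsize]{int}}]$); this is harmless only because the terms arising in Theorem~\ref{thm:constrained-set-ominus-properties} are value-free patterns over the finite $\Sigterm$, exactly the refinement you identify. Neither your argument nor the paper's has a gap beyond this shared (and resolvable) finiteness assumption.
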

\begin{proof}
We proceed by contradiction.
Assume that $\SuccH$ is not well-founded.
Then, there exists an infinite sequence $s_1 \SuccH s_2 \SuccH s_3 \SuccH \cdots$.
Thus, we have that $s_1 < s_2 < s_3 < \cdots$ and $\Height{s_i} \leq h$ for all $i \geq 1$.
Thus, we have that $\{ s_1, s_2, s_3, \ldots \}$ is infinite up to renaming and $\{ s_1, s_2, s_3, \ldots \} \subseteq T(\Sigma,\cV)_{\leq h}$.
Since $\Sigma$ is finite, $T(\Sigma,\cV)_{\leq h}$ is finite up to renaming, and thus $\{ s_1, s_2, s_3, \ldots \}$ is finite up to renaming.
This contradicts the fact that $\{ s_1, s_2, s_3, \ldots \}$ is infinite up to renaming.
\qed
\end{proof}
We denote the multiset extensions of $\SuccsimH$ and $\SuccH$ by $\SuccsimH^M$ and $\SuccH^M$, respectively.
The following proposition follows Proposition~\ref{prop:well-foundedness-of-SuccH} and~\cite[Lemma~2.5.4]{BN98}.
\begin{restatable}{proposition}{PropWellFoundednessOfSuccHM}
\label{prop:well-foundedness-of-SuccHM}
For any $h \in \mathbb{N}$, $\SuccH^M$ is a well-founded order over finite sets.
\end{restatable}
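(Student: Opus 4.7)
The plan is to reduce the claim to the classical result \cite[Lemma~2.5.4]{BN98} stating that the multiset extension of any well-founded strict order is itself a well-founded strict order, feeding Proposition~\ref{prop:well-foundedness-of-SuccH} as the input.

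First I would verify the hypotheses of the multiset-extension lemma for $\SuccH$ on the set $T(\Sigma,\cV)_{\leq h}$. Well-foundedness is supplied directly by Proposition~\ref{prop:well-foundedness-of-SuccH}. The relation $\SuccH$ is irreflexive since $s \SuccH s$ would require $s < s$, which never holds for the strict part of the ``more general than'' relation. Transitivity of $\SuccH$ follows from transitivity of $<$ together with the trivial monotonicity of the height bound: if $s \SuccH t$ and $t \SuccH u$, then $s < t < u$ and all three have height at most $h$, so $s < u$ and the bound is preserved, yielding $s \SuccH u$. Hence $\SuccH$ is a well-founded strict partial order on $T(\Sigma,\cV)_{\leq h}$.

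Second I would invoke the multiset-extension lemma to conclude that the multiset extension of $\SuccH$ is well-founded on finite multisets over $T(\Sigma,\cV)_{\leq h}$. Because every finite set can be regarded as a finite multiset in which each element has multiplicity one, and $\SuccH^M$ as used in the paper coincides with the standard multiset extension under this embedding, well-foundedness transfers from finite multisets to finite sets. There is no real obstacle here; the proposition is essentially a direct citation, and the only care required is a cosmetic reconciliation of the set-versus-multiset formulation, together with the routine check that $\SuccH$ is a strict partial order rather than merely the strict part of a quasi-order.
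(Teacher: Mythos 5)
Your proposal is correct and follows essentially the same route as the paper, which likewise derives the claim directly from Proposition~\ref{prop:well-foundedness-of-SuccH} together with \cite[Lemma~2.5.4]{BN98} on multiset extensions of well-founded orders. The extra checks you perform (that $\SuccH$ is a strict partial order, and the identification of finite sets with multiplicity-one multisets) are routine and only make explicit what the paper leaves implicit.
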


The following proposition shows the correctness of $\ominus$ over sets of linear patterns.

\begin{restatable}[correctness of $\ominus$ over sets of linear patterns]{proposition}{PropSetOminusProperties}
\label{prop:set-ominus-properties}
Let 
$P,Q$ be finite sets of linear patterns.
Then, all of the following statements hold:
\begin{enumerate}
\renewcommand{\labelenumi}{(\arabic{enumi})}
    \item if $P$ is pairwise $\cC$-non-overlapping, $P = P'\uplus \{s\}$, $Q=Q' \uplus \{t\}$, and
    $s \ominus t \ne \{s\}$,
    then 
        $P'$ and $s \ominus t$ are $\cC$-non-overlapping,
    \item if there are no patterns $s \in P$ and $t \in Q$ such that
    $s \ominus t \ne \{s\}$,
    then $\GInst[\cC]{P \ominus Q} = \GInst[\cC]{P}$,
        and
    \item if $P$ is pairwise $\cC$-non-overlapping, then $\GInst[\cC]{P \ominus Q} = \GInst[\cC]{P} \setminus \GInst[\cC]{Q}$.
\end{enumerate}
\end{restatable}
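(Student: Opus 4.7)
The plan is to prove the three statements in order, with~(3) using well-founded induction built on the order $\SuccH$ introduced in the sketch of Theorem~\ref{thm:constrained-set-ominus-properties}; Proposition~\ref{prop:pattern-ominus-properties} is the main workhorse. For~(1), I would apply Proposition~\ref{prop:pattern-ominus-properties}(4) to get $\GInst[\cC]{s \ominus t} \subseteq \GInst[\cC]{s}$, and combine it with pairwise $\cC$-non-overlappingness of $P = P' \uplus \{s\}$, which forces $\GInst[\cC]{P'} \cap \GInst[\cC]{s} = \emptyset$; the desired $\GInst[\cC]{P'} \cap \GInst[\cC]{s \ominus t} = \emptyset$ follows at once. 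For~(2), the hypothesis rules out the first branch of the recursive definition of $\ominus$ over sets, so $P \ominus Q = P$ by definition and $\GInst[\cC]{P \ominus Q} = \GInst[\cC]{P}$ is immediate.

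For~(3), I would fix $h = \max\{\Height{u} \mid u \in P \cup Q\}$; by Proposition~\ref{prop:pattern-ominus-properties}(3) every term appearing in any recursive call stays in $T(\Sigma,\cV)_{\leq h}$, so $\SuccH$ is well-founded there (Proposition~\ref{prop:well-foundedness-of-SuccH}) and hence so is its multiset extension $\SuccH^M$ (Proposition~\ref{prop:well-foundedness-of-SuccHM}). Induction is then on $P \uplus Q$ under $\SuccH^M$. In the base case no $(s,t) \in P \times Q$ satisfies $s \ominus t \ne \{s\}$: by Definition~\ref{def:difference-of-unconstrained-terms} each such pair is non-unifiable, any common ground $\cC$-instance would yield a unifier, so $\GInst[\cC]{P} \cap \GInst[\cC]{Q} = \emptyset$, and combining this with~(2) gives $\GInst[\cC]{P \ominus Q} = \GInst[\cC]{P} = \GInst[\cC]{P} \setminus \GInst[\cC]{Q}$. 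In the step case, I would pick $s \in P, t \in Q$ with $s \ominus t \ne \{s\}$; by Proposition~\ref{prop:pattern-ominus-properties}(2) every $u \in s \ominus t$ satisfies $s \SuccH u$, and symmetrically for $t \ominus s$ (since unifiability is symmetric we also have $t \ominus s \ne \{t\}$), giving the strict multiset decrease required for the IH. Statement~(1) together with Proposition~\ref{prop:pattern-ominus-properties} ensures $P' \cup (s \ominus t)$ remains pairwise $\cC$-non-overlapping, the IH yields $\GInst[\cC]{(P' \cup (s \ominus t)) \ominus (Q' \cup (t \ominus s))} = \GInst[\cC]{P' \cup (s \ominus t)} \setminus \GInst[\cC]{Q' \cup (t \ominus s)}$, and a short set-algebra calculation---writing $A = \GInst[\cC]{P'}$, $B = \GInst[\cC]{s}$, $C = \GInst[\cC]{Q'}$, $D = \GInst[\cC]{t}$ and using only $A \cap B = \emptyset$---verifies the identity $(A \cup (B \setminus D)) \setminus (C \cup (D \setminus B)) = (A \cup B) \setminus (C \cup D)$, closing the induction.

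The main obstacle is securing the well-founded order: without the height bound $h$ the ``strictly more general than'' relation is not well-founded on an arbitrary signature, so the $\SuccH^M$-descent on the recursion depends crucially on Proposition~\ref{prop:pattern-ominus-properties}(3) to keep every term that ever appears inside $T(\Sigma,\cV)_{\leq h}$. Compared with the constrained counterpart sketched for Theorem~\ref{thm:constrained-set-ominus-properties}, the unconstrained version avoids the extra lexicographic component with natural numbers, which is only needed there to accommodate constraints $\phi\sigma \land \neg\psi'\sigma$ that shrink the solution set without making the term strictly more specific.
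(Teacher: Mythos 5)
Your proof is correct and follows essentially the same route as the paper's: statement~(1) from $\GInst[\cC]{s\ominus t}\subseteq\GInst[\cC]{s}$ plus disjointness, statement~(2) by definition, and statement~(3) by well-founded induction on the multiset order $\SuccH^M$ over $T(\Sigma,\cV)_{\leq h}$, with the base case via non-unifiability and the step case via the induction hypothesis and the same set-algebra identity. The only (immaterial) deviations are that you measure $P\uplus Q$ rather than $P$ alone, and that you treat the divisor side with a plain union instead of invoking statement~(1) for $Q'$ and $t\ominus s$ --- which is actually cleaner, since $Q$ is not assumed pairwise $\cC$-non-overlapping.
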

\begin{proof}
We prove the statements in order.
\begin{enumerate}
\renewcommand{\labelenumi}{(\arabic{enumi})}
    \item Suppose that $P$ is pairwise $\cC$-non-overlapping.
    Let $P = P'\uplus \{s\}$, $Q=Q' \uplus \{t\}$ such that 
    $s \ominus t \ne \{s\}$.
    It suffices to show that $P'$ and $\{u\}$ are $\cC$-non-overlapping for any $u \in (s \ominus t)$.
    By definition, $u$ is a $\cC$-instance of $s$, and thus $\GInst[\cC]{u} \subseteq \GInst[\cC]{s}$.
    Since $P$ is $\cC$-non-overlapping, we have that $\GInst[\cC]{P'} \cap \GInst[\cC]{s} = \emptyset$, and thus $\GInst[\cC]{P'} \cap \GInst[\cC]{u} = \emptyset$.
    Therefore, $P'$ and $\{u\}$ are $\cC$-non-overlapping.

    \item Trivial by definition.

    \item Suppose that $P$ is pairwise $\cC$-non-overlapping.
    Let $h \geq \max\{ \Height{u} \mid  u \in P \cup Q\}$.
    We prove this statement by well-founded induction on $P$ with $\SuccH^M$.
    We make a case analysis depending on whether there are patterns $s \in P$ and $t \in Q$ such that 
    $s \ominus t \ne \{s\}$.
    \begin{itemize}
        \item Case where there are no patterns $s \in P$ and $t \in Q$ such that 
        $s \ominus t \ne \{s\}$.
        Then, by definition, we have that $P \ominus Q = P$ and hence $\GInst[\cC]{P \ominus Q} = \GInst[\cC]{P}$.
        We proceed by contradiction.
        Assume that $\GInst[\cC]{P} \ne \GInst[\cC]{P} \setminus \GInst[\cC]{Q}$.
        It is clear that $\GInst[\cC]{P} \supseteq \GInst[\cC]{P} \setminus \GInst[\cC]{Q}$.
        Thus, by assumption, we have that $\GInst[\cC]{P} \not\subseteq \GInst[\cC]{P} \setminus \GInst[\cC]{Q}$.
        Let $u \in \GInst[\cC]{P} \cap \GInst[\cC]{Q}$.
        Then, there exists terms $s \in P$ and $t \in Q$ such that $u \in \GInst[\cC]{s} \cap \GInst[\cC]{t}$.
        Thus, $s,t$ are not $\cC$-non-overlapping,
        i.e., 
        $s,t'$ are unifiable for some renamed variant $t'$ of $t$ with $\Var(s)\cap\Var(t')=\emptyset$, and thus $s \ominus t \ne \{s\}$.
        This contradicts the fact that there are no patterns $s \in P$ and $t \in Q$ such that
        $s \ominus t \ne \{s\}$.

        \item Case where there are patterns $s \in P$ and $t \in Q$ such that 
        $s \ominus t \ne \{s\}$.
        Let $P = P' \uplus \{s\}$ and $Q = Q' \uplus \{t\}$ such that
        $s \ominus t \ne \{s\}$.
        Then, we have that $P \ominus Q = (P' \uplus (s \ominus t)) \ominus (Q' \cup (t \ominus s))$.
        It follows from Proposition~\ref{prop:pattern-ominus-properties} that 
        $\{s\} \SuccH^M (s \ominus t)$ and hence $(P' \uplus \{s\}) \SuccH^M (P' \uplus (s \ominus t))$.
        By the induction hypothesis, we have that
        $\GInst[\cC]{(P' \uplus (s \ominus t)) \ominus (Q' \cup (t \ominus s))} = 
        \GInst[\cC]{P' \uplus (s \ominus t)} \setminus \GInst[\cC]{Q' \cup (t \ominus s)}$.
        It follows from~(1) that $P'$ and $s \ominus t$ are $\cC$-non-overlapping and $Q'$ and $t \ominus s$ are $\cC$-non-overlapping.
        Thus, by definition, we have that $\GInst[\cC]{P' \uplus (s \ominus t)} = \GInst[\cC]{P'} \uplus \GInst[\cC]{s \ominus t}$
        and $\GInst[\cC]{Q' \uplus (t \ominus s)} = \GInst[\cC]{Q'} \uplus \GInst[\cC]{t \ominus s}$.
        It follows from Proposition~\ref{prop:pattern-ominus-properties}~(4) that
        $\GInst[\cC]{s \ominus t} = \GInst[\cC]{s} \setminus \GInst[\cC]{t}$
        and
        $\GInst[\cC]{t \ominus s} = \GInst[\cC]{t} \setminus \GInst[\cC]{s}$.
        Thus, we have that 
        $\GInst[\cC]{(P' \uplus (s \ominus t)) \ominus (Q' \cup (t \ominus s))} 
        = 
        (\GInst[\cC]{P'} \uplus \GInst[\cC]{s \ominus t}) \setminus (\GInst[\cC]{Q'} \uplus \GInst[\cC]{t \ominus s})
        =
        (\GInst[\cC]{P'} \uplus (\GInst[\cC]{s} \setminus \GInst[\cC]{t})) \setminus (\GInst[\cC]{Q'} \uplus (\GInst[\cC]{t} \setminus \GInst[\cC]{s}))
        =
        (\GInst[\cC]{P'} \uplus \GInst[\cC]{s}) \setminus (\GInst[\cC]{Q'} \cup \GInst[\cC]{t})
        $.
        Therefore, we have that 
        $\GInst[\cC]{P \ominus Q} 
        =
        \GInst[\cC]{(P' \uplus (s \ominus t)) \ominus (Q' \cup (t \ominus s))} 
        =
        (\GInst[\cC]{P'} \uplus \GInst[\cC]{s}) \setminus (\GInst[\cC]{Q'} \cup \GInst[\cC]{t})
        =
        \GInst[\cC]{P} \setminus \GInst[\cC]{Q}
        $.
\qed
    \end{itemize}
\end{enumerate}
\end{proof}

Next, we consider the case of constrained patterns.

To prove the correctness of $\ominus$ over sets of unconstrained linear patterns, we prepared the well-founded order $\SuccsimH$.
Unfortunately, this does not work for the correctness of $\ominus$ over sets of constrained linear patterns.
In computing $\CTerm{s}{\phi} \ominus \CTerm{t}{\psi}$, the resulting set may contain $\CTerm{s\sigma}{\phi\sigma \land \neg\psi'\sigma}$ and we may have that $s = s\sigma$ (and thus $(\phi\sigma \land \neg\psi'\sigma) = (\phi \land \neg\psi')$.
On the other hand, since $\phi \land \psi'$ and $\neg\psi'$ are satisfiable, we have that
$\{ \theta \mid \Eval{\phi\theta} = \top \} \supset \{ \theta \mid \Eval{(\phi \land \neg\psi')\theta} = \top \}$.
However, the sets may be infinite, and thus the relation $\supset$ for the sets is not well-founded in general.
To overcome the problem, we consider the number of constrained terms $\CTerm{t}{\psi}$ 
such that $\CTerm{s}{\phi} \ominus \CTerm{t}{\psi} \ne \{ \CTerm{s}{\phi}\}$.
Let $\SuccsimHN$ be the order of lexicographic products of terms and natural numbers compared by $\SuccsimH$ and $\geq_{\mathbb{N}}$.
Let ${\SuccHN} = (\SuccsimHN \setminus \PrecsimHN)$.
Then, it is clear that $\SuccHN$ is well-founded.
We define the weight $w$ for pairs of sets of constrained linear patterns as follows:
\[
w(P,Q) = \{ (s,n) \mid \CTerm{s}{\phi} \in P, \, n = |\{ \CTerm{t}{\psi} \in Q \mid 
\CTerm{s}{\phi} \ominus \CTerm{t}{\phi} \ne \{\CTerm{s}{\phi}\}
\}| \,\}
\]
For the recursive call of $\ominus$ with $P \ominus Q = P' \ominus Q'$, we have that $w(P,Q) \SuccHN w(P',Q')$.
The following proposition shows the correctness of $\ominus$ over sets of constrained linear patterns.

\begin{restatable}{proposition}{PropConstrainedSetOminusPropertiesX}
\label{prop:constrained-set-ominus-propertiesX}
Let 
$P,Q$ be finite sets of value-free constrained linear patterns.
Then, 
all of the following hold:
\begin{enumerate}
\renewcommand{\labelenumi}{(\arabic{enumi})}
    \item if $P$ is pairwise $\cC$-non-overlapping, $P = P'\uplus \{\CTerm{s}{\phi}\}$, $Q=Q' \uplus \{\CTerm{t}{\psi}\}$, and
    $\CTerm{s}{\phi} \ominus \CTerm{t}{\psi} \ne \{\CTerm{s}{\phi}\}$,
    then 
        $P'$ and $(\CTerm{s}{\phi} \ominus \CTerm{t}{\psi})$ are $\cC$-non-overlapping,
    \item if there are no constrained patterns $\CTerm{s}{\phi} \in P$ and $\CTerm{t}{\psi} \in Q$ such that 
    $\CTerm{s}{\phi} \ominus \CTerm{t}{\psi} \ne \{\CTerm{s}{\phi}\}$,
    then $\GInst[\cC]{P \ominus Q} = \GInst[\cC]{P}$,
        and
    \item if $P$ is pairwise $\cC$-non-overlapping, then $\GInst[\cC]{P \ominus Q} = \GInst[\cC]{P} \setminus \GInst[\cC]{Q}$.
\end{enumerate}
\end{restatable}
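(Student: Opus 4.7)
\medskip

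\noindent\textbf{Proof proposal.} The plan is to mirror the structure of the unconstrained proof (Proposition~\ref{prop:set-ominus-properties}), treating parts~(1) and~(2) as short direct arguments and reserving the bulk of the work for part~(3), which is proved by well-founded induction on the weight $w(P,Q)$ using the order $\SuccHN^M$ (the multiset extension of $\SuccHN$) introduced in the sketch of Theorem~\ref{thm:constrained-set-ominus-properties}. Throughout, I would use Theorem~\ref{thm:constrained-pattern-ominus-propertiesX} as the single-pattern backbone, and the fact that since all constrained patterns under consideration are value-free and no constructor has a theory sort, $\sigma|_{\Var(\phi,\psi)}$ is always a variable-renaming on the logical variables.

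For part~(1), I would argue that every $\CTerm{u}{\varphi} \in \CTerm{s}{\phi} \ominus \CTerm{t}{\psi}$ satisfies $\GInst[\cC]{\CTerm{u}{\varphi}} \subseteq \GInst[\cC]{\CTerm{s}{\phi}}$ by Theorem~\ref{thm:constrained-pattern-ominus-propertiesX}~(5) (the difference is contained in the dividend). Since $P = P' \uplus \{\CTerm{s}{\phi}\}$ is pairwise $\cC$-non-overlapping, $\GInst[\cC]{P'} \cap \GInst[\cC]{\CTerm{s}{\phi}} = \emptyset$, so $P'$ and $\CTerm{s}{\phi} \ominus \CTerm{t}{\psi}$ are $\cC$-non-overlapping. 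Part~(2) is immediate by definition: under the stated hypothesis the ``otherwise'' branch fires and $P \ominus Q = P$.

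For part~(3), assume $P$ is pairwise $\cC$-non-overlapping and proceed by well-founded induction on $w(P,Q)$ with respect to $\SuccHN^M$. If no pair triggers a non-trivial difference, apply part~(2) and combine it with the fact that, under pairwise $\cC$-non-overlappingness and the dual characterization of non-overlappingness and unifiability noted in the paper, $\GInst[\cC]{P} \cap \GInst[\cC]{Q} = \emptyset$ (so $\GInst[\cC]{P} = \GInst[\cC]{P} \setminus \GInst[\cC]{Q}$). Otherwise, pick $\CTerm{s}{\phi} \in P$ and $\CTerm{t}{\psi} \in Q$ with $\CTerm{s}{\phi} \ominus \CTerm{t}{\psi} \ne \{\CTerm{s}{\phi}\}$ and write $P \ominus Q = (P' \cup (\CTerm{s}{\phi} \ominus \CTerm{t}{\psi})) \ominus (Q' \cup (\CTerm{t}{\psi} \ominus \CTerm{s}{\phi}))$. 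Using part~(1), the argument of the outer $\ominus$ on the left is pairwise $\cC$-non-overlapping, so the induction hypothesis applies and yields $\GInst[\cC]{(P' \cup (\CTerm{s}{\phi} \ominus \CTerm{t}{\psi}))} \setminus \GInst[\cC]{(Q' \cup (\CTerm{t}{\psi} \ominus \CTerm{s}{\phi}))}$. Expanding this using Theorem~\ref{thm:constrained-pattern-ominus-propertiesX}~(5) applied twice gives $(\GInst[\cC]{P'} \cup \GInst[\cC]{\CTerm{s}{\phi}}) \setminus (\GInst[\cC]{Q'} \cup \GInst[\cC]{\CTerm{t}{\psi}}) = \GInst[\cC]{P} \setminus \GInst[\cC]{Q}$, via the same set-theoretic manipulation as in the unconstrained proof.

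The main obstacle is verifying the weight decrease $w(P,Q) \SuccHN^M w(P' \cup (\CTerm{s}{\phi} \ominus \CTerm{t}{\psi}),\, Q' \cup (\CTerm{t}{\psi} \ominus \CTerm{s}{\phi}))$. By Theorem~\ref{thm:constrained-pattern-ominus-propertiesX}~(3),~(4), every new constrained term either strictly increases $s$ under $<$ (contributing a strictly smaller term component to $\SuccH$ within height bound $h$, which is preserved by Theorem~\ref{thm:constrained-pattern-ominus-propertiesX}~(4)), or \emph{equals} $\CTerm{s\sigma}{\phi\sigma \land \neg\psi'\sigma}$ with $s\sigma = s$ up to renaming. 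The equality case is precisely where a naive term-only order fails, and it is the reason we lexicographically combine with the counter $n$: on the $Q$ side, $\CTerm{t}{\psi}$ is replaced by $\CTerm{t}{\psi} \ominus \CTerm{s}{\phi}$, and the resulting constrained patterns are, by construction, no longer unifiable with $\CTerm{s\sigma}{\phi\sigma \land \neg\psi'\sigma}$ in a non-trivial way, so the count attached to the representative $s\sigma$ strictly drops. A careful bookkeeping argument, handling (i) terms strictly above $s$ in $P'$ whose counters can only weakly increase because the new $Q$ elements are $\cC$-instances of $\CTerm{t}{\psi}$ and hence of $\CTerm{s}{\phi}$, and (ii) the pair associated with the distinguished $\CTerm{s}{\phi}$ whose counter strictly decreases, will show that the multiset $w$ strictly decreases in $\SuccHN^M$, closing the induction.
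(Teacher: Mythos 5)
Your proposal follows essentially the same route as the paper: parts~(1) and~(2) are handled exactly as in the unconstrained Proposition (containment of the difference in the dividend via Theorem~\ref{thm:constrained-pattern-ominus-propertiesX}~(5), resp.\ the ``otherwise'' branch of the definition), and part~(3) is proved by well-founded induction on the weight $w$ with the order built from $\SuccsimH$ and $\geq_{\mathbb{N}}$, using Theorem~\ref{thm:constrained-pattern-ominus-propertiesX}~(5) twice and the same set-theoretic identity as in the unconstrained case. The paper's own proof is no more detailed than this; it simply asserts the weight decrease and defers the rest to the unconstrained argument.

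The one place where you go beyond the paper --- the explicit bookkeeping for $w(P,Q) \SuccHN w(P',Q')$ --- is also the one place where your argument does not close. Your cases~(i) and~(ii) cover the new elements of $\CTerm{s}{\phi} \ominus \CTerm{t}{\psi}$ (where the term component strictly decreases under $\SuccH$, so the counter is irrelevant) and the $\CTerm{s\sigma}{\phi\sigma \land \neg\psi'\sigma}$ representative (whose counter drops). But you do not treat the \emph{untouched} elements $\CTerm{u}{\varphi} \in P'$: their term component is unchanged, while the single divisor $\CTerm{t}{\psi}$ is replaced in the divisor set by possibly several elements of $\CTerm{t}{\psi} \ominus \CTerm{s}{\phi}$, more than one of which may overlap $\CTerm{u}{\varphi}$. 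The counter attached to $u$ can then strictly \emph{increase} (e.g.\ with $\cC \supseteq \{\symb{a},\symb{b},\symb{c}\}$, $P = \{\symb{f}(\symb{a},\symb{a}), \symb{f}(\symb{b},z)\}$, $Q = \{\symb{f}(x,y)\}$ and the choice $\CTerm{s}{\phi} = \symb{f}(\symb{a},\symb{a})$, the counter of $\symb{f}(\symb{b},z)$ goes from $1$ to $3$), and in the multiset extension of the lexicographic order the pair $(u,n')$ with $n' > n$ is dominated neither by $(u,n)$ nor by $(s,n_s)$. So ``counters can only weakly increase'' is not enough here; this is precisely the point that must be repaired (e.g.\ by a different weight or by restricting the non-deterministic choice), and the paper's proof asserts the decrease without addressing it either. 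Apart from this shared soft spot, your decomposition, lemma usage, and induction scheme coincide with the paper's.
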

\begin{proof}
The first two statements are straightforward extensions of the corresponding statements in Proposition~\ref{prop:set-ominus-properties} to constrained linear patterns.
We only show the third statement.
We first show that if $P \ominus Q = P' \ominus Q'$, then $w(P,Q) \SuccHN w(P',Q')$.
Let $P = P'' \uplus \{\CTerm{s}{\phi}\}$ and
$Q = Q'' \uplus \{\CTerm{t}{\psi}\}$ such that 
$\CTerm{s}{\phi} \ominus \CTerm{t}{\psi} \ne \{\CTerm{s}{\phi}\}$.
By definition, we let
$P' = (P'' \cup (\CTerm{s}{\phi} \ominus \CTerm{t}{\psi}))
=
\{\CTerm{s'}{\phi\sigma}\mid s'\in\Copattern[\cC]{s}{\sigma}\}\cup\{\CTerm{s\sigma}{\phi\sigma\land\lnot\psi\sigma}\mid\mbox{$(\phi\sigma\land\lnot\psi\sigma)$ is satisfiable}\}$
and 
$Q' = (Q'' \cup (\CTerm{t}{\psi} \ominus \CTerm{s}{\phi}))
=
\{\CTerm{t'}{\psi\sigma}\mid t'\in\Copattern[\cC]{s}{\sigma}\}\cup\{\CTerm{t\sigma}{\psi\sigma\land\lnot\phi\sigma}\mid\mbox{$(\psi\sigma\land\lnot\phi\sigma)$ is satisfiable}\}$.
Since $P',Q'$ are $\cC$-non-overlapping, we have that 
$w(P,Q) \SuccHN w(P',Q')$.
Then, as for Proposition~\ref{prop:set-ominus-properties}~(3), the third statement can be proved by induction on $\SuccHN$.
\qed
\end{proof}

Theorem~\ref{thm:constrained-set-ominus-properties} is an immediate consequence of Proposition~\ref{prop:constrained-set-ominus-propertiesX}.




\section{Proof of Theorem~\ref{thm:decidability-of-quasi-reducibility-of-LL-LCTRSs}}

\ThmDecidabilityOfQuasiReducibilityOfLLLCTRSs*

Let $\cR$ be a finite left-linear LCTRS 
such that $\Sigterm$ is finite and there is no constructor $c: \iota_1 \times \cdots \times \to \iota \in \cC_\cR$ with $\iota \in \cStheory$.
Then, $\cR$ is quasi-reducible if and only if $\CopatternSet{\{ \CTerm{\ell}{\phi} \mid \ell \to r ~[\phi] \in \cR, ~ \mbox{$\ell$ is a pattern}\}} = \emptyset$.
Thus, quasi-reducibility is decidable for such LCTRSs.

\begin{proof}
Let $Q= \{ \CTerm{\ell}{\phi} \mid \ell \to r ~[\phi] \in \cR, ~ \mbox{$\ell$ is a pattern}\}$.

We first show that 
if $\cR$ is quasi-reducible, then 
$\CopatternSet{Q} = \emptyset$.
We proceed by contradiction.
Assume that $\cR$ is quasi-reducible
and
$\CopatternSet{Q} \ne \emptyset$.
Then, there exists a constrained pattern $\CTerm{s}{\phi} \in \CopatternSet[\cC_\cR]{Q}$.
It follows from Corollary~\ref{cor:correctness-of-CopatternSetX} that
$\GInst[\cC_\cR]{Q} = \{ f(t_1,\ldots,t_n) \mid f \in \cD, ~ t_1,\ldots,t_n \in T(\cC_\cR) \}
\setminus \GInst[\cC_\cR]{Q}$, and thus
$\GInst[\cC_\cR]{\CTerm{s}{\phi}} \cap \GInst[\cC_\cR]{Q} = \emptyset$.
Thus, $\GInst[\cC_\cR]{\CTerm{s}{\phi}} \cap \NF_{\cR} = \emptyset$ and hence there exists an irreducible ground pattern.
This contradicts the assumption that $\cR$ is quasi-reducible.

Next, we show that 
if $\CopatternSet{Q} = \emptyset$, then $\cR$ is quasi-reducible.
It follows from Corollary~\ref{cor:correctness-of-CopatternSetX} that
$\GInst[\cC_\cR]{Q} = \{ f(t_1,\ldots,t_n) \mid f \in \cD, ~ t_1,\ldots,t_n \in T(\cC_\cR) \}
\setminus \GInst[\cC_\cR]{Q}$, and thus
$\GInst[\cC_\cR]{Q} \cap \NF_{\cR} = \emptyset$.
Hence, there is no ground irreducible pattern.
Therefore, $\cR$ is quasi-reducible.

Finally, we show that quasi-reducibility is decidable for left-linear LCTRSs 
such that $\Sigterm$ is finite and there is no constructor $c: \iota_1 \times \cdots \times \to \iota \in \cC$ with $\iota \in \cStheory$.
It suffices to show that $\CopatternSet{Q}$ is computable.
Since $\Sigterm$ is finite and built-in theories are decidable, 
$\Cocterm{\Cdot}$ over $\cC$-terms is computable and satisfiability of constraints is decidable.
Thus, $\ominus$ over constrained linear patterns is computable.
By the well-founded order $\SuccHN$, in computing $\ominus$ over sets of constrained linear patterns, the recursive call of $\ominus$ is terminating.
Therefore, $\CopatternSet{Q}$ is computable. 
\qed
\end{proof}

\end{document}